\definecolor{light-gray}{gray}{0.6}
\tikzstyle{propagator}=[decorate,decoration={snake,amplitude=0.8mm}]
\tikzstyle{smallpropagator}=[decorate,decoration={snake,segment length=3mm,amplitude=0.5mm}]
\tikzstyle{firstdash}=[dashed,line cap=round, dash pattern=on 2pt off 1pt]
\tikzstyle{seconddash}=[dashed,line cap=round, dash pattern=on 0.5pt off 1pt]
\newcommand{\drawWLD}[2]{

\pgfmathsetmacro{\n}{#1}
\pgfmathsetmacro{\radius}{#2}
\pgfmathsetmacro{\angle}{360/\n}
\draw (0,0) circle (\radius);
    \foreach \i in {1,2,...,\n} {
      \draw (\angle*\i:\radius) node {$\bullet$};
    }

}
\newcommand{\drawpolypart}[2]{
\pgfmathsetmacro{\n}{#1}
\pgfmathsetmacro{\radius}{#2}
\pgfmathsetmacro{\angle}{360/\n}
    \foreach \i in {1,2,...,\n} {
      \draw (\angle*\i+ \angle/2:\radius) node {$\bullet$};
     \pgfmathsetmacro{\x}{\angle*\i - \angle/2}
      \pgfmathsetmacro{\concave}{((\n-1.5)/\n)}
      \draw (\x:\radius cm) .. controls (\angle *\i: \concave* \radius cm) .. (\x + \angle:\radius cm);
    }

}
\newcommand{\drawprop}[4]{
\pgfmathsetmacro{\r}{#1}
\pgfmathsetmacro{\bumpr}{#2}
\pgfmathsetmacro{\s}{#3}
\pgfmathsetmacro{\bumps}{#4}
\pgfmathsetmacro{\perturbe}{\angle/\n}

\begin{scope}
\draw[propagator] (\angle*\r + \angle/2 + \bumpr*\perturbe:\radius) -- (\angle*\s + \angle/2 + \bumps*\perturbe:\radius);
\end{scope}
}
\newcommand{\drawlabeledprop}[5]{
\pgfmathsetmacro{\r}{#1}
\pgfmathsetmacro{\bumpr}{#2}
\pgfmathsetmacro{\s}{#3}
\pgfmathsetmacro{\bumps}{#4}
\pgfmathsetmacro{\perturbe}{\angle/\n}

\begin{scope}
\draw[propagator] (\angle*\r + \angle/2 + \bumpr*\perturbe:\radius) -- (\angle*\s + \angle/2 + \bumps*\perturbe:\radius) node[midway, below] {#5};
\end{scope}
}
\newcommand{\drawchord}[2]{
\pgfmathsetmacro{\r}{#1}
\pgfmathsetmacro{\s}{#2}

\begin{scope}
\draw (\angle*\r + \angle/2:\radius) -- (\angle*\s + \angle/2:\radius);
\end{scope}
}
\newcommand{\drawnumbers}{
  \foreach \i in {1,2,...,\n} {
  \pgfmathsetmacro{\x}{\angle*\i}
  \draw (\x:\radius*1.15) node {\footnotesize \i};
}
}
\newcommand{\drawnumbersshift}{
  \foreach \i in {1,2,...,\n} {
  \pgfmathsetmacro{\x}{\angle*\i + \angle/2}
  \draw (\x:\radius*1.15) node {\footnotesize \i};
}
}
\newcommand{\R}{\mathbb{R}}
\newcommand{\Gr}{\mathbb{G}_{\R, \geq 0}}
\newcommand{\rk}{\textrm{rk} }
\def\ba #1\ea{\begin{align} #1 \end{align}}
\def\bas #1\eas{\begin{align*} #1 \end{align*}}
\def\bml #1\eml{\begin{multline} #1 \end{multline}}
\def\bmls #1\emls{\begin{multline*} #1 \end{multline*}}
\newcommand{\cP}{\mathcal{P}}
\newcommand{\cB}{\mathcal{B}}
\newcommand{\Prop}{\textrm{Prop}}
\newtheorem{thm}{Theorem}[section]
\newtheorem{lem}[thm]{Lemma}
\newtheorem{cor}[thm]{Corollary}
\newtheorem{prop}[thm]{Proposition}
\theoremstyle{remark}
\newtheorem{eg}[thm]{Example}
\theoremstyle{definition}
\newtheorem{dfn}[thm]{Definition}
\newtheorem{rmk}[thm]{Remark}
\title{Combinatorics of the geometry of Wilson loop diagrams I: equivalence classes via matroids and polytopes}
\author{Susama Agarwala\thanks{SA was partially supported by an Office of Naval Research grant.}, Si\^an Fryer, and Karen Yeats\thanks{KY is supported by an NSERC Discovery grant, by the Canada Research Chair program, and also, over some of the time this work was developed, by a Humboldt Fellowship from the Alexander von Humboldt foundation.}}
\begin{document}
\maketitle

\begin{abstract}
  Wilson loop diagrams are an important tool in studying scattering amplitudes of SYM $N=4$ theory and are known by previous work to be associated to positroids. We characterize the conditions under which two Wilson loop diagrams give the same positroid, prove that an important subclass of subdiagrams (exact subdiagrams) corresponds to uniform matroids, and enumerate the number of different Wilson loop diagrams that correspond to each positroid cell.  We also give a correspondence between those positroids which can arise from Wilson loop diagrams and directions in associahedra.
\end{abstract}

\section{Introduction}

This paper is the first in a two part series investigating the combinatorics and geometry underlying SYM $N=4$ theory. The series lays out several results about the relationship between Wilson loop diagrams and the positroid cells that give a CW-complex structure to the positive Grassmannian $\Gr(k,n)$. Understanding the precise nature of this relationship is crucial in the study of scattering amplitudes in SYM $N=4$, as described below.

In \cite{wilsonloop} Agarwala and Marin-Amat showed that every weakly admissible Wilson loop diagram can be identified with a type of matroid called a positroid, and observed that this mapping is neither one-to-one nor onto. Each such positroid corresponds to a positroid cell in the appropriate positive Grassmannian $\Gr(k,n)$ \cite{Postnikov}. In this paper we give a simple characterization of this lack of injectivity in terms of an equivalence relation on the Wilson loop diagrams, and enumerate the number of different Wilson loop diagrams that correspond to each positroid cell. We also study the lack of surjectivity, and show that the number of distinct positroid cells associated to Wilson loop diagrams of a given size is given by counting the faces up to parallelism in an associahedron. 

The second paper \cite{generalcombinatoricsII} in this series identifies precisely which positroid cell is associated to a given Wilson loop diagram, by giving an algorithm for passing from a Wilson loop diagram to the Grassmann necklace of the associated positroid cell. In this manner we give a direct path from a Wilson loop diagram to a positroid cell, circumventing the painstaking process of examining sets of bases. With this identification in hand, we also prove that any Wilson loop diagram with $k$ propagators corresponds to a $3k$ dimensional positroid cell, and characterize the volume forms that Wilson loop diagrams associate to each positroid cell. The results of the first paper are not prerequisite for the second paper in this series.

In recent years, there has been an active program researching the geometry and combinatorics underlying SYM $N=4$ theory \cite{wilsonloop, Arkani-Hamed:2013jha, Amplituhedronsquared, galashinlam,AmplituhedronDecomposition}. This body of work started with the observation that the on shell (tree level) amplitudes of this theory correspond to the volume of a subspace of a positive Grassmannian, called an Amplituhedron \cite{Arkani-Hamed:2013jha}. Since then, there has been significant work studying the structure of the Amplituhedron both geometrically and combinatorially, for instance \cite{arkani:2012nw, Arkani-Hamed:2013kca, galashinlam,AmplituhedronDecomposition}. The focus of much of this work has been on understanding the relationship between BCFW diagrams and positroid cells, which are the combinatorial objects from the physics of the system and the corresponding natural objects in the positive Grassmannian respectively.

Meanwhile, there is a separate body of work studying SYM $N=4$ theory from the point of view of Feynman integrals in twistor space \cite{Adamo:2011pv, Britto:2005fq, Cachazo:2004kj}. These integrals are calculated via Wilson loops. As noted above, Agarwala and Marin-Amat uncovered a connection between the Feynman integrals developed in this literature and positroids, defined  as matroids that are realized as an element of the positive Grassmannian $\Gr(k,n)$ \cite{wilsonloop}. In separate papers with Fryer and with Marcott, this work was extended to study these Feynman integrals in terms of positroid cells and the positive Grassmannian \cite{casestudy, non-orientable}. Others have tried different approaches to define a geometric space associated to these integrals in a manner similar to the Amplituhedron \cite{Amplituhedronsquared, HeslopStewart}. 

Both the Amplituhedron literature and the Wilson loop literature associate $n$ particle $N^kMHV$ interactions in SYM $N=4$ theory to volume forms on a set of positroid cells of $\Gr(k,n)$. This gives a CW complex of some submanifold in $\Gr(k,n)$, and understanding the geometry and topology of these spaces is the subject of ongoing research. 

However, the two approaches are not identical: for example, in the case of Wilson loop diagrams with $k=2$ and $n=6$, past work of Agarwala and Fryer shows that the manifold is not contractible \cite{casestudy}. More generally, it has been shown that the spaces defined by the Amplituhedron and by Wilson loop diagrams are not the same \cite{Amplituhedronsquared}. In fact, it is in some ways more natural to consider Deodhar cells of $\mathbb{G}(k, n+1)$ for Wilson loop diagrams.  In this case, one gets that the manifold represented by the diagrams is not orientable \cite{non-orientable}. 

In this paper we focus on the correspondence between Wilson loop diagrams and positroids, which was first defined in \cite{wilsonloop}. Our main results are as follows. We first show that two Wilson loop diagrams define the same positroid if and only if they only differ by exact subdiagrams (Theorem~\ref{same matroid iff equiv}). Equivalently, this says that the map from Wilson loop diagrams to positroid cells is one-to-one exactly on the subset of diagrams with no non-trivial exact subdiagrams. We count the total number of Wilson loop diagrams associated to a positroid cell (Corollary~\ref{number of equiv diagrams}). Finally, we show that enumerating the number of distinct positroid cells associated to Wilson loop diagrams is equivalent to enumerating the number of directions of an associahedron (Theorem~\ref{thm:count inequiv diagrams}).



\subsection{Roadmap}

Section~\ref{section overall background} summarizes the required background, with definitions for Wilson loop diagrams given in Subsection~\ref{section WLD background} and matroids in Subsection~\ref{sec matroid background}. Key definitions include {\em admissible} Wilson loop diagrams (Definition~\ref{admisdfn}), {\em exact} subdiagrams (Definition~\ref{def:exact diagram}), and an equivalence relation on admissible Wilson loop diagrams (Definition~\ref{equivdfn}).  The connection between matroids and Wilson loop diagrams is stated precisely in Theorem \ref{thm WLD defines matroid}.

The main result of Section~\ref{sec equiv} is Theorem~\ref{same matroid iff equiv}, which says that two Wilson loop diagrams define the same matroid (positroid) if and only if they are equivalent.  In Subsection~\ref{sec: polygon partitions} we relate exact subdiagrams of Wilson loop diagrams to triangulations of polygons, and hence show that every Wilson loop diagram can be uniquely decomposed as a collection of exact subdiagrams (Corollary~\ref{uniqueproppartitioncor}). In Subsection~\ref{sec: exact diagram matroidal props} we examine the matroidal properties of the exact subdiagrams, and prove that a subdiagram of a Wilson loop diagram defines a uniform matroid if and only if the subdiagram is exact (Theorem \ref{exactuniformthm}). This uniform matroid is in fact equal to the dual matroid restricted to the exact subdiagram (Remark \ref{remark exact dual restiction}).  Finally, Subsection~\ref{sec: matroids and equivalence} is devoted to the proof of Theorem~\ref{same matroid iff equiv}, along with an easy corollary which gives a formula for the size of each equivalence class (Corollary~\ref{number of equiv diagrams}).

Having identified the equivalence classes and their sizes, the next natural question is to identify the number of inequivalent diagrams under this equivalence relation.  The main result of Section~\ref{sec associahedron} is Theorem \ref{thm:count inequiv diagrams}, which shows that the set of admissible Wilson loop diagrams (up to equivalence) is in bijection with the set of faces in the associahedron (up to parallelism).  Towards this result we go over some relevant background on polytopes in Subsection~\ref{sec polytope background}, and then prove the main result in Subsection~\ref{sec associahedron results} in the form of two propositions: Proposition~\ref{prop easy way} and Proposition~\ref{prop hard way}.

\section{Background}\label{section overall background}

This section gives the relevant physical and combinatorial background for this paper. Section \ref{section WLD background} defines Wilson loop diagrams as combinatorial objects, and presents relevant definitions and results from previous work. Section \ref{sec matroid background} presents some background on matroids and some existing results relating Wilson loop diagrams to matroids. 

\subsection{Wilson Loop diagrams}\label{section WLD background} 


\begin{dfn}\label{WLdfn}
A {\em Wilson loop diagram} is given by the following data: a cyclically ordered set $V$, along with a choice of first vertex (labeled $1$), and a set $\cP$ of $k$ pairs of elements of $V$ called {\em propagators}, written $\{p_r = (i_r, j_r)\}_{r=1}^k$. We write $W = (\cP,V)$ to denote the Wilson loop diagram. \end{dfn}

Wherever possible, we drop the subscripts on propagators and their identifying vertices. Propagators are undirected, so $p = (i,j) = (j,i)$. As in previous work, we follow the convention of writing $p = (i,j)$ with  $i +1 \leq j$ relative to the first vertex. 

A Wilson loop diagram can also be represented visually, which provides useful intuition for many of the results that follow. We depict the diagram $(\cP,V)$ as a circle with marked points, called {\em vertices}.  These vertices are labeled by $V$ (preserving the cyclic ordering). The arcs between consecutive vertices are called {\em edges}. There are also $k$ wavy lines in the interior of the diagram, each connecting a pair of edges.  These depict the propagators. Specifically, a propagator $p =(i,j)$ has one endpoint on the edge joining vertices $i$ and $i+1$ and another endpoint on the edge joining $j$ and $j+1$, where $i+1$ and $j+1$ denote the successor of $i$ and $j$ respectively in the cyclic order on $V$. 

To simplify language, we let the edges inherit a cyclic ordering from the vertices:

\begin{dfn}\label{def:edges}
The $i^{th}$ edge of $W$ is the arc that lies between the vertices $i$ and $i+1$.
\end{dfn}

Thus we may speak of a propagator $p = (i, j)$ as being supported by the $i$th and $j$th edges.

Note that the marked circle captures the cyclic ordering on $V$, and the choice of a first vertex gives it a compatible linear order. Both the cyclic and the linear order become the correct perspective at various points in this paper. 

Often we take $V$ to be the cyclically ordered set of integers $\{1, \ldots, n\}$, which we denote by $[n]$. In this case we write $W = (\cP, [n])$, and all indices are considered modulo $n$.

\begin{eg}\label{eg of WLD} Let $V = [8]$ and $\cP = \{ (1,4), (2, 4), (5, 8)\}$. Then $W = (\cP,[8])$ is the Wilson loop diagram
\bas \begin{tikzpicture}[rotate=67.5,baseline=(current bounding box.east)]
	\begin{scope}
	\drawWLD{8}{1.5}
	\drawnumbers
	\drawprop{1}{0}{4}{0}
	\drawprop{2}{0}{4}{-1}
\drawprop{5}{0}{8}{0}
		\end{scope}
	\end{tikzpicture} \; . \eas 
\end{eg}

\medskip

We introduce some notation to speak of vertices supporting a propagator, and the set of propagators supported on a vertex set.

\begin{dfn} \label{VPropdfn}
Let $W = (\cP, [n])$. In this definition and in the sequel, for any $P \subseteq \cP$ or $U \subseteq [n]$, we denote $P^c = \cP \setminus P$ and $U^c = [n] \setminus U$.
\begin{enumerate}
\item For $p = (i,j) \in \cP$, define $V(p) = \{i, i+1, j, j+1\}$ to be the set of vertices supporting $p$. Similarly, for $P \subseteq \cP$ we define the set $V(P) = \cup_{p \in P} V(p)$ to be the vertex support of $P$.
\item For $U \subseteq [n]$, write $\Prop(U) = \{ p \in \cP  \ | \ V(p) \cap U \neq \emptyset \} $ to denote the set of propagators which are at least partially supported on $U$.
\item For $P \subseteq \cP$, define the {\em propagator flat}\footnote{This choice of terminology is justified in Lemma~\ref{lem facts about WLD matroids} below.} $F(P) = V(P^c)^c$ to be the set of vertices in $[n]$ that do not support any propagators outside of the set $P$. Here, as throughout, we are using the superscript $c$ for the set theoretic complement.
\item The set of vertices that do not support any propagators is denoted $F(\emptyset)$. Vertices in this set are called {\em non-supporting}.  
\end{enumerate}
\end{dfn}

\begin{rmk}\label{alt F(P) rmk}
An equivalent definition of $F(P)$ is \bas F(P)  = \big(V(P) \setminus V(P^c) \big) \cup \{\text{all non-supporting vertices in $W$}\} \;,\eas 
i.e. $F(P)$ can be thought of as the set of vertices that either only support propagators in $P$ or do not support any propagators at all. Furthermore, note that by construction we have $\Prop(F(P)) \subseteq P$.  This containment is strict if there is any propagator $p$ in $P$ with the property that every vertex in the support of $p$ also supports a propagator not in $P$. 
\end{rmk}

We now define the notion of an admissible diagram, which imposes certain density and non-crossing conditions on the configuration of the propagators. The motivation for these conditions comes from the link to SYM $N=4$ theory, where the admissible Wilson loop diagrams are exactly those that correspond to $N^kMHV$ diagrams; see for example \cite[section 2]{HeslopStewart}.

\begin{dfn}\label{admisdfn}
A Wilson loop diagram $W = (\cP,V)$ is {\em admissible} if it satisfies the following three conditions: \begin{enumerate}
\item $|V| \geq |\cP| + 4,$
\item for any non-empty set of propagators, $Q \subseteq \cP$, $Q \neq \emptyset$, one has $|V(Q)| \geq |Q| + 3$.\,
\item there does not exist a pair of propagators $(i_p,j_p),(i_q,j_q) \in \cP$ such that $i_p < i_q < j_p <j_q$.
\end{enumerate}
A Wilson loop diagram is called {\em weakly admissible} if the second and third conditions hold.
 \end{dfn}

The first condition restricts the overall density of propagators in the diagram, while the second imposes an upper bound on how densely the propagators can be fitted into any portion of the diagram. The third ensures that ensures that no propagators cross in the interior of the diagram. 

There are a few things to note here.  First, what we call weakly admissible here is called admissible in (\cite{wilsonloop}, see definition 1.11 and section 1.3).
Note also that if two propagators have the same pair of supporting edges, or if a propagator is supported on two adjacent edges, then the Wilson loop diagram is not admissible or even weakly admissible.
Further, if we take any admissible Wilson loop diagram and remove the non-supporting vertices, we obtain a weakly admissible Wilson loop diagram that may or may not be admissible itself.

For the remainder of the paper, we will restrict our attention to admissible and weakly admissible Wilson loop diagrams.

\begin{dfn} \label{subdiagramdfn}
Let $W = (\cP, [n])$ be an admissible Wilson loop diagram. The weakly admissible diagram $W' = (P,V)$ is a {\em subdiagram} of $W$, written $W' \subseteq W$, if \bas P \subseteq \cP \text{ and } \quad V(P) \subseteq V \subseteq [n]\;.\eas
\end{dfn}

There is one particular type of subdiagram that deserves special attention:

\begin{dfn}\label{def:exact diagram}
For $W$ a weakly admissible diagram, the subdiagram $(P, V(P))$ is {\em exact} if $|V(P)| = |P| + 3$.
\end{dfn}

Note that by Definition \ref{admisdfn}, any single propagator and its support defines an exact subdiagram. These are called \emph{trivial} exact subdiagrams.

We will see in Section~\ref{sec equiv} that the exact diagrams play an important role in classifying admissible Wilson loop diagrams.

\vspace{3pt}
\begin{eg}
The Wilson loop diagram $W$ in Example~\ref{eg of WLD} is an example of an admissible diagram. It has a single non-supporting vertex, namely the vertex labeled 7, so we have $F(\emptyset) = \{7\}$. For the propagator $p = (5, 8)$, we have $V(p) = \{5,6,8,1\}$ and $F(p) = \{6, 7,8\}$: the vertices 1 and 5 are excluded from $F(p)$ since they support other propagators as well, and 7 is included (despite not being in the support of $p$) since it supports no propagators at all. Finally, the subdiagram $(\{(2, 4), (1, 4)\}, \{1, 2, 3, 4, 5\})$ is an example of an exact subdiagram.  In fact it is the only non-trivial exact subdiagram in $W$.
\end{eg}

The exact subdiagrams define an equivalence relation amongst Wilson loop diagrams as follows.

\begin{dfn}\label{equivdfn} 
Let $W = (\cP,[n])$ and $W' = (\cP',[n])$ be two weakly admissible subdiagrams. Let $\sim$ denote the transitive closure of the following relation: $W$ and $W'$ are related if
\begin{enumerate}
\item there exist exact subdiagrams $(P, V(P)) \subseteq W$ and $(P', V(P')) \subseteq W'$ with the same number of propagators and supported on the same vertices, i.e. $|P| = |P'|$ and  $V(P) =  V(P')$.
\item the complementary subdiagrams (i.e. all other propagators and vertices) are identical, i.e. $(\cP \setminus P, [n]) = (\cP' \setminus P', [n])$.
\end{enumerate}
The relation $\sim$ is an equivalence relation on the set of all weakly admissible Wilson loop diagrams with vertex set $[n]$.
\end{dfn}

\begin{eg} \label{eg:equivdiags}
Note that we genuinely do need to take the transitive closure of this relation in order to obtain an equivalence relation. For example, consider the following three Wilson loop diagrams: 
\bas
W_1 = \begin{tikzpicture}[rotate=67.5,baseline=(current bounding box.east)]
	\begin{scope}
	\drawWLD{8}{1.5}
	\drawnumbers
	\drawlabeledprop{1}{0}{4}{0}{\footnotesize $q$ \quad}
	\drawlabeledprop{2}{0}{4}{-1}{\footnotesize  $p$}
    \drawlabeledprop{5}{0}{8}{0}{\footnotesize $r$ \ \ }
    \drawlabeledprop{5}{1}{7}{0}{\footnotesize \quad \ $s$}
		\end{scope}
	\end{tikzpicture} \quad
W_2 = \begin{tikzpicture}[rotate=67.5,baseline=(current bounding box.east)]
	\begin{scope}
	\drawWLD{8}{1.5}
	\drawnumbers
	\drawlabeledprop{1}{0}{4}{0}{\footnotesize $q$ \quad}
	\drawlabeledprop{2}{0}{4}{-1}{\footnotesize $p$}
    \drawlabeledprop{5}{0}{8}{0}{\footnotesize $u$ \ \ }
    \drawlabeledprop{6}{0}{8}{-1}{\footnotesize $v$ \ }
		\end{scope}
	\end{tikzpicture} \quad
W_3 = \begin{tikzpicture}[rotate=67.5,baseline=(current bounding box.east)]
	\begin{scope}
	\drawWLD{8}{1.5}
	\drawnumbers
	\drawlabeledprop{1}{0}{4}{0}{\footnotesize $a$ \quad}
	\drawlabeledprop{1}{1}{3}{0}{\footnotesize  $b$ \ \ \ }
    \drawlabeledprop{5}{0}{8}{0}{\footnotesize $u$ \ \ }
    \drawlabeledprop{6}{0}{8}{-1}{\footnotesize $v$ \quad }
		\end{scope}
	\end{tikzpicture} .
\eas

$W_1$ and $W_2$ are related because they are identical except on the exact subdiagrams 
\[(\{r, s\}, \{5,6,7,8,1\}) \text{ and }(\{u, v\}, \{5,6,7,8,1\})\] respectively. Similarly, $W_2$ and $W_3$ are related due to the exact subdiagrams 
\[(\{p, q\}, \{1,2,3,4,5\}) \text{ and }(\{a, b\}, \{1,2,3,4,5\}).\] 
However, the difference between $W_1$ and $W_3$ is the union of two exact subdiagrams, which is not itself exact.

\end{eg}
\medskip

As described in the introduction, Wilson loop diagrams come from the study of SYM $N = 4$ in the physics literature. In particular (see \cite[section 2]{HeslopStewart} for details) this gives that each diagram $W$ is associated to a matrix of data $C(W)$, which we now describe. 

\begin{dfn}\label{def:CWmatrix}
Each Wilson loop diagram $W = (\cP, [n])$ with $|\cP| = k$ is associated to a $k \times n$ matrix $C(W)$ with non-zero real variable entries, defined by:
\bas C(W)_{p,q} = \begin{cases} c_{p,q} & \textrm{ if } q \in V(p) \\
0  & \textrm{ if } q \not \in V(p)  \end{cases}
\;. \eas
\end{dfn}

\begin{eg}
If we order the propagators of $W_1$ from Example \ref{eg:equivdiags} as follows: \bas (1,4), \, (2,4), \, (5,7), \, (5,8) \eas then we obtain the corresponding matrix
\bas C(W_1) = \left(
\begin{array}{cccccccc}
c_{1,1} & c_{1,2} & 0 & c_{1,4} & c_{1,5} & 0 & 0 & 0 \\
0 & c_{2,2} & c_{2,3} & c_{2,4} & c_{2,5} & 0 & 0 & 0 \\
0 & 0 & 0 & 0 & c_{3,5} & c_{3,6} & c_{3,7} & c_{3,8} \\
c_{4,1} & 0 & 0 & 0 & c_{4,5} & c_{4,6} & 0 & c_{4,8}  \\
\end{array}
\right) \;.\eas

\end{eg}

%

\subsection{Wilson loop diagrams as matroids\label{sec matroid background}}

The matrix $C(W)$ allows us to associate a matroid to each Wilson loop diagram. We first give a quick summary of the matroid terminology that we will need.  It is not intended as a comprehensive introduction to matroids and the interested reader is referred to \cite{OxleyMatroidBook}.

A {\em matroid} $M = (E,\cB)$ consists of a finite ground set $E$ and a non-empty family $\cB$ of subsets of $E$, such that elements of $\cB$ satisfy the {\em basis exchange property}: for any distinct $B_1,B_2 \in \cB$ and any $a \in B_1 \setminus B_2$, there exists some $b \in B_2 \setminus B_1$ such that $(B_1 \setminus \{a\})\cup \{b\} \in \cB$. The elements of $\cB$ are called the {\em bases} of the matroid. Note that the basis exchange property immediately implies that all bases have the same size.

A subset $A \subseteq E$ is called {\em independent} in $M$ if $A \subseteq B$ for some $B \in \cB$, and is called {\em dependent} otherwise. (In other words, a basis is simply an independent set of maximal size in $M$.) The {\em rank}  $\rk(A)$ of a subset $A \subseteq E$ is the size of the largest independent set contained in $A$. The rank of the matroid itself is defined to be $\rk(E)$.

A {\em circuit} in $M$ is a minimally dependent set. That is, it is a set $C \subseteq E$ such that $C$ is dependent but $C \setminus \{e\}$ is independent for any $e \in C$. A union of circuits is called a {\em cycle}. On the other hand, a {\em flat} is a maximally dependent set, i.e. a set $F \subseteq E$ such that $\rk(F \cup \{e\}) = \rk(F) + 1$ for any $e \in E \setminus F$. Unsurprisingly, a {\em cyclic flat} is a set which is both a flat and a cycle. The set of circuits in a matroid uniquely defines that matroid, as does the set of flats.  Thus one could specify a matroid by listing either its independent sets, its bases, its circuits, or its flats.

Finally, we describe several important types of matroids. A matroid of rank $k$ with a ground set of size $n$ is called {\em realizable} if there exists some matrix $A$ in the $k\times n$ real Grassmannian $\mathbb{G}_{\mathbb{R}}(k,n)$ whose non-zero $k\times k$ minors are exactly those with columns indexed by an element of $\cB$. If all the maximal minors of the matrix are nonnegative then the matroid realized by this matrix is called a {\em positroid}.  In other words a positroid is a matroid\footnote{Note that matroid isomorphism allows arbitrary permutations of the ground set, while positroid isomorphism only allows cyclic permutations in order to preserve the nonnegativity of the minors.  In particular, the property of being a positroid is not preserved under matroid isomorphism.} that is realized by an element of the totally nonnegative Grassmannian $\Gr(k,n)$. Finally, a {\em uniform matroid} of rank $r$ is a matroid in which any set of size $\leq r$ is independent.

Matroid theory relates to the study of Wilson loop diagrams as follows. In \cite[section 3]{wilsonloop}, Agarwala and Marin-Amat show that every weakly admissible Wilson loop diagram $W$ with $k$ propagators defines a positroid $M(W)$ of rank $k$ on a base set of size $n$. It is a direct consequence of Postnikov's work, {\cite[Theorem 6.5]{Postnikov}}, that positroid cells are convex subset of $\Gr(k,n)$. Putting these together gives that the Wilson loop diagrams  correspond to convex subsets of $\Gr(k,n)$, connecting the geometry of Wilson loop diagrams with the geometry underlying the Amplituhedron. 

The independent sets of $M(W)$ can be read directly from the diagram:

\begin{thm} \label{thm WLD defines matroid} \cite[Theorem 3.6]{wilsonloop} Any Wilson loop diagram $W =(\cP, [n])$ defines a matroid $M(W)$ with ground set $[n]$. The independent sets are exactly those subsets $V \subseteq [n]$ such that every $U\subseteq V$ satisfies $|\Prop(U)| \geq |U|$. \label{thm:WLDmatroid}\end{thm}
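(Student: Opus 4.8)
The plan is to realize $M(W)$ as the column matroid of the matrix $C(W)$ with generic entries --- equivalently, as the transversal matroid of the set system $(V(p))_{p\in\cP}$ --- and then to read off its independent sets by a term-rank computation combined with Hall's marriage theorem. Concretely, let $\mathbb{K}=\mathbb{Q}(\{c_{p,q}\})$ be the field of rational functions in the indeterminate entries occurring in $C(W)$, and view $C(W)$ as a $k\times n$ matrix over $\mathbb{K}$. Its columns span a subspace of $\mathbb{K}^{k}$ and hence determine a matroid on the ground set $[n]$ --- every matrix over a field defines a matroid on its columns --- and this is $M(W)$. Passing to generic real values of the $c_{p,q}$ changes nothing: the maximal minors are polynomials in the $c_{p,q}$, and those that are not identically zero are nonzero off a proper Zariski-closed subset of parameter space, so a generic real specialization realizes the same matroid. (Producing a specialization for which in addition all maximal minors are nonnegative, making $M(W)$ a positroid, requires a further careful choice that is not relevant to the statement at hand.)

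The core of the argument is the characterization of independent column sets. Fix $V\subseteq[n]$, write $C(W)[V]$ for the submatrix on the columns indexed by $V$ (keeping all $k$ rows), and let $G$ be the bipartite graph on $\cP\sqcup[n]$ with an edge $p\sim q$ exactly when $c_{p,q}$ is an entry of $C(W)$, i.e.\ when $q\in V(p)$; note that the neighbourhood in $G$ of a set $U\subseteq[n]$ is precisely $\Prop(U)$. The claim is that the rank of $C(W)[V]$ over $\mathbb{K}$ equals its term rank, i.e.\ the largest $\ell$ for which $G$ contains a matching of size $\ell$ saturating a subset of $V$. The inequality $\text{rank}\le\text{term rank}$ is immediate: a nonvanishing $\ell\times\ell$ minor has a nonzero term in its Leibniz expansion, and that term exhibits a matching of size $\ell$. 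For the reverse inequality, fix a maximum matching $\{(p_{1},q_{1}),\dots,(p_{\ell},q_{\ell})\}$ with all $q_{i}\in V$ and consider the $\ell\times\ell$ submatrix on rows $p_{1},\dots,p_{\ell}$ and columns $q_{1},\dots,q_{\ell}$; in its determinant the diagonal term $\prod_{i}c_{p_{i},q_{i}}$ is a monomial that cannot cancel against any other term, since distinct permutations pick out distinct multisets of index pairs and the variables $c_{p,q}$ are algebraically independent, so the minor is nonzero and the rank is at least $\ell$.

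Putting the pieces together, $V$ is independent in $M(W)$ iff $C(W)[V]$ has rank $|V|$, iff $G$ has a matching saturating $V$, iff --- by the deficiency form of Hall's theorem --- $|\Prop(U)|=|N_{G}(U)|\ge|U|$ for every $U\subseteq V$, which is exactly the asserted characterization. As a coda one recovers $\rk M(W)=k$ whenever $W$ is weakly admissible: condition~(2) of Definition~\ref{admisdfn} gives $|V(Q)|\ge|Q|$ for every $Q\subseteq\cP$ (vacuously for $Q=\emptyset$), so Hall's theorem applied to the $\cP$-side of $G$ produces a matching saturating all of $\cP$, whose $[n]$-endpoints form an independent set of size $k$. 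The one genuinely delicate point is the term-rank equality, specifically the no-cancellation argument for the chosen square submatrix; alternatively this whole step can be bypassed by invoking the Edmonds--Fulkerson theorem, which says directly that the partial transversals of $(V(p))_{p\in\cP}$ are the independent sets of a matroid on $[n]$, at the cost of importing that classical result wholesale.
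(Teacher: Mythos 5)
Your proof is correct: realizing $M(W)$ as the column matroid of the generic matrix $C(W)$ (equivalently, the transversal matroid of the supports $V(p)$) and then characterizing independent column sets via the term-rank equality and the deficiency form of Hall's theorem is a complete and valid argument, including the no-cancellation step for the diagonal monomial of the matching submatrix. The paper itself imports this statement by citation to Theorem 3.6 of Agarwala and Marin-Amat rather than proving it, but its discussion immediately following the theorem identifies $M(W)$ with the matroid realized by $C(W)$ in exactly the way you do, so your route coincides with the intended one and simply supplies the details the paper leaves to the reference.
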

In other words, the independent sets of $M(W)$ correspond to the sets of vertices in $W$ with ``enough propagators sufficiently well distributed'' on them: not only should $V$ itself support at least as many propagators as the number of vertices it contains, but the same should be true of any subset of $V$ as well.

Throughout, we take the {\em matroid defined by $W$} to be the matroid $M(W)$ of Theorem \ref{thm WLD defines matroid}. Note from Definition~\ref{def:CWmatrix} that the vertices of the diagram $W$ correspond to columns of the associated matrix $C(W)$. Therefore, for any subset of vertices $V \subseteq [n]$ we can consider the restriction $C(W)|_V$ of $C(W)$, which consists of the columns corresponding to $V$. Since the non-zero entries of $C(W)$ are independent real variables, if $C(W)|_V$ has fewer than $|V|$ rows of non-zero entries then it does not have full rank. This is equivalent to saying that if $V$ is a set of columns of $C(W)$ with $|\Prop(V)| < |V|$, then $V$ is a dependent set. Any set of columns containing such a $V$ cannot be independent. Comparing this to Theorem \ref{thm:WLDmatroid}, we see that $M(W)$ is also the matroid realized by $C(W)$. As mentioned above, Theorem [3.38] of \cite{wilsonloop} shows that if $W$ is weakly admissible, the matroid $M(W)$ is a positroid. Furthermore, Theorem 5.1 \cite{WLDdim} shows that the space parametrized by the matrices $C(W)$ matches the positroid cells defined by $M(W)$, up to a set of measure $0$. Note, this does not imply that every possible specializaton of $C(W)$ to a real valued matrix leads to a postroid, only that the set of values that do define a positroid is volume filling in $\R^{4k}$.  


Let $W = (\cP,[n])$ be a weakly admissible Wilson loop diagram, and $M(W)$ its associated matroid. Where it will not cause confusion we conflate the two objects, we identify vertices of $W$ with elements of the ground set $[n]$ in $M(W)$. 

In particular, this allows us to prove results about $M(W)$ by considering the behavior of propagators in $W$. We record a few facts about the ranks of cycles and flats in $M(W)$ as an example of this.

\begin{lem}\label{lem facts about WLD matroids}
Let $W = (\cP,[n])$ be a weakly admissible Wilson loop diagram. The following are true.
\begin{enumerate}
\item The rank of a set $V \subseteq [n]$ is bounded above by $\min\{|V|,|\Prop(V)|\}$, with $\rk(V) = |V|$ if and only if $V$ is an independent set.
\item If $C \subseteq [n]$ is a cycle in $M(W)$, then $\rk(C) = |Prop(C)|$.
\item If $[n]$ can be partitioned into at least two non-empty sets, each of which support disjoint sets of propagators and in such a way that we obtain a partition of the propagator set, i.e. if we have \bas [n] = \sqcup_i V(P_i) \ \textrm{ such that } \ \sqcup P_i = \cP; \   \text{ with }V(P_i) \cap V(P_j) = \emptyset \text{ and } P_i \cap P_j = \emptyset \quad \forall i \neq j\;, \eas then the matroid $M(W)$ is separable, i.e. \bas M(W) = \bigoplus_i M(P_i, V(P_i)) \;.\eas 
\item The set $F(P)$ from Definition~\ref{VPropdfn} is a flat of $M( W)$, which we call the \emph{propagator flat} of $P$.
\item If $F$ is a cyclic flat of $M(W)$, then it is a propagator flat.
\end{enumerate}
\end{lem}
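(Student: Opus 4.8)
The plan is to take the obvious candidate and verify it works: given a cyclic flat $F$ of $M(W)$, set $P := \Prop(F) \subseteq \cP$ and show that $F = F(P)$. The inclusion $F \subseteq F(P)$ is immediate from the definitions together with the monotonicity of $\Prop$ (since $\Prop(A) = \{p : V(p)\cap A \neq \emptyset\}$, larger sets support more propagators): for $v \in F$ we have $\Prop(\{v\}) \subseteq \Prop(F) = P$, so $v$ supports no propagator outside $P$, i.e. $v \in V(P^c)^c = F(P)$.

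For the reverse inclusion $F(P) \subseteq F$, I would argue by contradiction. Suppose $v \in F(P) \setminus F$. Since $F$ is a flat and $v \notin F$, part (1) of this lemma (or the definition of flat) gives $\rk(F \cup \{v\}) = \rk(F) + 1$. On the other hand, $v \in F(P)$ means $\Prop(\{v\}) \subseteq P$, and since $\Prop$ distributes over unions we get $\Prop(F \cup \{v\}) = \Prop(F) \cup \Prop(\{v\}) = P$; hence by part (1), $\rk(F \cup \{v\}) \le |\Prop(F\cup\{v\})| = |P|$. Finally, because $F$ is in particular a cycle, part (2) gives $\rk(F) = |\Prop(F)| = |P|$. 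Putting these together yields $|P| + 1 = \rk(F) + 1 = \rk(F \cup \{v\}) \le |P|$, a contradiction. Therefore $F(P) \subseteq F$, and combined with the previous paragraph $F = F(P)$, so $F$ is a propagator flat.

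The argument is genuinely short because parts (1) and (2) do all the real work; the only conceptual point is recognizing that the ``cycle'' half of ``cyclic flat'' is exactly what is needed to compute $\rk(F)$ exactly — an arbitrary flat $F$ with $\rk(F) < |\Prop(F)|$ would not be forced to equal $F(\Prop(F))$ — after which the ``flat'' half rules out every extra vertex. I would also record that the degenerate cases are consistent: $P = \emptyset$ recovers $F = F(\emptyset)$, the set of non-supporting vertices, and $P = \cP$ recovers the whole ground set $[n]$, neither of which requires special treatment since the argument above never assumes $F$ is a proper or nonempty flat.
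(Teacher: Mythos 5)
Your argument for item 5 is correct, but it covers only that item: the paper's proof of Lemma~\ref{lem facts about WLD matroids} also has to dispose of items 1--3 (by citation to \cite{wilsonloop}) and give a direct argument for item 4, none of which your proposal addresses. For item 5 itself you take a genuinely different route from the paper, which simply cites \cite[Lemma 3.28]{wilsonloop} rather than proving the claim. Your self-contained argument is sound: the inclusion $F \subseteq F(\Prop(F))$ is immediate since every propagator touching a vertex of $F$ lies in $\Prop(F)$, and for the reverse inclusion the chain $|P|+1 = \rk(F)+1 = \rk(F\cup\{v\}) \le |\Prop(F\cup\{v\})| = |P|$ is valid because $\Prop$ distributes over unions, part (2) applies to the cycle $F$, and part (1) supplies the rank bound; there is no circularity since parts (1) and (2) are established independently. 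What your approach buys is a proof internal to this paper's toolkit; it is essentially the specialization to cyclic flats of the argument the paper gives later for Lemma~\ref{lem decompose flat}(1), where the union of circuits inside an arbitrary flat is shown to equal $F(\Prop(C))$ by exhibiting a circuit through any putative extra vertex $w$. Your version is shorter precisely because, when $F$ is itself a cycle, the identity $\rk(F)=|\Prop(F)|$ comes for free from part (2) and one never needs to build the circuit explicitly --- your own remark that the ``cycle'' half of the hypothesis is what pins down $\rk(F)$ exactly captures why the general-flat statement requires the extra work that Lemma~\ref{lem decompose flat} performs.
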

\begin{proof}
The first part of item 1 is \cite[Equation (9)]{wilsonloop} and surrounding discussion, and the second part is standard matroid theory. Item 2 is \cite[Lemma 3.27]{wilsonloop}. Item 3 is a direct consequence of \cite[Theorem 3.20]{wilsonloop} and the fact that $F(P_1)^c = V(P_1^c)$, and item 5 is given in \cite[Lemma 3.28]{wilsonloop}.

To prove item 4, we need to show that $F(P)$ is maximally dependent. If $F(P) = [n]$ then this is automatic, so suppose not and let $v \in [n] \setminus F(P)$. In other words, $v \in V(P^c)$ and so $v$ supports some propagator $q \not\in P$.  Let $S \subseteq F(P)$ be an independent set of maximal size. Then $\Prop(S) \subseteq P$ (from the definition of $F(P)$) and no subset of $S$ supports fewer propagators than the number of vertices it contains (by Theorem~\ref{thm WLD defines matroid}). Since $v$ supports a new propagator $q \not\in P$, the set $S \cup \{v\} \subseteq F(P) \cup \{v\}$ also satisfies this independence condition.

Since $S$ is independent, it follows from the first point in this lemma that $|S| = \rk(S) = \rk F(P)$. Similarly, 
\[|S \cup \{v\}| = |S|+1 = \rk (S \cup \{v\}) \leq \rk(F(P) \cup \{v\}) \leq \rk F(P) + 1.\]
Thus $\rk(F(P) \cup\{v\}) = \rk F(P) + 1$, as required.
\end{proof}

\begin{rmk}
By item 4 of Lemma~\ref{lem facts about WLD matroids}, the set $F(\emptyset)$ of non-supporting vertices in $W$ is a flat. Indeed, it is the unique flat of rank 0 in $M(W)$. Every element of $F(\emptyset)$ defines a circuit of size 1, and all other circuits of $M(W)$ must contain at least two elements.
\end{rmk}

\section{Matroids, triangulations, and equivalence classes of Wilson loop diagrams\label{sec equiv}} 

This section explores the correspondence between Wilson loop diagrams and positroids that was established in \cite[Theorem 3.38]{wilsonloop}. In that paper, the authors show that admissible Wilson loop diagrams define matroids that are also positroids. Furthermore, they show that if two Wilson loop diagrams are equivalent, then they define the same matroid \cite[Theorems 1.18 and 3.41]{wilsonloop}.

In this section we develop this analysis further, showing the converse of this result from \cite{wilsonloop} (see Theorem \ref{same matroid iff equiv}). We also enumerate the number of diagrams that map to a particular matroid in Corollary \ref{number of equiv diagrams}. To do this, we establish a connection between Wilson loop diagrams and polygon dissections in Subsection \ref{sec: polygon partitions}, and uncover some matroidal properties of exact subdiagrams in Subsection \ref{sec: exact diagram matroidal props}. This includes the fact that exact subdiagrams correspond to uniform matroids (Theorem \ref{exactuniformthm}), and that the associated matroid of an exact subdiagram can be written as a contraction by the complementary propagator flat (Theorem \ref{exact diagrams contractions}).

\subsection{Polygon dissections of Wilson loop diagrams\label{sec: polygon partitions}}

The equivalence relation on Wilson loop diagrams (Definition \ref{equivdfn}) is defined in terms of exact subdiagrams. Thus in order to understand the equivalence, we need a way to extract and compare exact subdiagrams. We do this via the notion of a polygon dissection of $W$.

\begin{dfn}\label{WLDtriangulationdfn}
  Let $W = (\cP, [n])$ be a weakly admissible Wilson loop diagram.  The \emph{polygon dissection} associated to $W$, denoted $\tau(W)$, is defined as follows.
  \begin{itemize}
  \item The vertices of $\tau(W)$ correspond to the edges of $W$.
  \item Labeling the vertices of $\tau(W)$ with the edge number of $W$ gives a cyclic order to the vertices. Connecting consecutive vertices gives a graph theoretic cycle called the polygon of $\tau(W)$.
  \item Each propagator of $W$ defines a diagonal edge of $\tau(W)$.  Specifically,  a propagator $(i,j) \in \cP$ defines a diagonal connecting the vertices $i$ and $j$ in $\tau(W)$.
  \end{itemize}
\end{dfn}

\emph{Polygon dissection} is a common term in combinatorics for a decomposition of a convex polygon into smaller polygons by means of noncrossing diagonals; see for example the introduction of {\cite{PSpolygon}}.  Each $\tau(W)$ is such an object.

\begin{lem}\label{tausimpleplanarlem}
If $W = (\cP, [n])$ is a weakly admissible Wilson loop diagram, then $\tau(W)$ is a simple planar graph whose outer face is a cycle. It inherits an embedding from $W$ such that the vertices all lie on this infinite face\footnote{That is, it is an \emph{outerplanar} graph}. These vertices are cyclically ordered, with a choice of first vertex giving it an additional compatible linear order.
\end{lem}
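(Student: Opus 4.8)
The plan is to verify the assertions in the order: simplicity of $\tau(W)$, then the existence of a planar embedding with all vertices on one face (outerplanarity), then the identification of the outer face, leaving the cyclic and linear orders to be read off at the end.

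First I would check simplicity, i.e. that $\tau(W)$ has no loops and no parallel edges. A loop would come from a propagator $(i,j)$ with $i=j$, which the standing convention $i+1\le j$ excludes. Parallel edges can occur in two ways. (i) A propagator diagonal could duplicate a polygon edge: the polygon edges of $\tau(W)$ join cyclically consecutive vertices, so a propagator $p=(i,j)$ yields such an edge exactly when $j=i+1$ or $\{i,j\}=\{1,n\}$, and in either case $V(p)=\{i,i+1,j,j+1\}$ has only three elements, violating condition 2 of Definition~\ref{admisdfn} applied to $Q=\{p\}$ (which forces $|V(p)|\ge 4$). (ii) Two distinct propagators $p,q$ could give the same diagonal, i.e. $\{i_p,j_p\}=\{i_q,j_q\}$; then $V(\{p,q\})=V(p)$ has at most $4$ elements, whereas condition 2 applied to $Q=\{p,q\}$ demands $|V(\{p,q\})|\ge 5$. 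Hence $\tau(W)$ is simple.

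Next I would build the embedding, and here the work is to see that condition 3 of Definition~\ref{admisdfn} is precisely a non-crossing condition. Normalising every propagator as $(i,j)$ with $i<j$, two distinct diagonals $\{i_p,j_p\}$ and $\{i_q,j_q\}$ of the convex $n$-gon on the vertices $1,\dots,n$ meet in its interior if and only if $i_p<i_q<j_p<j_q$ or $i_q<i_p<j_q<j_p$; ranging over ordered pairs of propagators, condition 3 forbids both. Consequently I can draw $\tau(W)$ inside the disk bounded by the circle of $W$ by placing one vertex in the interior of each edge-arc, joining consecutive vertices along the circle, and routing the diagonal of each propagator alongside the corresponding propagator arc of $W$; the propagators being pairwise non-crossing curves with endpoints on the boundary, this introduces no crossings, and in the resulting drawing all $n$ vertices lie on the bounding circle, so $\tau(W)$ is outerplanar. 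Equivalently and more combinatorially, $\tau(W)$ is the $n$-gon together with a family of pairwise non-crossing diagonals — a polygon dissection — and any polygon dissection is outerplanar with the polygon cycle as its outer face. Either way the outer (infinite) face is bounded by the $n$-cycle on the vertices, and the cyclic order on those vertices is exactly the one inherited from the cyclic order on the edges of $W$ (Definition~\ref{def:edges}), a choice of first edge giving the compatible linear order.

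I expect the only genuinely delicate point to be this non-crossing step: one must handle the wrap-around cases carefully (the pair $\{1,n\}$, and the fact that propagators are recorded relative to the first vertex) when matching condition 3 against ``no two chords of a convex polygon cross'', and one must decide how much detail to give for the standard fact that finitely many pairwise disjoint properly embedded arcs in a disk with endpoints on its boundary can be simultaneously isotoped to straight chords — I would either cite this or dispose of it in a sentence. Everything else is routine bookkeeping.
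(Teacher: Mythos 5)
Your proposal is correct and follows essentially the same route as the paper's proof: simplicity is ruled out by the density condition (the paper invokes its earlier remark that adjacent-edge propagators and duplicated supports violate weak admissibility, where you rederive this from condition 2 of Definition~\ref{admisdfn}), and planarity follows because the embedding is inherited from $W$ and condition 3 forbids crossing propagators. You simply supply more detail on the non-crossing and wrap-around bookkeeping than the paper does.
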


\begin{proof}
Since the vertices of $\tau(W)$ are labeled by the edges of $W$, which are cyclically ordered, this gives an ordering to the vertices and the outer face of $\tau(W)$ is a cycle.  Since $W$ does not admit any propagators of the form $p = (i, i+1)$ due to its admissibility, there is exactly one edge connecting any two adjacent edges of $\tau(W)$. Similarly, there does not exist two propagators $p,q$ such that both $p$ and $q$ start at edge $i$ and end at edge $j$. Therefore, no two vertices of $\tau(W)$ can be connected by more than one edge, and so $\tau(W)$ is a simple graph. Finally, the embedding of $\tau(W)$ is induced from the embedding of the graph $W$, and since $W$ is weakly admissible we know that no pair of propagators can cross. Therefore, it is a planar embedding. 
 \end{proof}

\begin{rmk}
The previous lemma shows that $\tau$ does map into polygon dissections.  In fact $\tau$ is a bijection from weakly admissible Wilson loop diagrams to polygon dissections.  To see the inverse map, and hence obtain bijectivity, we need only reverse the construction above.  The only part which is nontrivial is that if diagonals of the dissection meet at a common vertex, we need to order the propagators associated to these diagonals along the edge associated to the vertex of the dissection, but using the noncrossing property of propagators this can be done uniquely.

The $\tau$ bijection will be useful, as is typical in bijective combinatorics, because some features are more visible on the dissection side.  In particular the exact subdiagrams are visible as triangulated pieces and equivalence comes down to retriangulation.  All of these notions are defined and explained below.

There are many other instances of this basic methodology in combinatorics, but we'd like to point out one in particular that involves many objects closely related to those we use here, \cite{CGdyck}\footnote{Thanks to the referee for the companion paper, \cite{generalcombinatoricsII}, for pointing this out.}.  In particular they, like us, look at a special case within the world of positroids and related objects which is enumerated by Catalan numbers.  For them these are positroids coming from unit interval graphs, while for us they are the equivalent exact subdiagrams.  Our objects are different, which can be seen by the fact that the exact subdiagram with 5 vertices (which is unique up to rotation) does not give a decorated permuation in the form of the Main Theorem of \cite{CGdyck}, but there are similarities in flavour with our arguments. 

As a final note before we begin laying out the results we need on $\tau$, if a Wilson loop diagram $W$ is weakly admissible but not admissible, then the same equality that would give exactness in a subdiagram holds, and so for the same reasons as described below for exact subdiagrams $\tau(W)$ will have all triangle internal faces and this is the only way that $\tau(W)$ can have all triangle internal faces.  Therefore, $\tau$ also gives a bijection between admissible Wilson loop diagrams and polygon dissections with at least one non-triangle internal face.
\end{rmk}

\begin{eg}\label{WLDtopolygonpartition}
In this example we return to two of the Wilson loop diagrams in Example \ref{eg:equivdiags}. We can pair diagrams with their polygon dissections as follows:
\bas W_1 = \begin{tikzpicture}[rotate=67.5,baseline=(current bounding box.east)]
	\begin{scope}
	\drawWLD{8}{1.5}
	\drawnumbers
	\drawprop{1}{0}{4}{0}
	\drawprop{2}{0}{4}{-1}
    \drawprop{5}{0}{8}{0}
    \drawprop{5}{1}{7}{0}
		\end{scope}
	\end{tikzpicture} \quad; \quad
\tau(W_1) = \begin{tikzpicture}[rotate=67.5,baseline=(current bounding box.east)]
	\begin{scope}
	\drawpolypart{8}{1.5}
    \drawnumbersshift
    \drawchord{1}{4}
    \drawchord{2}{4}
    \drawchord{5}{8}
    \drawchord{5}{7}
	\end{scope}
	\end{tikzpicture}
\eas and
\bas W_3 = \begin{tikzpicture}[rotate=67.5,baseline=(current bounding box.east)]
	\begin{scope}
	\drawWLD{8}{1.5}
	\drawnumbers
    \drawprop{1}{0}{4}{0}
	\drawprop{1}{1}{3}{0}
    \drawprop{5}{0}{8}{0}
    \drawprop{6}{0}{8}{-1}	
		\end{scope}
	\end{tikzpicture}\quad; \quad
\tau(W_3) = \begin{tikzpicture}[rotate=67.5,baseline=(current bounding box.east)]
	\begin{scope}
	\drawpolypart{8}{1.5}
    \drawnumbersshift
    \drawchord{1}{4}
    \drawchord{1}{3}
    \drawchord{5}{8}
    \drawchord{6}{8}
	\end{scope}
	\end{tikzpicture} .
\eas

\end{eg}

A planar embedding of a graph is a \emph{triangulation} if all faces, except possibly the infinite face, are triangles.

\begin{dfn}
  Let $W$ be a weakly admissible Wilson loop diagram and $\tau(W)$ its polygon dissection. A \emph{triangulated piece} of $\tau(W)$ is a connected, bridgeless subgraph of $\tau(W)$ which is a triangulation. We will take the convention that a subgraph consisting of a single diagonal edge is called a \emph{trivial} triangulated piece.
A {\em maximal} triangulated piece is one which is not contained in any strictly larger triangulated piece.
\end{dfn}

\begin{eg} \label{eg: unique decomposition} For the Wilson loop diagrams and polygon dissections in Example \ref{WLDtopolygonpartition}, the vertex sets $\{1, 2, 3,4\}$ and $\{5, 6, 7, 8\}$ give maximal triangulated pieces for both $\tau(W_1)$ and $\tau(W_3)$. The vertex set $\{4,5, 8, 1\}$ is not a triangulation in either polygon dissection. 
\end{eg}

\begin{dfn}
 A {\em decomposition} of a polygon dissection $\tau(W)$ is a set of bridgeless, connected, induced subgraphs of $\tau(W)$ which partition the edges of $\tau(W)$.  
\end{dfn}

\begin{lem} \label{decompositionlem}
  For $W$ a weakly admissible Wilson loop diagram, the polygon dissection $\tau(W)$ has a unique decomposition into maximal triangulated pieces and edges in the polygon of $\tau(W)$.  We call this a \emph{maximal decomposition}.
\end{lem}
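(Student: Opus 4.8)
The plan is to transfer everything to the polygon-dissection side and argue through the face structure of $\tau(W)$. Write $\tau=\tau(W)$; by Lemma~\ref{tausimpleplanarlem} (and the discussion following it) $\tau$ is an outerplanar polygon dissection, so each bounded face is a convex polygon with at least three sides, two bounded faces share at most one edge, and a shared edge is necessarily a diagonal. The crucial first step is a rigidity statement coming from convexity: \emph{every bounded face of a triangulated piece $Q$ of $\tau$ is already a bounded face of $\tau$}. Indeed, a bounded (triangular) face of $Q$ is a triangle whose three sides are edges of $\tau$; since all vertices of $\tau$ lie on a convex polygon, none lies strictly inside that triangle, so no diagonal of $\tau$ can enter its interior, and the triangle is a single face of $\tau$. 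It follows that a non-trivial triangulated piece of $\tau$ is exactly the edge-union of a set $\mathcal{S}$ of triangular faces of $\tau$ that is connected through shared diagonals, and conversely that any such connected $\mathcal{S}$ glues into a genuine triangulated sub-polygon (its union is a disc, because the bounded faces of a dissection form a tree under the ``share a diagonal'' adjacency), hence a triangulated piece.

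Granting this, I would classify the maximal triangulated pieces. The non-trivial maximal pieces are precisely the edge-unions of the connected components of the subforest of the face-adjacency tree induced by the triangular faces of $\tau$: such an edge-union is a triangulated piece by the first step, and it is maximal because any strictly larger triangulated piece would contain a further triangular face of $\tau$ adjacent to the component, hence already in the component; conversely any maximal non-trivial piece, being a connected union of triangular faces, exhausts its component. The trivial maximal pieces are exactly the single diagonals whose two incident bounded faces are both non-triangular; such a diagonal cannot be enlarged (a strictly larger triangulated piece would need a triangular face of $\tau$ having that diagonal as a side, contradicting the convexity step), whereas a diagonal lying on some triangular face can always be enlarged inside that face's component. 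Note that each triangular face of $\tau$ has at least one diagonal among its sides, because weak admissibility of $W$ (with at least one propagator) forces $\tau$ to have at least four vertices; hence every one of these maximal pieces contains at least one diagonal. (If $\cP=\emptyset$ the dissection has no diagonals and the statement is immediate.)

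Existence of the maximal decomposition is then bookkeeping: take the edge-unions of the components of the triangular subforest, the trivial maximal pieces, and every remaining edge of the polygon of $\tau$ as its own singleton. Since each polygon edge lies on exactly one bounded face of $\tau$ and each diagonal on exactly two, each edge of $\tau$ is assigned to exactly one of these pieces; the pieces are edge-disjoint, cover $E(\tau)$, and by the classification consist of maximal triangulated pieces and polygon edges. For uniqueness, let $\mathcal{D}$ be any decomposition of $\tau$ into maximal triangulated pieces and polygon edges. Every diagonal of $\tau$ lies in some piece of $\mathcal{D}$, which cannot be a polygon edge, so it is a maximal triangulated piece; by the classification this piece is forced — it is the edge-union of that diagonal's face-component if the diagonal lies on a triangular face, and the single diagonal otherwise. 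Hence $\mathcal{D}$ contains every component edge-union and every trivial maximal piece, and the remaining edges, all lying in the polygon of $\tau$, must appear as singletons; since $\mathcal{D}$ and the constructed decomposition both partition $E(\tau)$, they coincide.

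I expect the first paragraph to be the genuine obstacle: precisely establishing that triangulated pieces of $\tau$ are honest triangulated sub-polygons — so that their constituent triangular faces are linked through shared diagonals and ``maximal'' translates into ``a whole component of the triangular subforest'' — and carefully handling the trivial-piece convention together with the possibility of diagonals lying between two non-triangular faces. Once that dictionary between triangulated pieces and connected families of triangular faces is in place, the classification of maximal pieces and the existence/uniqueness arguments are essentially formal.
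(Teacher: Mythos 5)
Your argument is correct and lands on the same combinatorial skeleton as the paper, but approaches it from the primal rather than the dual side: the paper forms the dual tree $T(W)$ of $\tau(W)$ (with the outer-face vertex split), splits it at every vertex of degree $>3$, and declares the maximal pieces to be the connected components of the resulting forest, whereas your ``triangular subforest of the face-adjacency tree'' is precisely that post-splitting forest, so your component edge-unions are the paper's components. What you add that the paper leaves implicit is the rigidity step --- that every bounded face of a triangulated piece is already a face of $\tau(W)$, by convexity and the noncrossing of diagonals --- which is the real justification for reading pieces off from faces of $\tau(W)$; and your uniqueness argument (every diagonal is forced into a unique maximal piece by the classification, so the whole edge-partition is forced) is more direct than the paper's, which assumes a second maximal decomposition induces a splitting of $T(W)$ into subtrees and derives a contradiction by locating a degree-$>3$ vertex inside one of them. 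One caveat, which you partly flag and which the paper's own proof shares: the literal definition of a triangulated piece (``connected, bridgeless subgraph which is a triangulation'') does not by itself force the constituent triangles to be linked through shared diagonals --- two triangular faces of $\tau(W)$ meeting only at a vertex already form a connected, bridgeless subgraph all of whose bounded faces are triangles. Your ``it follows'' at the end of the first paragraph silently excludes this configuration; to close the gap one should take the convention (used implicitly throughout, e.g.\ in the edge-counting in Lemma~\ref{lem triang to exact}) that a nontrivial triangulated piece is a triangulated sub-polygon, i.e.\ $2$-connected with its outer boundary a cycle, after which vertex-only contacts are excluded and your classification, existence, and uniqueness arguments go through exactly as written.
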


\begin{proof}
We begin by giving an algorithm for the decomposition, then prove its uniqueness. Let $W = (\cP, [n])$, with $|\cP| = k$.

By \emph{splitting} a vertex $v$ we will mean replacing $v$ by new vertices $v_1, v_2,\ldots, v_{\text{deg}(v)}$ such that each $v_i$ has exactly one neighbour and the union of the $v_i$ is the neighbourhood\footnote{The neighbourhood of a vertex is the set of adjacent vertices} of $v$.

Let $T(W)$ be the dual graph of $\tau(W)$ with the vertex corresponding to the infinite face split.
Since $\tau(W)$ is an embedded graph (with a fixed distinguished embedding) by Lemma \ref{tausimpleplanarlem}, $T(W)$ is an uniquely defined graph.

Furthermore $T(W)$ is a tree because it is connected, has $n+k+1$ vertices ($k+1$ from the internal faces of $\tau(W)$ and $n$ from the outer face) and $n+k$ edges (since $\tau(W)$ has $n+k$ edges).  Additionally, since $\tau(W)$ is simple, $T(W)$ has no vertices of degree $2$.  


Now split every vertex of $T(W)$ which has degree $>3$; see Figure~\ref{fig:vertex splitting in the dual graph} for an example of this. That is, we split $T(W)$ at every vertex corresponding to a non-triangle face of $\tau(W)$. 

We claim that the connected components of $T(W)$ correspond to the decomposition of $\tau(W)$ into maximal triangulated pieces and edges originally in the polygon of $\tau(W)$.  To see this, let $f$ be the forest thus obtained.
The vertices of $f$ either have degree 1 or 3.
Trees of $f$ with no trivalent vertices correspond to either edges in the polygon of $\tau(W)$ (if they were originally leaves of $T(W)$) or to maximal trivial triangulated pieces.
Splitting at all the faces that are not triangles ensures maximality of the decomposition. If the splitting were not maximal, then one could add a triangle to a connected component of the splitting, but this would imply that that splitting happened at a degree $3$ vertex.

\begin{figure}[h]
\bas 
\begin{tikzpicture}[rotate=67.5,baseline=(current bounding box.east)]
	\begin{scope}
	\drawpolypart{8}{1.5}
    \drawnumbersshift
    \drawchord{1}{4}
    \drawchord{1}{3}
    \drawchord{5}{8}
    \drawchord{6}{8}
 \draw[gray] (0,0) node {$\bullet$};
\draw[gray] (-1,.2) node {$\bullet$};	
\draw[gray] (-.5,1.1) node {$\bullet$};
\draw[gray] (.8,-.3) node {$\bullet$};	
\draw[gray] (1.1,-.5) node {$\bullet$};
\draw[gray](1.1,-.5) -- (.8,-.3) --(0,0)--(-1,.2) -- (-.5,1.1);
\foreach \i in {1,2,...,8} {
      \draw[gray] (45*\i:2) node {$\bullet$};
    }
\draw[gray] (45*1:2) --(0,0) -- (45*5:2);
\draw[gray] (45*6:2) -- (.8,-.3);
\draw[gray] (45*8:2) -- (1.1,-.5) -- (45*7:2);
\draw[gray] (45*2:2) -- (-.5,1.1) -- (45*3:2);
\draw[gray] (45*4:2) -- (-1,.2);
	\end{scope}
	\end{tikzpicture} \qquad \longrightarrow \qquad 
\begin{tikzpicture}[rotate=67.5,baseline=(current bounding box.east)]
	\begin{scope}
	\drawpolypart{8}{1.5}
    \drawnumbersshift
    \drawchord{1}{4}
    \drawchord{1}{3}
    \drawchord{5}{8}
    \drawchord{6}{8}
 \draw[gray] (.2,.2) node {$\bullet$};
\draw[gray] (.2,-.2) node {$\bullet$};
\draw[gray] (-.2,-.2) node {$\bullet$};
\draw[gray] (-.2,.2) node {$\bullet$};
\draw[gray] (-1,.2) node {$\bullet$};	
\draw[gray] (-.5,1.1) node {$\bullet$};
\draw[gray] (.8,-.3) node {$\bullet$};	
\draw[gray] (1.1,-.5) node {$\bullet$};
\draw[gray](1.1,-.5) -- (.8,-.3) --(.2,-.2); 
\draw[gray] (-.2, .2) --(-1,.2) -- (-.5,1.1);
\foreach \i in {1,2,...,8} {
      \draw[gray] (45*\i:2) node {$\bullet$};
    }
\draw[gray] (45*1:2) --(.2,.2);
\draw[gray] (-.2, -.2) -- (45*5:2);
\draw[gray] (45*6:2) -- (.8,-.3);
\draw[gray] (45*8:2) -- (1.1,-.5) -- (45*7:2);
\draw[gray] (45*2:2) -- (-.5,1.1) -- (45*3:2);
\draw[gray] (45*4:2) -- (-1,.2);
	\end{scope}
	\end{tikzpicture}
\eas

\caption{The polygon dissection $\tau(W_3)$ overlaid with its dual graph $T(W_3)$ (left diagram), and the forest obtained by splitting at the only vertex of degree $>3$ (right diagram).}
\label{fig:vertex splitting in the dual graph}
\end{figure}
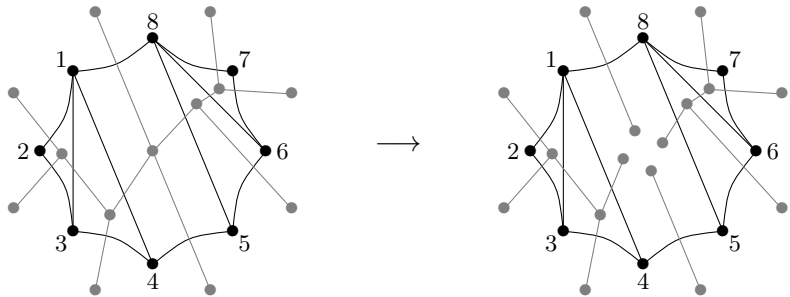

To see uniqueness, consider a different maximal decomposition of $\tau(W)$. This induces a splitting on $T(W)$, where each connected component of the new decomposition corresponds to a subtree. Call this forest $f'$. Since $f' \neq f$, there are two trees $t$ and $t'$ in $f$ and $f'$ respectively that are distinct, but share at least one edge of $T(W)$. Since $f'$ is also maximal, $t'$ is not a subtree of $t$. Therefore, the edges of $t'$ can be found in at least two trees in the forest $f$. In particular, there is a non-leaf vertex $v$ in $t'$ that corresponds to a split vertex of $T(W)$ in the original decomposition.
This implies that $v$ has degree greater than $3$ in $T(W)$, and thus the corresponding face of $\tau(W)$ is not a triangle. In other words, the piece of $\tau(W)$ corresponding to $f'$ is not a triangulation. 
\end{proof}

\begin{cor} \label{maxtriangdisjointcor}
Given a maximal decomposition of $\tau(W)$, the maximal triangulated pieces are edge disjoint.
\end{cor}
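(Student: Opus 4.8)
The plan is to read this off directly from Lemma~\ref{decompositionlem} together with the definition of a decomposition. By definition, a decomposition of $\tau(W)$ is a set of connected bridgeless induced subgraphs whose edge sets \emph{partition} $E(\tau(W))$, so any two distinct members of a decomposition are automatically edge disjoint. By Lemma~\ref{decompositionlem}, the (unique) maximal decomposition of $\tau(W)$ consists precisely of the maximal triangulated pieces together with the edges of $\tau(W)$ lying in the polygon. I would then simply observe that the maximal triangulated pieces form a subfamily of the members of this decomposition, and conclude that any two distinct maximal triangulated pieces have disjoint edge sets.

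There is no real obstacle here: the whole content is carried by Lemma~\ref{decompositionlem}, and the corollary merely records the fact that ``partitioning the edges'' forces two triangulated pieces to overlap in at most vertices, never in edges. It is worth flagging in the write-up that vertex-disjointness genuinely can fail --- two maximal triangulated pieces may share a vertex of $\tau(W)$, which is exactly the slack that makes the equivalence relation of Definition~\ref{equivdfn} nontrivial --- so edge-disjointness is the sharp statement. If one preferred to interpret ``the maximal triangulated pieces'' as ranging over all maximal triangulated pieces of $\tau(W)$, and not just those occurring in the maximal decomposition, one would additionally note that every maximal triangulated piece does occur in the maximal decomposition; this follows from the same maximality considerations as in the proof of Lemma~\ref{decompositionlem}, since a triangulated piece straddling two members of the decomposition would force one of the splits to have happened at a degree-$3$ vertex of the dual tree, contrary to construction.
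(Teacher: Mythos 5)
Your proposal is correct and is essentially the paper's argument: both derive the corollary immediately from Lemma~\ref{decompositionlem}, you by invoking the edge-partition property in the definition of a decomposition, the paper by noting that the maximal triangulated pieces correspond to edge-disjoint subtrees of the dual tree $T(W)$. Your closing remark about why every maximal triangulated piece of $\tau(W)$ must actually appear as a member of the maximal decomposition is a worthwhile point to flag, and your justification of it correctly mirrors the uniqueness argument already given in the proof of Lemma~\ref{decompositionlem}.
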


\begin{proof}
Consider any two distinct maximal triangulated pieces of $\tau(W)$. Using notation as in the previous proof, these two pieces correspond to subtrees of $T(W)$ and intersect, at most, at a vertex in the interior of $\tau(W)$. Since the subtrees corresponding to the maximal triangulated pieces are edge disjoint, and the edges of $T(W)$ correspond to the edges of $\tau(W)$, this forces the maximal triangulated pieces to be edge disjoint as well.
\end{proof}


Next, we introduce the idea of retriangulations. 
\begin{dfn} \label{retriangulation}
  If $W = (\cP, [n])$ and $W' = (\cP', [n])$ are weakly admissible Wilson loop diagrams, then $\tau(W)$ and $\tau(W')$  differ by a \emph{retriangulation} if there is a bijection $f$ between the maximal triangulated pieces of $\tau(W)$ and the maximal triangulated pieces of $\tau(W')$ such that for any maximal triangulated piece $t$ of $\tau(W)$ the vertex sets of $t$ and $f(t)$ are the same.
\end{dfn}
More casually this definition says that the maximal triangulated pieces of $\tau(W)$ and $\tau(W')$ are the same so far as their vertex sets are concerned (and hence the regions triangulated by each piece are the same) but that the configuration of triangles within some or all of the triangulated pieces may be different. This definition of retriangulation makes sense for any two polygon dissections of the same polygon, not just for those in the image of $\tau$. 
   
We are now in a position to relate the triangulated pieces of $\tau(W)$ to exact subdiagrams of $W$.

For a weakly admissible Wilson loop diagram $W$, every triangulated piece $t$ of $\tau(W)$ corresponds to a subdiagram of $W$ obtained by taking the set of propagators $P$ corresponding to edges of $t$ and then taking the subdiagram $(P, V(P))$.  Conversely, given a subdiagram $(P, V(P))$ of $W$ we can obtain a subgraph of $\tau(W)$, called $t$, following the algorithm below.
\begin{itemize}
\item The vertex set of the subgraph $t$ is
  \begin{itemize}
  \item the vertices of $\tau(W)$ corresponding to edges of $W$ whose end points are elements of $V(P)$ which are cyclically consecutive in $[n]$. (This is significantly easier to interpret in picture form: see Example~\ref{eg:subtrees} below.)
  \end{itemize}
\item The edge set of the subgraph is
  \begin{itemize}
  \item the edges of $\tau(W)$ corresponding to propagators of $P$,
  \item the outer edges of $\tau(W)$ for which both their end points are in the vertex set.
  \end{itemize}
\end{itemize}

\begin{eg}\label{eg:subtrees}
Consider the following Wilson loop diagram $W$ and its corresponding polygon dissection $\tau(W)$. 
\bas W = \begin{tikzpicture}[rotate=67.5,baseline=(current bounding box.east)]
	\begin{scope}
	\drawWLD{10}{1.5}
	\drawnumbers
	\drawprop{1}{0}{4}{0}
	\drawprop{2}{0}{4}{-1}
	\drawprop{4}{1}{8}{0}
	\drawprop{8}{1}{1}{-1}
	\drawprop{5}{0}{7}{0}
\draw (2.7*\angle:0.8*\radius) node {\footnotesize $p$};
\draw (3.3*\angle:0.45*\radius) node {\footnotesize $q$};
\draw (0,-.4) node {\footnotesize $r$};
\draw (.7, 0) node {\footnotesize $s$};
		\end{scope}
	\end{tikzpicture}  \; \qquad
\tau(W) = \begin{tikzpicture}[rotate=67.5,baseline=(current bounding box.east)]
	\begin{scope}
	\drawpolypart{10}{1.5}
    \drawnumbersshift
    \drawchord{1}{4}
    \drawchord{2}{4}
    \drawchord{4}{8}
    \drawchord{8}{1}
    \drawchord{5}{7}
\draw (2.7*\angle:0.8*\radius) node {\footnotesize $p$};
\draw (3.3*\angle:0.47*\radius) node {\footnotesize $q$};
\draw (0,-.4) node {\footnotesize $r$};
\draw (.7, 0) node {\footnotesize $s$};
	\end{scope}
	\end{tikzpicture} \eas

The set of propagators $P= \{p,q,r,s \}$ defines an exact subdiagram of $W$, with $V(P) = \{1,2,3,4,5,8,9\}$.  The corresponding subgraph $t$ of $\tau(W)$ has vertex set $\{1, 2,3, 4, 8\}$. Note that we include vertex 3 of $\tau(W)$ in $t$ because the two endpoints of edge 3 in $W$ (namely the vertices 3 and 4 in $W$) are both in $V(P)$. However, we don't include vertex 5 of $\tau(W)$ in $t$ because only one of the endpoints of edge 5 in $W$ is in $V(P)$. The same argument applies to vertices 7, 9 and 10. Vertex 6 is excluded because none of its endpoints lie in $V(P)$. 

The subgraph $t$ is then 

\bas t = 
\begin{tikzpicture}[rotate=67.5,baseline=(current bounding box.east)]
	\begin{scope}
\pgfmathsetmacro{\n}{10}
\pgfmathsetmacro{\radius}{1.5}
\pgfmathsetmacro{\angle}{360/\n}
    \foreach \i in {1,2,3,4} {
      \draw (\angle*\i+ \angle/2:\radius) node {$\bullet$}; }
    \foreach \i in {2,3,4} {
     \pgfmathsetmacro{\x}{\angle*\i - \angle/2}
      \pgfmathsetmacro{\concave}{((\n-1.5)/\n)}
      \draw (\x:\radius cm) .. controls (\angle *\i: \concave* \radius cm) .. (\x + \angle:\radius cm);
      \draw (\angle*8+ \angle/2:\radius) node {$\bullet$};
}    
\foreach \i in {1,2,3,4,8} {
  \pgfmathsetmacro{\x}{\angle*\i + \angle/2}
  \draw (\x:\radius*1.15) node {\footnotesize \i};
}
    \drawchord{1}{4}
    \drawchord{2}{4}
    \drawchord{4}{8}
    \drawchord{8}{1}
	\end{scope}
	\end{tikzpicture} \eas

The key point to note is that the subgraph $t$ depends on how the propagators of $P$ sit inside $[n]$, not on how they sit in the subdiagram $(P,V(P))$ (whose corresponding graph would have additional outer edges between vertices 8 and 9, 9 and 1, 4 and 5, and 5 and 8).  In particular, $t$ is not the subgraph of $\tau(W)$ consisting only of edges corresponding to propagators of $P$ (as it may also include some of the outer edges of $\tau(W)$), nor is it the polygon dissection of $(P,V(P))$ viewed as a Wilson loop diagram in its own right (which would have vertex set equal to $V(P)$).
\end{eg}

\begin{lem}\label{lem triang to exact}
  Let $W$ be a weakly admissible Wilson loop diagram and $\tau(W)$ its polygon dissection.  The triangulated pieces of $\tau(W)$ correspond to the exact subdiagrams of $W$, via the correspondence described above.
\end{lem}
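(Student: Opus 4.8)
The plan is to prove the two assignments restrict to a bijection by showing directly that, for a subdiagram $(P, V(P))$, the associated subgraph $t \subseteq \tau(W)$ is a triangulated piece precisely when $|V(P)| = |P| + 3$. I would do this by counting the vertices and edges of $t$ and comparing with the standard fact that a triangulation of an $m$-gon has $m-3$ diagonals, hence $2m-3$ edges in all. The technical heart is an identification of $t$ with the polygon dissection $\tau(P, V(P))$ of the subdiagram (viewed as a weakly admissible diagram on its own vertex set $V(P)$) with a handful of vertices removed. Concretely, writing $b$ for the number of maximal runs of consecutive elements of $V(P)$ in the cyclic order on $[n]$ (and $b = 0$ if $V(P) = [n]$), and using that $V(P) = \bigcup_{p\in P} V(p)$ is a union of pairs $\{i,i+1\}$ and hence every run has size $\ge 2$: the $|V(P)|$ vertices of $\tau(P, V(P))$ split into $|V(P)|-b$ ``genuine'' ones (corresponding to edges of $W$ whose two endpoints both lie in $V(P)$) and $b$ ``gap'' ones (corresponding to the edges of $(P,V(P))$ jumping from the end of one run to the start of the next). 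Since a propagator $(i,j)$ joins edges $i$ and $j$ of $W$, both of which have endpoints inside $V(p) \subseteq V(P)$, no diagonal of $\tau(P,V(P))$ meets a gap vertex; so every gap vertex has degree $2$, and the gap vertices are pairwise nonadjacent. Unwinding the construction of $t$ then shows its vertices are the $|V(P)|-b$ genuine vertices and its edges are the $|P|$ propagators together with the $|V(P)|-2b$ outer edges of $\tau(W)$ joining two genuine vertices; I would verify these counts by a run-by-run tally, being careful about the wrap-around case $V(P)=[n]$.

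With this in hand the two directions are short. For the forward direction: if $t$ is a triangulated piece then, since $\tau(W)$ is outerplanar by Lemma~\ref{tausimpleplanarlem} and deleting vertices/edges only enlarges the outer face, $t$ is outerplanar; being connected, bridgeless, with all bounded faces triangular, $t$ is a triangulated polygon on its $v_t = |V(P)|-b$ vertices and therefore has $2v_t-3$ edges. Comparing with the edge count gives $|P| + (|V(P)|-2b) = 2(|V(P)|-b)-3$, i.e.\ $|P| = |V(P)|-3$, so $(P,V(P))$ is exact. For the reverse direction: if $(P,V(P))$ is exact then $\tau(P,V(P))$ is a dissection of the $|V(P)|$-gon by $|V(P)|-3$ noncrossing diagonals, hence a full triangulation; in a triangulated polygon a degree-$2$ vertex is the apex of a triangular ear whose removal leaves a smaller triangulated polygon, so removing the $b$ pairwise nonadjacent gap vertices one at a time leaves a triangulated polygon on $|V(P)|-b$ vertices, which by the identification above is $t$ itself (a single edge, i.e.\ a trivial triangulated piece, when $|V(P)|-b=2$). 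Since a triangulated polygon is connected, bridgeless, and has only triangular bounded faces, $t$ is a triangulated piece. Finally, the assignments are mutually inverse: the diagonals of the subgraph reconstructed from $(P,V(P))$ are labelled by exactly $P$, and for a triangulated piece $t$ both $t$ and the reconstruction of its subdiagram are triangulated polygons on the same $|V(P)|-b$ vertices with $t$ contained in the reconstruction, so they coincide.

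The hard part will be the identification step: carefully lining up the two constructions and pinning down the exact vertex and edge counts $v_t = |V(P)|-b$ and $e_t = |P| + |V(P)| - 2b$. The cases most likely to trip one up are $V(P) = [n]$ (where ``runs'' and ``gaps'' wrap around) and propagators of the form $(i,i+2)$, where already a single propagator yields a triangular rather than a trivial triangulated piece; a few small examples should be checked to make sure the bookkeeping is right. Everything downstream is just Euler's formula for triangulated polygons.
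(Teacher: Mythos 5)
Your proof is correct, and its forward direction is essentially the paper's argument: your $b$ (the number of maximal cyclic runs of $V(P)$) plays exactly the role of the paper's $j$ (resp.\ $c$), the count of connected components of the intersection of $t$ with the outer polygon, and the identity $|P| + (|V(P)|-2b) = 2(|V(P)|-b)-3$ is the same Euler-formula bookkeeping the paper performs. Where you genuinely diverge is in the converse and in the organizing device: you identify $t$ with the polygon dissection $\tau(P,V(P))$ of the subdiagram taken in its own right, observe that the $b$ ``gap'' vertices are pairwise nonadjacent and of degree $2$ (using, correctly, that every run has length at least $2$ because $V(p)$ is a union of consecutive pairs), and then obtain $t$ from the full triangulation $\tau(P,V(P))$ by peeling off $b$ ears. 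The paper never forms $\tau(P,V(P))$; it stays inside $\tau(W)$, counts that $t$ has $m-c$ vertices and that $j-c$ of its propagator edges are internal, and invokes the characterization of maximal outerplanar graphs. Your route gives a more transparent reason why exactness forces $t$ to be triangulated (the subdiagram's own dissection is a triangulation, and deleting degree-$2$ apices preserves this), at the cost of the extra identification step, which you rightly flag as the delicate part; your explicit treatment of the mutual-inverse property is also slightly more careful than the paper's. One shared caveat, not a gap relative to the paper: both you and the paper implicitly take a triangulated piece to have a single polygon as its outer boundary (i.e.\ to be $2$-connected rather than merely connected and bridgeless), which is how the notion is used throughout.
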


\begin{proof}
Let us first record a few elementary facts about polygon triangulations (that is, about triangulations with all vertices on the outer face).  If such a triangulation has $n$ vertices then it has $n$ edges on the polygon (that is, on the outer face) and $n-3$ edges which are not.  No planar simple graph with the same vertices and the same outer face can have more edges than a triangulation, and every such simple graph with $n-3$ edges off the outer face is a triangulation.\footnote{The last sentence is the statement that maximal outerplanar graphs are exactly polygon triangulations (a standard fact that possibly was first written down in \cite{MPhD}, see Theorem 1 of \cite{LMN}), then the earlier facts follow from Euler's formula for planar graphs.}

Since $W$ is weakly admissible, by Lemma \ref{tausimpleplanarlem} $\tau(W)$ is a simple graph. Let $t$ be a triangulated piece of the decomposition of $\tau(W)$ given in Lemma \ref{decompositionlem}.  Note that $t$ cannot be equal to $\tau(W)$ by the definition of admissible diagrams.

If $t$ has 2 vertices then $t$ corresponds to a propagator that connects two non-adjacent edges. Therefore, the trivial triangulation is a trivial exact subdiagram.

Now suppose that $t$ has $m>2$ vertices.  We count how many edges of $t$ are not on the outer face of $\tau(W)$.  These are exactly the edges of $t$ defined by propagators of $W$. Consider the intersection of $t$ with the outer face of $\tau(W)$: this is a possibly disconnected subgraph of the polygon of $\tau(W)$ and this subgraph has $m$ vertices. Call this subgraph $S$, and let $j$ be the number of connected components of $S$.   To join the components of $S$ into the outer face of $t$, $t$ must have $j$ edges in its outer face which are not in the outer face of $\tau(W)$.  Furthermore $t$ has $m-3$ edges not in its outer face and so also not in the outer face of $\tau(W)$.  Thus there are $m-3+j$ edges of $t$ not in the outer face of $\tau(W)$.

Each of these $m-3+j$ edges corresponds to a propagator in $W$.  Call this set of propagators $P$.  Next we count the size of $V(P)$.  Each of the $m$ vertices in the outer face of $t$ corresponds to an edge of $W$. These $m$ edges define $j$ connected components of the outer polygon of $W$. Thus the set $V(P)$ has $m+j$ vertices.  In other words, 
\[|V(P)| = m+j = |P| +3\;.\]
Thus the subdiagram $(P,V(P))$ defined by $t$ is exact.

Conversely, suppose we have an exact subdiagram $(P, V(P))$ of $W = (\cP, [n])$ supported on $|V(P)| = |P|+3$ vertices, and let $t$ be the subgraph of $\tau(W)$ corresponding to $(P,V(P))$.

Suppose $|P|=1$, and let $p$ be the single propagator belonging to $P$. The exactness condition on $(P, V(P))$ says that the four supporting vertices of $p$ are distinct.  If the support of $p$ is four consecutive vertices, then $V(p)$ defines three consecutive boundary edges of $W$, so $t$ is a single triangle, hence a triangulated piece.  If the support of $p$ is not four consecutive vertices, then the vertices which are the ends of $t$ are separated by at least two vertices along the cycle.  This implies that $t$ is a trivial triangulated piece.

Now suppose $|P|>1$.  Let $j=|P|$, $m= j+3=|V(P)|$, and 
suppose that the set $V(P)$ defines $c$ disjoint cyclic intervals of $[n]$. Then $t$ has $m-c$ vertices.  If $t$ were a triangulation, then $t$ would have $j -c$ internal edges, so we calculate the number of internal edges of $t$.

The graph $t$ has $j$ edges that come from propagators, and $m - 2c$ edges that come from the boundary polygon of $\tau(W)$. Since $t$ has $m-c$ vertices, it has $m-c$ external edges, of which $c$ come from propagators. Therefore, of the $j$ edges of $t$ that come from propagators, $j-c$ are internal to the connected component. Therefore, $t$ is a triangulated piece.

\end{proof}

To avoid the issue of exact diagrams being subdiagrams of other exact subdiagrams (for instance, any subdiagram $(q, V(q))$ for $q \in \cP$ is exact), we introduce the notion of maximal exact subdiagrams.

\begin{dfn}
An exact subdiagram $(P, V(P))$ is a {\em maximal exact subdiagram} of $W$ if there is no other exact subdiagram $(Q, V(Q))$ in $W$ that contains $(P,V(P))$ as a strict subdiagram.
\end{dfn}

\begin{cor} \label{uniqueproppartitioncor}
Any weakly admissible Wilson loop diagram $W = (\cP, [n])$ can be uniquely decomposed into maximal exact subdiagrams. These maximal subdiagrams partition $\cP$ and correspond to the maximal triangulated pieces of $\tau(W)$.
\end{cor}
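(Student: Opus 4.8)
The plan is to transport the unique maximal decomposition of $\tau(W)$ from Lemma~\ref{decompositionlem} back across the bijection $\tau$, using Lemma~\ref{lem triang to exact} to translate triangulated pieces into exact subdiagrams. The key preliminary step is to check that the correspondence of Lemma~\ref{lem triang to exact} is inclusion-preserving in both directions, so that a triangulated piece $t$ of $\tau(W)$ is maximal exactly when the exact subdiagram $(P, V(P))$ it determines is maximal. If $t \subseteq t'$ are triangulated pieces, then the propagator set $P$ of $t$ lies in the propagator set $P'$ of $t'$; since both $(P, V(P))$ and $(P', V(P'))$ are exact (Lemma~\ref{lem triang to exact}) and $P \subseteq P'$, we get $V(P) \subseteq V(P')$, so $(P, V(P)) \subseteq (P', V(P'))$ as subdiagrams. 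Conversely, if $(P, V(P)) \subsetneq (Q, V(Q))$ with both exact, then the algorithm preceding Lemma~\ref{lem triang to exact} produces subgraphs $t$ and $t_Q$ with $t \subseteq t_Q$, both triangulated pieces, so $t$ is not maximal. Hence maximality is preserved both ways, and the maximal triangulated pieces of $\tau(W)$ are in bijection with the maximal exact subdiagrams of $W$.

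Next I would show that these maximal exact subdiagrams partition $\cP$. Each propagator $p \in \cP$ gives a trivial exact subdiagram $(\{p\}, V(p))$ (by Definition~\ref{admisdfn}), hence lies in at least one maximal exact subdiagram; on the dissection side this is the statement that every diagonal edge of $\tau(W)$ lies in some maximal triangulated piece, which holds because the maximal decomposition of Lemma~\ref{decompositionlem} partitions all edges of $\tau(W)$ and a diagonal edge cannot be a polygon edge. For disjointness, suppose $p$ lay in two distinct maximal exact subdiagrams; then the corresponding distinct maximal triangulated pieces would share the diagonal edge of $\tau(W)$ labelled by $p$, contradicting edge-disjointness (Corollary~\ref{maxtriangdisjointcor}). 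So distinct maximal exact subdiagrams have disjoint propagator sets, and together they cover $\cP$.

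Finally, uniqueness follows from the uniqueness in Lemma~\ref{decompositionlem}: since $\tau$ is a bijection, any decomposition of $W$ into maximal exact subdiagrams pulls back to the decomposition of $\tau(W)$ into maximal triangulated pieces (together with the polygon edges), which is unique. I expect the main obstacle to be the bookkeeping in the ``maximality preserved both ways'' step -- in particular verifying that the subgraph attached to $P$ by the algorithm before Lemma~\ref{lem triang to exact} really sits inside the subgraph attached to any exact $Q \supseteq P$, and that no triangulated piece strictly containing $t$ can fail to arise from an exact subdiagram; this is exactly where Lemma~\ref{lem triang to exact} carries the load.
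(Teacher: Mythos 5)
Your proof is correct and follows essentially the same route as the paper: combining Lemma~\ref{decompositionlem}, Lemma~\ref{lem triang to exact}, and Corollary~\ref{maxtriangdisjointcor} to transport the unique maximal decomposition of $\tau(W)$ back to $W$. The paper's own proof is a one-sentence appeal to those three results; your explicit verification that the correspondence preserves inclusions (and hence maximality) in both directions is exactly the detail the paper leaves implicit.
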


\begin{proof}
Combining Lemmas \ref{decompositionlem} and \ref{lem triang to exact} yields the unique decomposition into maximal exact subdiagrams, and Corollary \ref{maxtriangdisjointcor} ensures that no propagator appears in more than one subdiagram in this decomposition. Since the diagonal edges of $\tau(W)$ correspond to the propagators of $W$, the decomposition of $\tau(W)$ induces a partition of $\cP$.
\end{proof}

Finally, we may express equivalent Wilson loop diagrams in terms of retriangulations.

\begin{cor}\label{equialentretriangulation}
Two Wilson loop diagrams $W$ and $W'$ are equivalent if and only if $\tau(W)$ and $\tau(W')$ differ by retriangulations.
\end{cor}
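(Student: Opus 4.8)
The plan is to transport the statement through the bijection $\tau$, using the dictionary already established: exact subdiagrams correspond to triangulated pieces (Lemma~\ref{lem triang to exact}), maximal exact subdiagrams correspond to maximal triangulated pieces, and both $W$ and $\tau(W)$ decompose uniquely and compatibly (Corollary~\ref{uniqueproppartitioncor}). A preliminary observation to dispatch first: ``differ by a retriangulation'' (Definition~\ref{retriangulation}) is itself an equivalence relation on the dissections of a fixed polygon --- reflexivity and symmetry are immediate, and transitivity follows by composing the vertex-set-preserving bijections on maximal triangulated pieces --- so there is no distinction between ``differ by a retriangulation'' and ``differ by retriangulations''.

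The heart of the argument is to translate a single step of the relation that generates $\sim$. Suppose $W$ and $W'$ are related as in Definition~\ref{equivdfn} via exact subdiagrams $(P,V(P))\subseteq W$ and $(P',V(P'))\subseteq W'$ with $V(P)=V(P')$ and identical complements; note $|P|=|V(P)|-3=|V(P')|-3=|P'|$ comes for free from exactness. Under $\tau$, the dissections $\tau(W)$ and $\tau(W')$ share all diagonals coming from $\cP\setminus P=\cP'\setminus P'$, and the $P$-diagonals of $\tau(W)$ (which, together with the appropriate polygon edges, constitute the triangulated piece $t_P$) are simply replaced by the $P'$-diagonals (constituting $t_{P'}$). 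The vertex set of $t_P$ consists of the edges $e$ of $W$ with both endpoints $e,e+1$ in $V(P)$, so it depends only on $V(P)=V(P')$; hence $t_P$ and $t_{P'}$ have the same vertices and triangulate the same sub-polygon $R$, and $\tau(W)$ and $\tau(W')$ agree except that the interior of $R$ is triangulated differently. Since $R$ is all-triangular in both dissections, no non-triangular internal face is created or destroyed; reading off the maximal decomposition from the dual tree as in the proof of Lemma~\ref{decompositionlem} (split $T(W)$ at the non-triangular faces), the maximal triangulated pieces therefore agree away from $R$ and the unique one meeting $R$ is swapped for a piece with the same vertex set. In other words, one generating step of $\sim$ is exactly one retriangulation.

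Granting this, the forward implication is immediate: if $W\sim W'$ there is a finite chain of generating steps from $W$ to $W'$, each consecutive pair of dissections differs by a retriangulation, and transitivity finishes it. For the converse, suppose $\tau(W)$ and $\tau(W')$ differ by a retriangulation, witnessed by a bijection $f$ on maximal triangulated pieces, and list the maximal triangulated pieces of $\tau(W)$ as $t_1,\dots,t_\ell$ (their regions are pairwise interior-disjoint by Corollary~\ref{maxtriangdisjointcor}). Process them one at a time, replacing $t_i$ by $f(t_i)$ inside its region: each replacement yields a polygon dissection, hence --- by bijectivity of $\tau$ --- the dissection of a unique weakly admissible diagram; by the same non-triangular-face bookkeeping it leaves the rest of the maximal decomposition untouched and installs a maximal triangulated piece with the same vertex set; and so, via the corresponding maximal exact subdiagrams, it is a generating step of $\sim$ between consecutive diagrams. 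After all $\ell$ replacements the dissection has the same polygon edges and the same maximal triangulated pieces (positions and internal triangulations) as $\tau(W')$, so it equals $\tau(W')$ by injectivity of $\tau$; hence $W\sim W'$.

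The step I expect to be the main obstacle, and the one that deserves a careful write-up, is the claim used in both directions: that re-triangulating the region of a (not necessarily maximal) triangulated piece leaves the rest of the maximal-triangulated-piece decomposition unchanged and merely swaps in a same-vertex-set piece where it acts. This is precisely where the dual-tree characterization of Lemma~\ref{decompositionlem} does the work --- the splitting is at the non-triangular internal faces, and such faces are neither created nor destroyed by re-triangulating a region that was already all-triangular, nor is any face outside that region disturbed. A minor secondary point is to confirm that both the region a triangulated piece triangulates and its number of propagators are determined by $V(P)$ alone; this is immediate from the exactness identity $|V(P)|=|P|+3$ together with the explicit description of the vertex set of $t_P$.
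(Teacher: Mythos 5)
Your proof is correct and follows essentially the same route as the paper, which simply cites Definition~\ref{equivdfn}, Definition~\ref{retriangulation}, and Lemma~\ref{lem triang to exact} and leaves the details to the reader. You have merely unpacked that citation carefully --- in particular the observation that re-triangulating an already all-triangular region neither creates nor destroys non-triangle faces, so the maximal decomposition is undisturbed elsewhere --- which is exactly the bookkeeping the paper's one-line proof implicitly relies on.
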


\begin{proof}
This follows from the definition of equivalent Wilson loop diagrams (Definition \ref{equivdfn}), the definition of retriangulations (Definition \ref{retriangulation}), and Lemma \ref{lem triang to exact}.
\end{proof}

\subsection{Matroidal properties of exact subdiagrams \label{sec: exact diagram matroidal props}}

Since Corollary \ref{uniqueproppartitioncor} allows us to decompose any admissible Wilson loop diagram into a collection of maximal exact subdiagrams, we are motivated to examine the matroid properties of exact subdiagrams more closely. In this section, we prove two main results.
\begin{enumerate}
\item the matroid associated to an exact subdiagram of $W$ can be written as a contraction of the matroid $M(W)$ by the complementary propagator flat (Theorem \ref{exact diagrams contractions}),
\item the matroid associated to an exact subdiagram is uniform (Theorem \ref{exactuniformthm}).
\end{enumerate}

We begin by proving some useful facts about propagator flats, and flats of matroids associated to admissible Wilson loop diagrams more generally. Recall from Definition~\ref{VPropdfn} that the propagator flat $F(P)$ of $P \subseteq \cP$ denotes the set $V(P^c)^c$ of vertices that only support propagators in $P$ or are non-supporting.

\begin{lem} \label{lem decompose flat}Let $F$ be a flat in $M(W)$, and let $C \subseteq F$ be the union of all circuits contained in $F$. Then the following are true:
\begin{enumerate}
\item $C = F(\Prop (C))$, i.e. $C$ is a propagator flat,
\item $F \setminus C$ is an independent set. Furthermore, $F\setminus C$ is an independent flat if and only if $F(\emptyset) = \emptyset$, that is, $W$ has no non-supporting vertices.  
\end{enumerate}
\end{lem}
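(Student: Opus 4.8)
The plan is to treat the two items separately, in both cases exploiting that $C$ is the union of \emph{all} circuits lying inside the flat $F$, together with the structural facts in Lemma~\ref{lem facts about WLD matroids}. Throughout, write $\mathrm{cl}(\cdot)$ for matroid closure, and recall that for an independent set $A$ with $v\in\mathrm{cl}(A)\setminus A$, the set $A\cup\{v\}$ contains a unique circuit (the \emph{fundamental circuit} of $v$ over $A$), and this circuit contains $v$.

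For item~1, I would begin with the routine inclusion $C\subseteq F(\Prop(C))$: if $v\in C$ then $v$ lies in some circuit $C'\subseteq C$, so any propagator supported at $v$ meets $C$ and hence belongs to $\Prop(C)$; thus $v$ supports no propagator of $\cP\setminus\Prop(C)$, which by Definition~\ref{VPropdfn} says exactly that $v\in F(\Prop(C))$. The substantive point is that $C$ is a flat. I would prove this directly: if $v\in\mathrm{cl}(C)$ then $v\in\mathrm{cl}(F)=F$ since $F$ is a flat, and taking a maximal independent subset $B$ of $C$ we have $v\in\mathrm{cl}(B)$; were $v\notin C$, the fundamental circuit of $v$ over $B$ would be a circuit contained in $B\cup\{v\}\subseteq F$, hence contained in $C$ by the definition of $C$, forcing $v\in C$, a contradiction. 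So $C$ is a flat and, being a union of circuits, a cycle; i.e.\ $C$ is a cyclic flat, hence a propagator flat $C=F(P)$ for some $P\subseteq\cP$ by item~5 of Lemma~\ref{lem facts about WLD matroids}. Finally $\Prop(C)=\Prop(F(P))\subseteq P$ by Remark~\ref{alt F(P) rmk}, and since $P\mapsto F(P)=V(P^c)^c$ is inclusion-preserving we get $F(\Prop(C))\subseteq F(P)=C$; together with the inclusion above this yields $C=F(\Prop(C))$.

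For item~2, independence of $F\setminus C$ is immediate: a circuit contained in $F\setminus C$ would be a circuit contained in $F$, hence contained in $C$, contradicting disjointness from $C$ (circuits being nonempty). For the flat criterion, first assume $F(\emptyset)=\emptyset$, i.e.\ $M(W)$ has no loops. If $F\setminus C$ were not a flat, choose $v\in\mathrm{cl}(F\setminus C)\setminus(F\setminus C)$; then $v\in\mathrm{cl}(F)=F$, so (as $C\subseteq F$) we have $v\in C$. Since $F\setminus C$ is independent, the fundamental circuit of $v$ over $F\setminus C$ is a circuit contained in $(F\setminus C)\cup\{v\}\subseteq F$, hence contained in $C$; but all of its members other than $v$ lie in $F\setminus C$, which is disjoint from $C$, so this circuit equals $\{v\}$, making $v$ a loop, a contradiction. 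Hence $F\setminus C$ is a flat. Conversely, if $F(\emptyset)\neq\emptyset$ then, since $F(\emptyset)=\mathrm{cl}(\emptyset)$ is the set of loops (the remark following Lemma~\ref{lem facts about WLD matroids}), it is contained in every flat; in particular $F(\emptyset)\subseteq F$, and each loop $v$ gives a one-element circuit $\{v\}\subseteq F$, so $F(\emptyset)\subseteq C$. Thus $F\setminus C$ fails to contain the nonempty subset $F(\emptyset)$ of the loops, so it cannot be a flat, and in particular is not an independent flat.

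The step I expect to need the most care is showing that $C$, the union of \emph{all} circuits inside $F$, is itself a flat: this is exactly where the ``all circuits in $F$'' hypothesis and the flatness of $F$ are both used, via the fundamental-circuit argument. An alternative route to $C=F(\Prop(C))$ in item~1 goes through ranks, using $\rk F(\Prop(C))\leq|\Prop(F(\Prop(C)))|\leq|\Prop(C)|=\rk C$ from Lemma~\ref{lem facts about WLD matroids}(1),(2) and Remark~\ref{alt F(P) rmk} together with Lemma~\ref{lem facts about WLD matroids}(4) that $F(\Prop(C))$ is a flat; but this still relies on knowing $\mathrm{cl}(C)=C$, so the flatness step is unavoidable. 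Everything else is bookkeeping with the operators $\Prop$ and $F(\cdot)$ and the description of loops as the rank-$0$ flat.
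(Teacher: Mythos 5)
Your proof is correct. Item 2 and the easy inclusion $C \subseteq F(\Prop(C))$ of item 1 follow essentially the same lines as the paper. The interesting divergence is in the hard inclusion $F(\Prop(C)) \subseteq C$: the paper argues directly and ``locally,'' taking a hypothetical $w \in F(\Prop(C)) \setminus C$ and a maximal independent $B \subseteq C$, then using the Wilson-loop-specific rank facts ($\rk(C) = |\Prop(C)|$ for cycles and $\rk(V) \leq \min\{|V|, |\Prop(V)|\}$ from Lemma~\ref{lem facts about WLD matroids}) to show $B \cup \{w\}$ is dependent and hence contains a circuit in $F$ through $w$, forcing $w \in C$. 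You instead first establish the purely matroid-theoretic fact that the union of all circuits in a flat is itself a flat (via fundamental circuits), conclude that $C$ is a cyclic flat, and then import item~5 of Lemma~\ref{lem facts about WLD matroids} (cyclic flats are propagator flats, i.e.\ \cite[Lemma 3.28]{wilsonloop}) together with $\Prop(F(P)) \subseteq P$ and the monotonicity of $P \mapsto F(P)$ to pin down $C = F(\Prop(C))$. Your route buys a cleaner separation between general matroid theory and the Wilson-loop-specific input, and the intermediate fact that $C$ is a flat is reusable; the cost is a dependence on the external result that cyclic flats are propagator flats, which the paper's argument avoids by working directly with propagator counts. Both arguments also handle the degenerate case $C = \emptyset$ correctly (yours uniformly, the paper's by noting the statement is then immediate).
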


\begin{proof}
(1) If $F$ is an independent flat, then $C = \emptyset$ and the statement is trivially true. Now suppose that $F$ is a dependent set, so $C$ is non-empty.

Let $v \in C$. Clearly $\Prop(v) \subseteq \Prop(C)$, and hence $F(\Prop(v)) \subseteq F(\Prop(C))$ (this is easily verified by applying the definition and simplifying out all of the complements). Since $v \in F(\Prop(v))$ by the definition of propagator flat, it follows that $v \in F(\Prop(C))$. Thus $C \subseteq F(\Prop(C))$.

Now suppose there exists some $w \in  F(\Prop(C)) \setminus C$. Let $B$ be an independent subset of $C$ of maximal rank. We first show that $B \cup \{w\}$ is a dependent set in $M(W)$. By Lemma \ref{lem facts about WLD matroids}, we have
\begin{equation}\label{eq:equality in lem decompose flat} |\Prop(C)| = \rk(C) = |B| = \rk(B) .\end{equation}
Note that $\rk(B) \leq |\Prop(B)|$, and $B \subseteq C$ implies that $\Prop(B) \subseteq \Prop(C)$. Combining these facts with equation \eqref{eq:equality in lem decompose flat}, it follows that $\Prop(B) = \Prop(C)$. Combining this with the fact that $w \in F(\Prop(C))$, we have $\Prop(B\cup \{w\}) \subseteq \Prop(B)$. We can now say something about $\rk(B\cup \{w\})$: by Lemma \ref{lem facts about WLD matroids}, we have 
\[
\rk (B \cup \{w\}) \leq \min\{|B\cup \{w\}|, |\Prop(B\cup \{w\})|\} \leq \min\{|B|+1,|\Prop(B)|\},
\]
which becomes 
\[\rk(B\cup \{w\}) \leq \min\{|B|+1,|B|\} = |B| < |B \cup\{w\}|\]
via equation \eqref{eq:equality in lem decompose flat}. Therefore $B\cup\{w\}$ is a dependent set, which implies that $B\cup \{w\}$ contains a circuit $D$. In particular we must have $w \in D$ (since $B$ was an independent set), and since we have already shown that $C \subseteq F(\Prop(C))$, we also have the chain of inclusions 
\[D \subseteq B\cup \{w\} \subseteq F(\Prop(C)) \subseteq F.\]
Since $C$ was the union of all circuits in $F$, we finally obtain $w \in C$, which is a contradiction. This completes the proof of (1).

For part (2), first note that $F \setminus C$ is automatically independent as it contains no circuits.

Note that since $F(\emptyset)$ has rank $0$, $F(\emptyset)$ is contained in every flat of $W$. Therefore, if $F(\emptyset) \neq \emptyset$, then $F\setminus C$ cannot be a flat. 

Now suppose that $F(\emptyset) = \emptyset$. We must show that $F\setminus C$ is a flat. For any $e \not\in F$ we certainly have $\rk((F\setminus C)\cup\{e\}) = \rk(F\setminus C) +1$, since $F$ is a flat. Now let $e \in C$, and suppose that $\rk((F\setminus C)\cup\{e\}) = \rk(F\setminus C)$. This implies that $(F\setminus C)\cup \{e\}$ is dependent, and hence contains a circuit. Since $F(\emptyset) = \emptyset$ this circuit must contain at least two elements: one of these elements is $e$, but any others must come from $F \setminus C$. This contradicts the fact that $C$ was the union of all circuits in $F$. 

Thus $\rk((F\setminus C)\cup \{e\}) = \rk(F\setminus C) + 1$ for any $e \not\in F\setminus C$, and hence $F\setminus C$ is a flat.
\end{proof}

\begin{cor} \label{classifyflats}
  If $F$ is a flat of a weakly admissible Wilson loop diagram, it can be written as the disjoint union of a cyclic propagator flat $C$ and an independent set $S$: \bas F =C \sqcup S \eas where $C$ the union of all circuits in $F$. \end{cor}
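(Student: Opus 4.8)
The plan is to read this off directly from Lemma~\ref{lem decompose flat}, which has already done the substantive work. Given a flat $F$ of $M(W)$, I would set $C$ to be the union of all circuits contained in $F$ and put $S = F \setminus C$, so that $F = C \sqcup S$ holds by construction. It then remains only to check that $S$ is independent and that $C$ is a cyclic propagator flat.

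For the first claim, part~(2) of Lemma~\ref{lem decompose flat} states exactly that $F \setminus C$ is independent, so $S$ is independent. For the second claim, part~(1) of the same lemma gives $C = F(\Prop(C))$, i.e.\ $C$ is a propagator flat; by item~4 of Lemma~\ref{lem facts about WLD matroids} every propagator flat is genuinely a flat of $M(W)$. Since $C$ is by its very definition a union of circuits, it is a cycle, and hence $C$ is a cyclic flat --- more precisely, a cyclic propagator flat. Assembling these observations gives the desired decomposition $F = C \sqcup S$.

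I do not anticipate any real obstacle: the corollary is essentially a repackaging of Lemma~\ref{lem decompose flat}, and the only place meriting a moment's attention is confirming that ``$C$ is a union of circuits'' (immediate from the definition of $C$) combined with part~(1) yields both cyclicity and flatness of $C$ at once. If desired, one could additionally remark that the decomposition is forced --- $C$ must absorb every circuit of $F$, and no element of an independent complement can lie in a circuit of $F$ --- but the statement as given asks only for existence.
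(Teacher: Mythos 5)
Your proposal is correct and matches the paper's own proof, which likewise defines $C$ as the union of all circuits in $F$, sets $S = F \setminus C$, and invokes Lemma~\ref{lem decompose flat} parts (1) and (2) to conclude. The extra remarks about cyclicity being immediate from the definition of $C$ are fine and consistent with the paper's (terser) argument.
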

 
\begin{proof}
By Lemma \ref{lem decompose flat} the set $C = F(\Prop(C))$ is a propagator flat, and $S := F \setminus C$ is independent.
\end{proof}

In particular, any propagator flat can be written as a union of a cyclic propagator flat and an independent set.

Next we examine properties of propagator flats associated to the complement of Wilson Loop diagrams.





\begin{lem} \label{maxexactcomplementrank}
Let $W = (\cP, [n])$ be an admissible Wilson loop diagram, and $P \subseteq \cP$. Then, \begin{enumerate}
\item if $(P,V(P))$ is an exact subdiagram of $W$, then $F(P^c)$ is a dependent flat in $W$,
\item if $(P,V(P))$ is a maximal exact subdiagram of $W$, then $F(P^c)$ is a cyclic flat in $W$,
\item if $(P,V(P))$ is an exact subdiagram of $W$, then $\rk(F(P^c)) = |P^c|$.
\end{enumerate} 
\end{lem}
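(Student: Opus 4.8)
The plan starts from the observation that, by Definition~\ref{VPropdfn}, $F(P^c) = V((P^c)^c)^c = V(P)^c$, so the flat in question is simply the set of vertices that lie in no propagator of $P$. By Lemma~\ref{lem facts about WLD matroids}(4) it is a flat, and since $\Prop(F(P^c)) \subseteq P^c$ by Remark~\ref{alt F(P) rmk}, Lemma~\ref{lem facts about WLD matroids}(1) already gives the upper bound $\rk(F(P^c)) \le |\Prop(F(P^c))| \le |P^c|$. The engine behind the lower bounds needed below is Hall's marriage theorem, applied to the bipartite graph $G$ whose two parts are $P^c$ and $V(P)^c$, in which $q \in P^c$ is adjacent to $v \in V(P)^c$ exactly when $v \in V(q)$; the neighbourhood of a subset $P'' \subseteq P^c$ in $G$ is then precisely $V(P'') \setminus V(P)$. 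Part~1 is then immediate: exactness gives $|F(P^c)| = n - |V(P)| = n - |P| - 3$, while the first admissibility condition of Definition~\ref{admisdfn}, namely $n \ge |\cP| + 4$, gives $|F(P^c)| \ge |P^c| + 1 > |P^c| \ge \rk(F(P^c))$, so $F(P^c)$ is a nonempty, non-independent flat, i.e.\ a dependent flat.

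For part~3 I would prove the reverse inequality $\rk(F(P^c)) \ge |P^c|$. The crucial estimate is that for every nonempty $P'' \subseteq P^c$,
\[
|V(P'') \setminus V(P)| = |V(P'' \cup P)| - |V(P)| \ge (|P''| + |P| + 3) - (|P| + 3) = |P''| ,
\]
where the inequality applies the second admissibility condition to the nonempty propagator set $P'' \cup P$ and uses exactness $|V(P)| = |P| + 3$. Hence $G$ satisfies Hall's condition, so there is a matching of $G$ saturating $P^c$; equivalently there is an injection $\phi\colon P^c \hookrightarrow V(P)^c$ with $\phi(q) \in V(q)$ for all $q$. Let $U = \phi(P^c)$, so $|U| = |P^c|$ and $U \subseteq F(P^c)$. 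For any $U' \subseteq U$ the map $u \mapsto \phi^{-1}(u)$ injects $U'$ into $\Prop(U')$, so $|\Prop(U')| \ge |U'|$, and by Theorem~\ref{thm WLD defines matroid} the set $U$ is independent in $M(W)$. Therefore $\rk(F(P^c)) \ge |P^c|$, and together with the upper bound this yields $\rk(F(P^c)) = |P^c|$.

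For part~2 I would strengthen the estimate using maximality. If $(P,V(P))$ is a maximal exact subdiagram, then for any nonempty $P'' \subseteq P^c$ the subdiagram $(P \cup P'', V(P \cup P''))$ strictly contains $(P,V(P))$, hence is not exact, and so the second admissibility condition forces $|V(P \cup P'')| \ge |P \cup P''| + 4$; the computation above then improves to $|V(P'') \setminus V(P)| \ge |P''| + 1$. Consequently, for any single vertex $e \in F(P^c)$, deleting $e$ from each neighbourhood still leaves $|(V(P'') \setminus V(P)) \setminus \{e\}| \ge |P''|$, so Hall's condition holds for the graph with vertex part $V(P)^c \setminus \{e\}$ and the argument of part~3 produces an independent set of size $|P^c|$ inside $F(P^c) \setminus \{e\}$. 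Thus $\rk(F(P^c) \setminus \{e\}) = \rk(F(P^c))$, which means $e$ is not a coloop of $M(W)$ restricted to $F(P^c)$ and therefore lies on some circuit contained in $F(P^c)$. Since $e \in F(P^c)$ was arbitrary, $F(P^c)$ is a union of circuits; being also a flat, it is a cyclic flat.

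The step I expect to be the main obstacle is the matching argument in part~3: one has to recognise that the correct bipartite graph to feed Hall's theorem has vertex side $V(P)^c$ — not $V(P)$, whose rank is in general strictly larger than $|P|$, since vertices of $V(P)$ may support propagators outside $P$ — and that Hall's condition for this graph is exactly the assertion that every nonempty $P'' \subseteq P^c$ drags in at least $|P''|$ vertices outside $V(P)$. That inequality is the only place exactness is used in an essential way, via the comparison $|V(P'' \cup P)| \ge |P''| + |P| + 3 = |P''| + |V(P)|$. Once this is in place, part~1 is just a size count, and part~2 is the routine observation that maximality upgrades the bound by one, which is exactly what keeps Hall's condition valid after removing a single vertex.
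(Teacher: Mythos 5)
Your proof is correct, but it takes a genuinely different route from the paper's. For part 3 you apply Hall's marriage theorem to the bipartite graph between $P^c$ and $V(P)^c$, with the key estimate $|V(P'')\setminus V(P)| = |V(P''\cup P)| - |V(P)| \ge |P''|$ coming from the second admissibility condition plus exactness; the resulting transversal is an independent set of size $|P^c|$ inside $F(P^c)$ by Theorem~\ref{thm WLD defines matroid}, and the upper bound $\rk(F(P^c)) \le |\Prop(F(P^c))| \le |P^c|$ closes the argument. This handles all exact subdiagrams uniformly. The paper instead proves part 2 first, by decomposing $F(P^c) = C \sqcup S$ into its cyclic part and an independent part (Lemma~\ref{lem decompose flat}), running a chain of rank inequalities to show that $Q = \Prop(C)^c$ defines an exact subdiagram containing $(P,V(P))$, and invoking maximality to force $S = \emptyset$; part 3 then follows for maximal exact subdiagrams from part 2, and is extended to general exact subdiagrams by exhibiting $V(P)\setminus V(R)$ as an independent set. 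Your part 2 is also different: maximality upgrades the Hall bound to $|V(P'')\setminus V(P)| \ge |P''|+1$, so Hall's condition survives the deletion of any single vertex $e$, whence $\rk(F(P^c)\setminus\{e\}) = \rk(F(P^c))$, $e$ is not a coloop of the restriction, and every element lies on a circuit in $F(P^c)$. Both arguments are valid; yours is more self-contained (it bypasses Lemma~\ref{lem decompose flat} entirely and proves part 3 without routing through maximal subdiagrams), while the paper's version yields as a byproduct the fact that the complementary propagator set $Q$ itself spans an exact subdiagram, which ties into the maximal-decomposition structure used elsewhere.
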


\begin{proof}

First note that if $(P,V(P))$ is exact then the admissibility of $W$ guarantees that $V(P) \neq [n]$, so $F(P^c) = V(P)^c$ is not empty. 

We know from Lemma~\ref{lem facts about WLD matroids} that $F(P^c)$ is always a flat, so for item 1 we only need to show that $F(P^c)$ is dependent.

Since $W$ is admissible, we have $n \geq |\cP| + 4$. Rewriting this inequality as
\[|V(P)| + |F(P^c)|  \geq  |P| + |P^c| +4,\]
and combining it with the fact that $|V(P)| = |P| + 3$ (from the exactness of $(P,V(P))$), we obtain
\begin{equation}\label{eq F not independent}|F(P^c)| > |P^c| \;.\end{equation}
Equation \eqref{eq F not independent} implies that $F(P^c)$ supports fewer propagators than the number of vertices it contains, i.e. $F(P^c)$ is not an independent set.

For item 2, suppose that $(P, V(P))$ is a maximal exact subdiagram of $W$. By Lemma \ref{lem decompose flat} item 1 and Corollary \ref{classifyflats} we can decompose $F(P^c)$ as
\begin{equation}\label{eq decompose F}V(P)^c = F(P^c) = C \sqcup S = F(\Prop(C)) \sqcup S,\end{equation}
where $C$ is the largest cyclic flat contained in $F(P^c)$ and $S$ is an independent set. We show that $\Prop(C) = P^c$, thus forcing $S$ to be empty and implying that $F(P^c) = C$, i.e. $F(P^c)$ is a cyclic flat. Note that $C$ must be non-empty since $F(P^c)$ is non-empty and dependent (by item 1). 

By similar arguments to Lemma~\ref{lem decompose flat}, the fact that $S$ is independent and that $C$ is the union of all circuits in $F(P^c)$ implies that every element of $S$ is independent of $C$. Further, we know that $\rk(C) = |\Prop(C)|$ from Lemma \ref{lem facts about WLD matroids}(2), and $\rk(S) = |S|$ from Lemma \ref{lem facts about WLD matroids} item 1 and the independence of $S$.  Combining these gives
\ba \rk\big(F(P^c)\big) = \rk(C) + \rk(S)  = |\Prop(C)| + |S| \leq |\Prop(F(P^c))|\;,\label{rank of comp flat} \ea 
where the inequality follows from the bound on ranks of vertex sets in Lemma \ref{lem facts about WLD matroids} item 1. For ease of future notation, write $Q = \Prop(C)^c$. 

From Remark \ref{alt F(P) rmk} it follows that $\Prop (F(P^c)) \subseteq P^c$, and so equation \eqref{rank of comp flat} implies \bas |Q^c| + |S| \leq |P^c|  \;.\eas 

Rearranging and adding $|P|$ to both sides gives \ba |P| + |S| \leq |P| + |P^c| - |Q^c| = |Q| \;.\label{P+Sbound}\ea 
Meanwhile, by taking complements of equation \eqref{eq decompose F} it follows that
\bas V(P) \sqcup S = C^c = F(\Prop(C))^c  = V(Q)\;. \eas
Combining this with equation \eqref{P+Sbound} gives 
\ba |V(Q)| = | V(P) | + |S| = | P | + 3 + |S| \leq |Q| +3 \;. \label{Q defines exact}\ea
Since $W$ is admissible this cannot be a strict inequality, so we have $|V(Q)| = |Q| +3$ and the subdiagram $(Q, V(Q))$ is exact.

Since $\Prop(C) \subseteq \Prop(F(P^c)) \subseteq P^c$, it follows that $P \subseteq \Prop(C)^c = Q$. In other words, $(P, V(P))$ is a subdiagram of $(Q, V(Q))$. Since $(P, V(P))$ is a maximal exact subdiagram by assumption, we must have $P=Q$. Recalling that $Q = \Prop(C)^c$, we finally obtain $\Prop (C) = P^c$ and hence $S = \emptyset$. This completes the proof of item 2.

For item 3, first note that when $(P,V(P))$ is maximal exact, item 2 of this Lemma combined with Lemma \ref{lem facts about WLD matroids} gives that
\bas\rk(F(P^c)) = |\Prop(F(P^c))| = |P^c|.\eas

To prove the result in the general case, let $(R,V(R))$ be an exact subdiagram of $W$ that is not maximal and let $(P, V(P))$ be the maximal exact subdiagram containing it. That is, $R \subsetneq P$.

Since $R \subsetneq P$ we can write 
\begin{equation}\label{eq: relate P and R diagrams}V(P) = V(R) \sqcup S, \quad \text{ where } S := V(P) \setminus V(R).\end{equation}
Since $P$ and $R$ both define exact subdiagrams, it follows that $|V(P)| = |V(R)| + |P \setminus R|$, and hence that $S$ is a vertex set of size $|P \setminus R|$. Furthermore, by taking complements of both sides of equation~\eqref{eq: relate P and R diagrams}, we see that $F(R^c) = F(P^c) \sqcup S$. 

We will show that $S$ is an independent set in the subdiagram $(P, V(P))$, and thus in $W$. This will imply that for any maximal independent set $T \subseteq F(P^c)$, the set $T \sqcup S$ is independent in $F(R^c)$, and hence that 
\bas \rk (F(R^c)) \geq |T| + |S| = \rk (F(P^c)) + |S| = |P^c| + |P\setminus R| = |R^c|  \;.\eas 
The reverse inequality $\rk (F(R^c)) \leq |R^c|$ always holds (see Lemma \ref{lem facts about WLD matroids}), so if $S$ is independent in $(P,V(P))$ we obtain $\rk (F(R^c)) = |R^c|$ as desired.

It remains to show that $S$ is an independent set in the subdiagram $(P, V(P))$.  By way of contradiction suppose otherwise. Then by Theorem \ref{thm WLD defines matroid} there is a subset $U \subseteq S$ such that $|\Prop(U)| < |U|$. Then the remaining propagators $P \setminus \Prop(U)$ are supported entirely on $V(P) \setminus U$, i.e. \bas  V(P \setminus \Prop(U)) \subseteq V(P) \setminus U\; .\eas Comparing sizes of these sets, we get \bas |V(P \setminus \Prop(U))| \leq |V(P)| - |U| = |P| +3 -|U| < |P| +3 -|\Prop(U)| = |P \setminus \Prop(U)| +3\; ,\eas which violates admissibility. Therefore $S$ must be independent in $(P, V(P))$ after all, and (as described above) it follows that $\rk(F(R^c)) = |R^c|$.

\end{proof}

We are now in a position to describe the matroid structure of exact subdiagrams $(P, V(P))$ of $W$ in terms of $M(W)$ and $F(P^c)$. We begin with a definition.

\begin{dfn}\label{matroid contraction}
Let $M = (E,\cB)$ be a matroid, and $S \subseteq E$. The {\em contraction} of $M$ by $S$ is the matroid $M/S = (E \setminus S, \cB / S)$, where
\[\cB / S = \{B \setminus S \ \big| \ |B\cap S | \text{ is maximal amongst all }B \in \cB\}.\]
\end{dfn}

In \cite{wilsonloop}, Agarwala and Marin-Amat show that certain subdiagrams of $W$ can be realized as contractions of $M(W)$, which we rephrase using the notation of this paper:

\begin{lem} \label{contractsubdiaglem} \cite[Corollary 3.33]{wilsonloop} 
Let $W = (\cP, [n])$ be a Wilson loop diagram and $P \subseteq \cP$. The set $V(P)^c$ has rank $|P^c|$ (i.e. $F(P^c)$ is a propagator flat of maximal rank), if and only if the matroid defined by the subdiagram $(P, V(P))$ is equal to the contraction $M(W)/V(P)^c$.
\end{lem}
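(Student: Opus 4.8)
The plan is to prove the two implications separately. Throughout write $S := V(P)^c$ and record a few preliminaries. Since $V(P)^c = F(P^c)$, Lemma~\ref{lem facts about WLD matroids} tells us $S$ is a flat of $M(W)$ with $\rk(S)\le|\Prop(S)|\le|P^c|$ (the last inequality from Remark~\ref{alt F(P) rmk}), so ``$F(P^c)$ is a propagator flat of maximal rank'' means exactly $\rk(S)=|P^c|$; moreover $M(W)/S$ has ground set $[n]\setminus S = V(P)$, which is also the ground set of the subdiagram $(P,V(P))$, so at least the two matroids live on the same set. The two facts I will lean on to move the independence criterion of Theorem~\ref{thm WLD defines matroid} back and forth across the contraction are: for $U\subseteq V(P)$ the propagators of the subdiagram $(P,V(P))$ supported on $U$ are precisely $\Prop(U)\cap P$, while $\Prop(S)\subseteq P^c$; together with $P\cap P^c=\emptyset$ these split the ``propagator budget'' cleanly between $S$ and $V(P)$. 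With this in hand, \emph{necessity} is just rank bookkeeping: if $M(P,V(P)) = M(W)/S$ then their ranks agree, and since $M(P,V(P))$ has rank $|P|$ (a weakly admissible diagram with $|P|$ propagators) while $M(W)/S$ has rank $\rk([n])-\rk(S)=k-\rk(S)$, the identity $k=|P|+|P^c|$ forces $\rk(S)=|P^c|$.

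For \emph{sufficiency}, assume $\rk(S)=|P^c|$. Since the maximum of $|B\cap S|$ over bases $B$ of $M(W)$ equals $\rk(S)=|P^c|$, Definition~\ref{matroid contraction} says the bases of $M(W)/S$ are exactly the sets $B\cap V(P)$ for $B$ a basis of $M(W)$ with $|B\cap S| = |P^c|$; any such $B$ splits as $B = B_S\sqcup B_V$ with $B_S = B\cap S$ a maximal independent subset of $S$ (so $|B_S|=|P^c|$) and $B_V = B\cap V(P)$ of size $|P|$. I would then check this family of $|P|$-subsets of $V(P)$ is exactly the set of bases of $M(P,V(P))$. For the forward inclusion, given such a $B = B_S\sqcup B_V$ and any $U\subseteq B_V$, apply Theorem~\ref{thm WLD defines matroid} to $U\cup B_S\subseteq B$ to get $|\Prop(U\cup B_S)|\ge|U|+|P^c|$; since $\Prop(B_S)\subseteq P^c$ this collapses to $|\Prop(U)\cap P| = |\Prop(U)\setminus P^c|\ge|U|$, so $B_V$ is independent in $(P,V(P))$ and, having size $|P|$, is a basis. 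For the reverse inclusion, given a basis $B_V$ of $M(P,V(P))$, fix any maximal independent subset $B_S$ of $S$ (so $|B_S|=|P^c|$) and verify $B_S\sqcup B_V$ is independent in $M(W)$: for $U'\subseteq B_S\sqcup B_V$, split $U' = U_S\sqcup U_V$ along $S$ and $V(P)$ and use that $\Prop(U_S)\subseteq P^c$ and $\Prop(U_V)\cap P\subseteq P$ are disjoint to obtain $|\Prop(U')| \ge |\Prop(U_S)| + |\Prop(U_V)\cap P| \ge |U_S| + |U_V|$, the last bound using independence of $B_S$ (in $M(W)$) and of $B_V$ (in $(P,V(P))$); then $B_S\sqcup B_V$ has size $k$, hence is a basis of $M(W)$ with $|B\cap S| = |P^c|$ maximal, so $B_V$ is a basis of $M(W)/S$.

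The step I expect to be the main obstacle is the reverse inclusion in the sufficiency argument: one must show that an \emph{arbitrary} maximal independent subset of $S$ glues onto \emph{any} basis of the subdiagram to give a basis of $M(W)$. It works precisely because $\Prop(S)\subseteq P^c$ makes the budget split cleanly, and because the hypothesis $\rk(S)=|P^c|$ says $S$ absorbs a full independent set's worth of the $P^c$-rows; getting the Hall-type inequality of Theorem~\ref{thm WLD defines matroid} to go through for the glued set is exactly where all the hypotheses are used. A backup route, if the combinatorics gets unwieldy, is linear-algebraic: restricting the columns of $C(W)$ to $S$ gives a matrix whose nonzero rows lie among the $P^c$-rows, and when $\rk(S)=|P^c|$ its column span is the full coordinate subspace on those rows, so projecting the $V(P)$-columns modulo it returns the generic matrix $C(P,V(P))$; the delicate point there is instead justifying that this projection computes the matroid contraction of $M(W)$.
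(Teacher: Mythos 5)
The paper does not actually prove this lemma: it is imported wholesale by citation from \cite[Corollary 3.33]{wilsonloop}, so there is no in-paper argument to compare yours against. Taken on its own terms, your proof is correct and self-contained given the tools the paper does state. The necessity direction is the standard rank computation $\rk(M(W)/S) = \rk([n]) - \rk(S)$ together with the fact that a weakly admissible diagram with $|P|$ propagators has matroid rank $|P|$. The sufficiency direction correctly identifies the bases of $M(W)/S$ as the traces $B \cap V(P)$ of bases $B$ with $|B \cap S| = |P^c|$, and both inclusions go through because of the one structural fact you isolate: $\Prop(V(P)^c) = \Prop(F(P^c)) \subseteq P^c$ (Remark~\ref{alt F(P) rmk}), which makes $\Prop(U_S)$ and $\Prop(U_V) \cap P$ disjoint and lets the Hall-type criterion of Theorem~\ref{thm WLD defines matroid} decouple into a condition on $S$ and a condition in the subdiagram $(P,V(P))$. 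The only point worth making explicit if you write this up is that $\Prop_{(P,V(P))}(U) = \Prop_W(U) \cap P$ for $U \subseteq V(P)$, which you do state as a preliminary; with that in place the gluing step you flag as the ``main obstacle'' is exactly as routine as your inequality chain makes it look. In short: a valid proof of a fact the paper takes as a black box.
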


Putting all of this together, we obtain our first major theorem of this section:

\begin{thm} \label{exact diagrams contractions}
If $(P, V(P))$ is an exact subdiagram of $W$, the matroid of $(P,V(P))$ is the contraction of $M(W)$ by the propagator flat $F(P^c)$, i.e. \bas M\big((P, V(P))\big) = M(W)  / F(P^c) \;.\eas 
\end{thm}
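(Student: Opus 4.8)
The plan is to reduce the statement to two results already in hand: Lemma~\ref{contractsubdiaglem} (the criterion of Agarwala--Marin-Amat for when a subdiagram matroid is a contraction) and item~3 of Lemma~\ref{maxexactcomplementrank} (which computes the rank of $F(P^c)$ for exact subdiagrams). The only bookkeeping needed to glue them together is the observation that $F(P^c)$ and $V(P)^c$ are literally the same set.

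First I would record this identification: unwinding Definition~\ref{VPropdfn}, $F(P^c) = V\big((P^c)^c\big)^c = V(P)^c$, since the complement of $P^c$ inside $\cP$ is $P$. (This is the same simplification already used implicitly in the statement of Lemma~\ref{contractsubdiaglem}, where $F(P^c)$ and $V(P)^c$ appear interchangeably.) So the propagator flat by which we wish to contract is exactly the complementary vertex set $V(P)^c$.

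Next, since $(P, V(P))$ is an exact subdiagram of $W$, Lemma~\ref{maxexactcomplementrank}(3) gives $\rk\big(F(P^c)\big) = |P^c|$; equivalently, $\rk\big(V(P)^c\big) = |P^c|$. This is precisely the hypothesis appearing in Lemma~\ref{contractsubdiaglem}: the set $V(P)^c$ has rank $|P^c|$, i.e. $F(P^c)$ is a propagator flat of maximal rank. Applying that lemma, the matroid defined by the subdiagram $(P, V(P))$ equals the contraction $M(W)/V(P)^c = M(W)/F(P^c)$, which is the desired conclusion.

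I do not expect a genuine obstacle here — the real work has already been absorbed into Lemma~\ref{maxexactcomplementrank}(3), whose proof handled both the maximal case (via item~2 and Lemma~\ref{lem facts about WLD matroids}) and the reduction of a general exact subdiagram to the maximal one containing it. The only point that warrants a moment's care is confirming the set equality $F(P^c)=V(P)^c$ so that the rank statement from one lemma matches the hypothesis of the other verbatim; once that is noted, the theorem follows by direct citation.
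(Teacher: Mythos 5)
Your proposal is correct and follows exactly the paper's own proof: it cites Lemma~\ref{maxexactcomplementrank}(3) to verify the rank hypothesis $\rk(V(P)^c) = |P^c|$ and then applies Lemma~\ref{contractsubdiaglem}, with the identification $F(P^c) = V(P)^c$ being the only bookkeeping needed. The paper's proof is just a terser statement of the same two-step argument.
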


\begin{proof}
This follows from Lemma \ref{contractsubdiaglem}. Lemma \ref{maxexactcomplementrank} shows that the supports of exact subdiagrams satisfy the conditions of this lemma.
\end{proof}

\begin{rmk} \label{remark exact dual restiction} For readers familiar with duals and complements of matroids, we reinterpret Theorem~\ref{exact diagrams contractions} in these terms. If $M^*$ denotes the dual of a matroid, then by standard matroid theory \cite{OxleyMatroidBook} the dual of a contraction of a matroid is the same as the restriction of the dual matroid by the complement: \bas M / S = M^*|_{S^c} \;.\eas Then Theorem \ref{exact diagrams contractions} implies that, for $(P, V(P))$ an exact subdiagram of $W$, we have \bas M\big((P, V(P))\big) = M(W)  / F(P^c) = M(W)^*|_{V(P)} \;.\eas
\end{rmk}

We now show that matroids coming from exact subdiagrams have an especially nice structure, namely, they are uniform. Recall from Section \ref{sec matroid background} that a uniform matroid of rank $r$ is a matroid in which all sets of size $ \leq r$ are independent.

\begin{thm} \label{exactuniformthm}
Let $W':= (P, V(P))$ be a subdiagram of a weakly admissible Wilson loop diagram $W= (\cP, [n])$. Then $W'$ is an exact subdiagram if and only if $M(W')$ is a uniform matroid of rank $|P|$.
\end{thm}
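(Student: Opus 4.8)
The plan is to reduce everything to the combinatorial description of independent sets in Theorem~\ref{thm WLD defines matroid}, together with the rank bound $\rk(V)\le|\Prop(V)|$ from Lemma~\ref{lem facts about WLD matroids}(1). Applying the latter to $V=V(P)$ shows $\rk M(W')=\rk(V(P))\le|\Prop_{W'}(V(P))|=|P|$ automatically, so $M(W')$ is the uniform matroid of rank $|P|$ precisely when every subset of $V(P)$ of size at most $|P|$ is independent; and by Theorem~\ref{thm WLD defines matroid} this in turn is equivalent to: every $U\subseteq V(P)$ with $|U|\le|P|$ satisfies $|\Prop_{W'}(U)|\ge|U|$. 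Thus the theorem becomes the assertion that $(P,V(P))$ is exact if and only if this ``local density'' bound holds for all small vertex sets. (Throughout I assume $P\ne\emptyset$; the case $P=\emptyset$ is degenerate and the statement is intended for $P\ne\emptyset$.)

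For the forward implication, I would suppose $(P,V(P))$ is exact, so $|V(P)|=|P|+3$, and take $U\subseteq V(P)$ with $|U|\le|P|$. If $|\Prop_{W'}(U)|<|U|$, then $Q:=P\setminus\Prop_{W'}(U)$ is nonempty and every propagator of $Q$ is supported inside $V(P)\setminus U$, so $|V(Q)|\le|V(P)|-|U|=|P|+3-|U|$; comparing with the weak-admissibility bound $|V(Q)|\ge|Q|+3=|P|-|\Prop_{W'}(U)|+3$ yields $|U|\le|\Prop_{W'}(U)|$, a contradiction. This is essentially the counting argument already used at the end of the proof of Lemma~\ref{maxexactcomplementrank}, so it should go through smoothly.

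For the converse, suppose $M(W')$ is uniform of rank $|P|$ but $(P,V(P))$ is not exact. Since $W'$ is weakly admissible and $P\ne\emptyset$, condition (2) of Definition~\ref{admisdfn} applied to the full propagator set gives $|V(P)|>|P|+3$, hence $|V(P)|\ge|P|+4$. Now fix any propagator $p\in P$ and put $U:=V(P)\setminus V(p)$. Because $|V(p)|\le4$ we get $|U|\ge|V(P)|-4\ge|P|$, so in the uniform matroid $U$ has rank $|P|$. But $p$ is supported entirely on $V(p)$, so $p\notin\Prop_{W'}(U)$, giving $\rk(U)\le|\Prop_{W'}(U)|\le|P|-1$ by Lemma~\ref{lem facts about WLD matroids}(1) --- a contradiction. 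Hence $|V(P)|=|P|+3$, i.e. $(P,V(P))$ is exact.

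The argument is short, so there is no single hard step; the part that most needs care is setting up the reformulation in the first paragraph (in particular extracting $\rk M(W')=|P|$ for free from the density bound), and noticing in the converse that ``not exact'' upgrades $W'$ from weakly admissible to admissible, which is exactly the slack $|V(P)|\ge|P|+4$ that makes $V(P)\setminus V(p)$ large enough to force the rank contradiction. The $P=\emptyset$ boundary case should be flagged explicitly.
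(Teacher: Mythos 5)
Your proof is correct and follows essentially the same route as the paper's: the forward direction is the same counting argument against the weak-admissibility bound $|V(Q)|\ge|Q|+3$ applied to the propagators missing from an over-dense set (the paper phrases this via circuits of rank $m<|P|$ rather than arbitrary sets $U$ with $|\Prop_{W'}(U)|<|U|$), and the converse is the paper's argument almost verbatim, except that you replace its appeal to the all-zero row of the matrix $C(W')$ with the equivalent matroidal bound $\rk(U)\le|\Prop_{W'}(U)|$. Your explicit flagging of the degenerate case $P=\emptyset$ is a small gain in precision over the paper, which implicitly assumes $P\neq\emptyset$ when it picks $p\in P$.
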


\begin{proof}
It follows directly from the definitions that a matroid of rank $r$ is uniform if and only if all circuits have rank $r$. We therefore focus on the circuits of $M(W')$.

We prove the following claim: $W'$ is exact if and only if $V(P)$ contains no circuits $C$ with $\rk (C)< |P|$ in $(P, V(P))$. Since $\rk(M(W'))$ is bounded above by $|P|$, this will give that  $M(W')$ is uniform.

First suppose $C \subseteq V(P)$ is a circuit of rank $m < |P|$. Observe that $\Prop_{W'}(C)\subseteq P$ by definition, where the subscript to $\Prop$ specifies the diagram we are working in. By Lemma \ref{lem facts about WLD matroids} we know that ${|\Prop_{W'}(C)| = m}$. Thus the set $P \setminus \Prop_{W'}(C)$ must be nonempty, and we can consider the subdiagram $W'':= (P\setminus \Prop_{W'}(C),V(P\setminus\Prop_{W'}(C))$. By the density condition on subdiagrams of admissible diagrams, we have
\[|V(P\setminus\Prop_{W'}(C))| \geq |P\setminus\Prop_{W'}(C)| + 3.\]
It is easy to verify that $V(P\setminus\Prop_{W'}(C)) \subseteq V(P)\setminus C$. Since $C \subseteq V(P)$ and $\Prop_{W'}(C) \subseteq P$, it follows from the previous inequality that
\[|V(P)| - (m+1) \geq |V(P\setminus\Prop_{W'}(C))| \geq |P| - m + 3.\]
Simplifying, we obtain $|V(P)| \geq |P| + 4$, i.e. $(P,V(P))$ is not an exact diagram.

Conversely, suppose that $W'$ is not exact and for a contradiction suppose also that $M(W')$ is uniform of rank $|P|$.  Take $p \in P$.  Then $|V(P)\setminus V(p)| = |V(P)| - 4 \geq |P|$ by non-exactness.  By uniformity there must be an independent set of size $|P|$ in $V(P)\setminus V(p)$.  However, this is impossible because the submatrix of $C(W')$ corresponding to this independent set has $|P|$ rows but the one corresponding to $p$ is all $0$, so it cannot be full rank.
\end{proof}

To close this section, we make a few observations about the geometry of matroids defined by exact diagrams.

In \cite[Theorem 3.39]{wilsonloop}, the authors show that each weakly admissible Wilson loop diagram corresponds to a positroid. That is, each diagram corresponds to a matroid that can be represented by elements of the positive Grassmannian $\Gr(|\cP|, n)$. Any positroid of rank $k$ on $n$ elements defines a subspace of the positive Grassmannian $\Gr(k, n)$, namely the points which represent it \cite{Postnikov}. These subspaces give a CW structure on $\Gr(k,n)$, with each positroid defining a cell.  In other words, there is a map from weakly admissible Wilson loop diagrams $W$ to matroids $M(W)$ to positroid cells in the CW complex on $\Gr(|\cP|, n)$.

With this mapping in mind, we have the following corollary:

\begin{cor}
Let $W' = (P, V(P))$ be an exact subdiagram of $W$. The matroid associated to this subdiagram corresponds to the top dimensional cell in $\Gr(|P|, |V(P)|)$.
\end{cor}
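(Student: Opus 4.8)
The plan is to read this off directly from Theorem~\ref{exactuniformthm} together with the standard fact that the uniform matroid indexes the top cell of a positive Grassmannian. By Theorem~\ref{exactuniformthm}, since $W' = (P, V(P))$ is exact, the matroid $M(W')$ is the uniform matroid of rank $|P|$ on the ground set $V(P)$, and by exactness $|V(P)| = |P| + 3$. Since $W'$ is a weakly admissible subdiagram, $M(W')$ is a positroid (as recalled just before the statement, following \cite[Theorem 3.39]{wilsonloop}), so it corresponds to a cell in the CW decomposition of $\Gr(|P|, |V(P)|)$ from \cite{Postnikov}.

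It then remains to observe that the cell of the uniform matroid $U_{k,n}$ of rank $k$ on $n$ elements is the top-dimensional one. First I would recall that $U_{k,n}$ has every $k$-subset of its ground set as a basis, hence is realized by a totally positive $k \times n$ matrix; its positroid cell is precisely the set of points of $\Gr(k,n)$ all of whose Pl\"ucker coordinates are strictly positive. This is the unique open dense cell in Postnikov's stratification, of maximal dimension $k(n-k)$. Applying this with $k = |P|$ and $n = |V(P)| = |P| + 3$ identifies $M(W')$ with the top cell of $\Gr(|P|, |V(P)|)$ (which incidentally has dimension $3|P|$, matching the dimension count mentioned in the introduction).

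There is no real obstacle here: the whole content of the corollary is already contained in Theorem~\ref{exactuniformthm}, and what is left is the translation, via the map (weakly admissible diagram) $\mapsto$ (matroid) $\mapsto$ (positroid cell) set up just before the statement, of the uniformity of $M(W')$ into the claim that its cell is top-dimensional. The single point worth a sentence is that a positroid and its underlying matroid determine the same cell, so that passing from ``$M(W')$ is the uniform matroid'' to ``the cell of $M(W')$ is the cell of $U_{|P|, |V(P)|}$'' is legitimate; this is immediate from the definition of the stratification.
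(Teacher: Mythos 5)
Your proposal is correct and follows essentially the same route as the paper: invoke Theorem~\ref{exactuniformthm} to get that $M(W')$ is the uniform matroid of rank $|P|$, note it is a positroid realized with all maximal minors nonzero (hence realizable with all strictly positive), and identify its cell with the unique top-dimensional cell of $\Gr(|P|,|V(P)|)$ cut out by strictly positive Pl\"ucker coordinates. The only cosmetic difference is that you spell out the dimension count $k(n-k)=3|P|$ and the matroid-to-cell translation a bit more explicitly than the paper does.
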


\begin{proof}
Since $W'$ is weakly admissible, by Theorem \ref{thm WLD defines matroid} the associated matroid $M(W')$ can be realized by an element of the totally nonnegative Grassmannian $\Gr(|P|,|V(P)|)$. However, since $W'$ is an exact subdiagram its corresponding matroid $M(W')$ is uniform, so all $|P| \times |P|$ minors of $C(W')$ must be non-zero and $M(W')$ can be represented by a matrix with all maximal minors strictly positive. Since the unique top dimensional cell of $\Gr(|P|, |V(P)|)$ is defined by precisely those points in $\Gr(|P|, |V(P)|)$ with the property that all Pl\"ucker coordinates are strictly greater than $0$, this completes the proof.
\end{proof}

\subsection{Matroids and equivalent diagrams \label{sec: matroids and equivalence}}

Now we are ready to prove the main result of this section, namely that two Wilson loop diagrams define the same matroid (positroid) if and only if they are equivalent (Theorem \ref{same matroid iff equiv}). We also give a formula for the number of Wilson loop diagrams in a given equivalence class (Corollary \ref{number of equiv diagrams}), completing the characterization of the correspondence between Wilson loop diagrams and positroids started in \cite{wilsonloop}.

\begin{thm}\label{same matroid iff equiv}
Let $W= (\cP, [n])$ and $W'= (\cP', [n])$ be two weakly admissible Wilson loop diagrams. They define the same matroid (positroid) if and only if $W \sim W'$.
\end{thm}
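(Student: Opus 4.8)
The plan is to prove both directions separately, with the forward direction (equivalent $\Rightarrow$ same matroid) being essentially already in the literature and the reverse direction (same matroid $\Rightarrow$ equivalent) being the substantive new content. For the forward direction, I would invoke the fact from \cite{wilsonloop} that equivalent Wilson loop diagrams define the same matroid; alternatively, one can give a short self-contained argument by reducing (via the transitive closure) to the case where $W$ and $W'$ differ by a single pair of exact subdiagrams $(P,V(P))$ and $(P',V(P'))$ with $|P|=|P'|$, $V(P)=V(P')$, and identical complements. By Theorem~\ref{exact diagrams contractions} both exact subdiagrams have the same matroid (the contraction $M(W)/F(P^c)$ depends only on $M(W)$ restricted to the complement and the shared vertex set $V(P)$, and by Theorem~\ref{exactuniformthm} this matroid is the uniform matroid $U_{|P|,|V(P)|}$ regardless of the internal configuration). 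Since $M(W)$ is determined by its restriction to the complementary diagram together with this uniform piece — made precise using the flat/circuit structure of Lemma~\ref{lem facts about WLD matroids} and Lemma~\ref{lem decompose flat} — we get $M(W)=M(W')$.

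For the reverse direction, suppose $M(W)=M(W')$. The key is Corollary~\ref{uniqueproppartitioncor}: each diagram has a canonical decomposition into maximal exact subdiagrams, corresponding via $\tau$ to the maximal triangulated pieces of $\tau(W)$ and $\tau(W')$ (Corollary~\ref{equialentretriangulation} tells us equivalence is exactly ``same maximal triangulated pieces up to retriangulation''). So it suffices to show that the matroid $M(W)$ determines the collection of vertex sets of the maximal exact subdiagrams, together with the full data of the complementary (non-exact-part) configuration. The engine for recovering the exact pieces is Lemma~\ref{maxexactcomplementrank}(2): the complement $F(P^c)$ of a maximal exact subdiagram is a \emph{cyclic flat} of $M(W)$, and by Lemma~\ref{lem facts about WLD matroids}(5) cyclic flats are propagator flats. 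Conversely I would argue that the cyclic flats $F$ of a specific type (those for which $F^c$ has the right size, namely $|F^c| = \rk(F^c)+3$ where $\rk(F^c)=|\Prop(F^c)|$, matching equations~\eqref{eq F not independent} and the exactness count) are exactly the complements of maximal exact subdiagrams. Hence $M(W)$ reads off the set $\{V(P_i)\}$ of vertex supports of the maximal exact pieces. Then retriangulation ambiguity within each piece is invisible to the matroid (each piece is $U_{|P_i|,|V(P_i)|}$ by Theorem~\ref{exactuniformthm}), and the remaining structure — which propagators lie in which piece and how the pieces fit together cyclically — is recovered from the lattice of flats and the circuit structure of $M(W)$ outside the exact pieces, where (by Corollary~\ref{uniqueproppartitioncor}, the complement of the exact decomposition is just the union of trivial exact pieces, i.e.\ single propagators connecting non-adjacent edges) the propagator positions are rigid. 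Combining, $W$ and $W'$ have the same maximal triangulated pieces up to vertex set, so $\tau(W)$ and $\tau(W')$ differ by retriangulations, and Corollary~\ref{equialentretriangulation} gives $W\sim W'$.

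The main obstacle I anticipate is the \emph{converse identification} step: showing that \emph{every} cyclic flat of the appropriate numerical type genuinely arises as the complement of a maximal exact subdiagram, and — more delicately — that the non-exact part of the diagram (the ``complementary propagators'') is uniquely reconstructible from $M(W)$. Lemma~\ref{maxexactcomplementrank} only gives one direction (maximal exact $\Rightarrow$ cyclic flat with that rank); one must rule out ``spurious'' cyclic flats and show that two inequivalent diagrams with the same maximal-exact vertex supports but differently placed complementary propagators would have to differ as matroids. I expect this requires carefully tracking the join/meet operations in the lattice of cyclic flats and using the noncrossing condition (admissibility item 3) to pin down propagator positions, since a single propagator's two supporting edges are determined by which rank-jump it induces among the relevant flats. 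The bookkeeping here — translating between flats of $M(W)$, propagator flats $F(P)$, and triangulated subpieces of $\tau(W)$ — is where the real work lies, though all the needed tools have been assembled in the preceding subsections.
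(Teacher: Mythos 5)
Your forward direction is essentially the paper's: reduce to diagrams differing in a single exact piece, note via Theorem~\ref{exactuniformthm} that the replaced piece contributes a uniform matroid, and conclude that the bases agree. The paper makes the final step concrete by writing $[n]=F(P)\sqcup V(R)$ and showing every basis is of the form $B\sqcup U$ with $B$ a maximal independent subset of the flat $F(P)$ and $U$ any $|R|$-subset of $V(R)$; your sketch gestures at this (``made precise using the flat/circuit structure'') without carrying it out, but the route is the same.

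The reverse direction contains a genuine gap, and you have located it yourself: you propose to characterize intrinsically which cyclic flats of $M(W)$ are complements of maximal exact subdiagrams, and then to argue separately that the remaining propagators are rigidly reconstructible. Neither half is done, and the numerical test you suggest is wrong as stated: for $F=F(P^c)=V(P)^c$ one has $\Prop(F^c)=\Prop(V(P))$, which typically contains propagators only partially supported on $V(P)$, so $\rk(F^c)$ can strictly exceed $|P|$ and the condition $|F^c|=\rk(F^c)+3$ would reject genuine maximal exact pieces (the correct count is $|F^c|=(|\cP|-\rk(F))+3$, using $\rk(F)=|P^c|$ from Lemma~\ref{maxexactcomplementrank}). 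The paper sidesteps the characterization problem entirely: given a maximal exact $(P,V(P))$ in $W$, the cyclic flat $F=F(P^c)$ is also a cyclic flat of $M=M(W')$, so Lemma~\ref{lem decompose flat} applied \emph{inside $W'$} gives $F=F(Q^c)$ with $Q^c=\Prop_{W'}(F)$ and hence $V(P)=V(Q)$; the rank count $\rk(F)=|P^c|=|Q^c|$ (via Lemma~\ref{lem facts about WLD matroids}, item 2) then forces $|V(Q)|=|Q|+3$, so $(Q,V(Q))$ is exact in $W'$, and a short back-and-forth maximality argument (pass to the maximal exact piece of $W'$ containing $Q$, run the same construction back in $W$, and invoke maximality of $(P,V(P))$) shows it is maximal. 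This also dissolves your worry about the complementary propagators: by Corollary~\ref{uniqueproppartitioncor} every propagator lies in some maximal exact piece (trivial ones included), and a single propagator is determined by its four-element vertex support, so matching the vertex supports and sizes of all maximal exact pieces already yields $W\sim W'$ via Corollary~\ref{equialentretriangulation}. Without this (or an equivalent) argument the converse is not established.
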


\begin{proof}One direction has been proved in \cite[Theorem 1.18]{wilsonloop}, but we give a different proof here to be consistent with the method of this paper. 

For the if direction, it suffices to prove the result for two equivalent diagrams that differ in exactly one exact subdiagram. Let $W$ and $W'$ be equivalent Wilson loop diagrams which can be decomposed as
\[W = (P \sqcup R, [n]) \qquad \textrm{and} \qquad W' = (P \sqcup R', [n]),\]
where $(R,V(R))$ and $(R',V(R'))$ are different maximal exact subdiagrams of $W$ and $W'$ respectively, and $V(R) = V(R')$. 

By Theorem \ref{exactuniformthm}, $M(R, V(R))$ and $M(R',V(R'))$ are both uniform matroids of rank $|R| = |R'|$. In particular, this means that any subset of $V(R)$ is independent in both $M(W)$ and $M(W')$.

Now consider the propagator flat $F(P)$, which is a cyclic flat of rank $|P|$ in both $M(W)$ and $M(W')$ by Lemma \ref{maxexactcomplementrank}. Let $B \subseteq F(P)$ be an independent set of rank $|P|$. Adding any element of $V(R)$ to $B$ increases its rank in both matroids since $F(P)$ is a flat and $F(P) \cap V(R) = \emptyset$, so we can lift any independent set in $F(P)$ to an independent set of both $M(W)$ and $M(W')$ by adjoining any independent set in $V(R)$. In particular, $B\sqcup U$ is a basis in both $M(W)$ and $M(W')$ for any $U \subseteq V(R)$ with $|U| = |R|$.

Finally, note that any basis of $M(W)$ or $M(W')$ must be of this form, since $[n] = F(P) \sqcup V(R)$ and $\rk F(P) + \rk V(R) = |P| + |R| = n$. Thus both matroids have the same set of bases, proving that they are the same.

For the converse, suppose that $M(W) = M(W') = M$. This immediately implies that $|\cP| = |\cP'|$, so to demonstrate the equivalence of $W$ and $W'$ it suffices to show that for any maximal exact subdiagram $(P,V(P))$ in $W$, there is a corresponding maximal exact subdiagram $(P',V(P'))$ in $W'$ with $V(P) = V(P')$.

So let $(P,V(P))$ be a maximal exact diagram in $W$. By Lemma~\ref{maxexactcomplementrank}, this implies that $F:=F(P^c)$ is a cyclic flat in $M$. Viewing $M$ as the matroid of $W'$, Lemma~\ref{lem decompose flat} implies that $F = F(\Prop_{W'}(F))$ (recall that ``cyclic'' simply means ``a union of circuits'', so the union of all circuits in $F$ is $F$ itself). Define $Q^c = \Prop_{W'}(F)$, so that $F(P^c) = F = F(Q^c)$, and therefore $V(P) = V(Q)$.

By Lemma~\ref{maxexactcomplementrank} we know that $\rk(F) = |P^c|$. However, by Lemma~\ref{lem facts about WLD matroids} we also know that the rank of a cycle is equal to the number of propagators supported on that cycle, so $\rk(F) = |\Prop_{W'}(F)| = |Q^c|$ as well. Since $(P,V(P))$ is exact and we have shown that $V(P) = V(Q)$ and $|P| = |Q|$, it follows that $|V(Q)| = |Q| + 3$, i.e.  $(Q,V(Q))$ is exact in $W'$.

However, $(Q,V(Q))$ is not a priori {\em maximal} exact, so let $(P',V(P'))$ be the maximal exact subdiagram in $W'$ that contains $(Q,V(Q))$ as a subdiagram (possibly trivially). Applying the argument above to $(P',V(P'))$, we obtain an exact subdiagram $(R,V(R))$ in $W$ with $V(R) = V(P')$ and $|R| = |P'|$. Now we have
\[V(P) = V(Q) \subseteq V(P') = V(R),\]
which implies that $(P,V(P)) = (R,V(R))$ by the maximality of $(P,V(P))$, and hence $(Q,V(Q)) = (P',V(P'))$ is maximal exact in $W'$.
\end{proof}

Since there is a unique way to decompose $W$ into maximal exact subdiagrams, it is logical to ask how many diagrams there are in an equivalence class. It is a classical fact (see for instance Chapter 6 of \cite{Stanley}) that the number of triangulations of a convex $n$-gon is the $n-2$ Catalan number, namely $\frac{1}{n-1}\binom{2(n-2)}{n-2}$.  Thus we can count the number of equivalent diagrams.

\begin{cor}\label{number of equiv diagrams}
  Let $W$ be a weakly admissible Wilson loop diagram where the sizes of the supports of the nontrivial maximal exact subdiagrams are $n_1, n_2, \ldots, n_j$.  Then the number of weakly admissible Wilson loop diagrams equivalent to $W$ (including $W$ itself) is
  \[
  \prod_{i=1}^{j} \frac{1}{n_i-1}\binom{2(n_i-2)}{n_i-2} \;.
  \]
\end{cor}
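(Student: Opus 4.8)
The plan is to reduce the counting of weakly admissible diagrams equivalent to $W$ to independent counting problems inside each maximal exact subdiagram, and then apply the classical Catalan formula for triangulations of a convex polygon. The key structural input is the $\tau$ bijection from weakly admissible Wilson loop diagrams to polygon dissections, together with Corollary~\ref{equialentretriangulation}, which says that $W \sim W'$ if and only if $\tau(W)$ and $\tau(W')$ differ by a retriangulation. So counting the equivalence class of $W$ is the same as counting polygon dissections of the $n$-gon that differ from $\tau(W)$ by a retriangulation.

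First I would invoke Corollary~\ref{uniqueproppartitioncor} and Lemma~\ref{decompositionlem}: the polygon dissection $\tau(W)$ has a unique maximal decomposition into maximal triangulated pieces and outer polygon edges, and the maximal triangulated pieces correspond to the nontrivial maximal exact subdiagrams of $W$ (together with the trivial ones, which are single diagonals and contribute nothing to the count since a ``triangulation'' of a region bounded by a single diagonal is unique). A retriangulation, by Definition~\ref{retriangulation}, keeps the vertex set of each maximal triangulated piece fixed and only re-triangulates the region it spans. Conversely, any independent choice of a triangulation of each such region yields a valid polygon dissection whose maximal decomposition has the same pieces, hence is a retriangulation of $\tau(W)$; one needs to note here that re-triangulating a region cannot accidentally merge two pieces or split one, since the pieces meet only at single vertices (Corollary~\ref{maxtriangdisjointcor}) and the outer edges of the polygon are untouched. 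Therefore the number of diagrams equivalent to $W$ equals the product, over the maximal triangulated pieces, of the number of triangulations of the region each piece spans.

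Next I would observe that a maximal triangulated piece whose corresponding exact subdiagram has support of size $n_i$ spans a region that is combinatorially a convex $n_i$-gon: the piece has some number $m$ of vertices on the polygon of $\tau(W)$ (these are edges of $W$), and these $m$ vertices cut the outer boundary into $j$ arcs, with $m+j = n_i$ by the vertex-count bookkeeping in the proof of Lemma~\ref{lem triang to exact}; gluing in the $j$ ``chord'' boundary edges that close up the region produces a convex polygon on $n_i$ sides whose triangulations are exactly the internal configurations of the piece. By the classical fact (Chapter~6 of \cite{Stanley}) the number of triangulations of a convex $n_i$-gon is the Catalan number $\frac{1}{n_i-1}\binom{2(n_i-2)}{n_i-2}$. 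Taking the product over $i = 1, \ldots, j$ gives the claimed formula, with trivial maximal exact subdiagrams contributing a factor of $1$ (consistent with $n_i = 3$ giving Catalan number $1$, though trivial pieces may have $n_i$ smaller and are simply omitted).

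The main obstacle I anticipate is making precise and airtight the claim that ``retriangulation of $\tau(W)$'' is exactly ``independently re-triangulate each region spanned by a maximal triangulated piece'' — in particular, verifying that after re-triangulating one needs to re-derive that the new dissection is still in the image of $\tau$ (so that it genuinely corresponds to a weakly admissible Wilson loop diagram, using the inverse of $\tau$ described in the remark after Lemma~\ref{tausimpleplanarlem}), and that the maximal decomposition of the new dissection has the same pieces as the old one, so that distinct choices of triangulations give genuinely distinct and non-overlapping equivalence-class members. Both points follow from the uniqueness of the maximal decomposition (Lemma~\ref{decompositionlem}) and the edge-disjointness of maximal triangulated pieces (Corollary~\ref{maxtriangdisjointcor}), but spelling out that re-triangulating a $3$-connected-at-vertices configuration cannot create or destroy the splitting structure of the dual tree $T(W)$ is the step that needs the most care.
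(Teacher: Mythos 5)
Your overall strategy---reduce to independent retriangulations of the maximal triangulated pieces via Corollary~\ref{uniqueproppartitioncor} and Corollary~\ref{equialentretriangulation}, then multiply Catalan numbers---is exactly the route the paper takes (the paper's own justification is essentially just the two sentences preceding the corollary, so your elaboration of the ``independence of the pieces'' step is welcome and correct). However, there is a genuine error at the one step that actually produces the numerical formula: you claim that the region spanned by a maximal triangulated piece whose exact subdiagram has vertex support of size $n_i$ closes up into a convex polygon on $n_i = m+j$ sides. It does not. The piece has $m$ vertices; its $j$ arcs on the outer polygon contribute only $m-j$ boundary edges (an arc with $a$ vertices has $a-1$ edges), and closing them up with the $j$ chords adds $j$ more, so the region is an $m$-gon, not an $(m+j)$-gon. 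Hence the number of retriangulations of that piece is $C_{m-2}$ with $m = n_i - j$, not $C_{n_i-2}$.

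This is not a pedantic quibble, because the two counts genuinely differ and the discrepancy is visible in the paper's own examples. In Example~\ref{eg:equivdiags} the piece supported on the vertex set $\{1,2,3,4,5\}$ (so $n_i=5$) admits exactly two configurations, $\{p,q\}$ and $\{a,b\}$; this is $C_2$, the number of triangulations of the quadrilateral on the $\tau$-vertices $\{1,2,3,4\}$, whereas $C_{5-2}=5$. Likewise the exact subdiagram of Example~\ref{eg:subtrees} has $|V(P)|=7$ but sits inside a pentagon of $\tau(W)$ and has $C_3=5$ retriangulations, not $C_5=42$. So if $n_i$ is read as $|V(P)|$ (the natural reading of ``size of the support,'' and the reading you adopt), your argument establishes a false identity; the quantity that belongs in the Catalan formula is the number of vertices of the maximal triangulated piece, i.e.\ the number of edges of $W$ with both endpoints in $V(P)$, which equals $|V(P)|$ minus the number of cyclic intervals making up $V(P)$. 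You need either to correct the polygon count (and then reconcile, or flag, the mismatch with the corollary as literally stated) or to make explicit that $n_i$ must be taken to be the size of the triangulated piece rather than of the vertex support; as written, the step ``gluing in the $j$ chords produces an $n_i$-gon'' is where the proof fails.
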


In this section we have characterized the correspondence between Wilson loop diagrams and positroids, showing that Wilson Loop diagrams define the same positroid if and only if they are equivalent. If we write \bas M: \textrm{admissible WLD} & \longrightarrow \textrm{Positroids} \\ W & \mapsto M(W) \;,\eas the image of this map is given by the distinct equivalence classes. Corollary \ref{number of equiv diagrams} enumerates the elements of each fiber of $M$.  The obvious next question would be to enumerate the number of equivalence classes. 

We know from examples (e.g. \cite[Remark 4.2]{casestudy}) that many positroid cells cannot be obtained as the matroids of admissible Wilson loop diagrams. We do not obtain a formula for the number of cells in the image of $M$ here, but in the next section we give an alternate characterization of ${\rm im}(M)$ in terms of associahedra, with the hope that this linkage with a well-studied combinatorial object will shed light on the issue in the future.

\section{Associahedra and inequivalent Wilson loop diagrams}\label{sec associahedron}

The correspondence between (weakly) admissible Wilson loop diagrams and polygon dissections shown in Section \ref{sec: polygon partitions} suggests a connection to associahedra. In order to explore this connection further, we first need some background, which is given in Section~\ref{sec polytope background}. In Section~\ref{sec associahedron results}, we show that the number of equivalence classes of Wilson loop diagrams correspond to the number of non-parallel faces of the appropriate associahedron (Theorem \ref{thm:count inequiv diagrams}). 

\subsection{Polytope background}\label{sec polytope background}

We first give a quick summary of the polytope terminology we will use in this section; the interested reader is referred to \cite{Ziegler} for more details.

A \emph{polytope} can equivalently be defined as the convex hull of a finite set of points in some $\mathbb{R}^d$, or as the intersection of finitely many closed half spaces in $\mathbb{R}^d$ with the additional property that this intersection is bounded in the sense of not containing any ray (see chapter 0 of \cite{Ziegler}).

A \emph{face} of a polytope is the intersection of the polytope with any hyperplane for which the entire polytope is in one of the closed half-spaces defined by the hyperplane.  The polytope itself is also considered a face. In the case that the polytope is of full dimension, this definition can be unified by considering linear equalities (including the trivial one) in place of hyperplanes and the corresponding linear inequalities in place of the half-spaces defined by the hyperplane (see Chapter 2 of \cite{Ziegler}).  

Two faces are \emph{parallel} if one can be mapped to the other by a translation in $\mathbb{R}^d$.  In particular parallel faces must be of the same dimension. Any two vertices are parallel to each other.

Finally, two polytopes are \emph{combinatorially equivalent} if there is a bijection between their faces which preserves inclusion of faces.  Often in combinatorics one is primarily interested in the combinatorial equivalence classes, and common polytope names (including the associahedron) refer to these equivalence classes.  A particular polytope in the class is then known as a \emph{realization}.  

As with many of the interesting and well-studied polytopes in combinatorics, the associahedron can be characterized in many different ways. Here is one definition.

\begin{dfn}\label{def:associahedron} 
Fix $n$ and consider dissections of a convex $n$-gon.  The $n-1$ {\em associahedron} can be defined in terms of the dissections of this $n$-gon: the vertices of the associahedron are the triangulations of the $n$-gon, while the $k$-dimensional faces are the dissections which are $k$ diagonals away from being a triangulation.  One face is included in another if the first dissection is a refinement of the second. 
\end{dfn}

The associahedron coming from dissections on an $n$-gon is known as the $n-1$ associahedron because this is the associahedron for which the vertices correspond to the ways to bracket a list of length $n-1$.  It is an $n-3$ dimensional polytope.

It turns out that this defines a class of combinatorially equivalent polytopes, which can be proved by giving a realization. There are many interesting realizations of the associahedron {\cite{CSZinequivalent}}. The following realization will be the one that we use.  

\begin{dfn}\label{def:secondary polytope} 
  Let $x_1, \ldots, x_n$ be the corners of a convex $n$-gon in $\mathbb{R}^2$, and let $T$ be the set of triangulations of this $n$-gon.  For each $t\in T$, define $s_t$ to be the point in $\mathbb{R}^n$ whose $i$th coordinate is the sum of the areas of all triangles of $t$ which are incident to $x_i$.  Let $A_n$ be the convex hull of the $s_t$. Then $A_n$ is a realization of the $n-1$ associahedron (\cite[Example 9.11]{Ziegler}).
\end{dfn}
Note that this construction works for any convex $n$-gon, but these realizations are all equally good for our purposes, and so we fix $x_1, \ldots, x_n$ defining a particular convex $n$-gon from now on.

Definition~\ref{def:secondary polytope} is a special case of the {\em secondary polytope} construction; see \cite[Definition 9.9]{Ziegler} for more details.  In general the $x_1, \ldots, x_n$ may be any points in any $\mathbb{R}^d$.  Triangulations in this more general context are decompositions into simplices, and the area of the triangles is replaced by the volume of the simplices.

\subsection{Non-parallel faces in the associahedron and inequivalent Wilson loop diagrams}\label{sec associahedron results}

The map $\tau$ defined in Definition \ref{WLDtriangulationdfn} assigns a dissection of an $n$-gon to a Wilson loop diagram $W = (\cP, [n])$. Recall that if $|\cP| = k$, then $\tau(W)$ is a dissection with $k$ diagonals. Therefore, the map $\tau$ associates to each weakly admissible Wilson loop diagram an $n-3 - k$ dimensional face of the associahedron $A_n$. This is illustrated for $n=5$ on the left in Figure~\ref{associahedron5gon}, and for $n=6$ (with some labels omitted for clarity) on the right.

\begin{figure}
\[\includegraphics{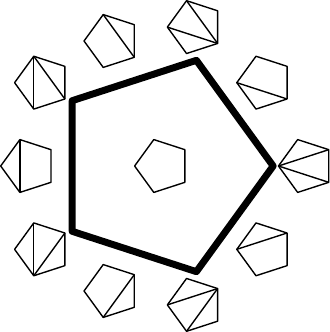} \hspace{1cm} \includegraphics{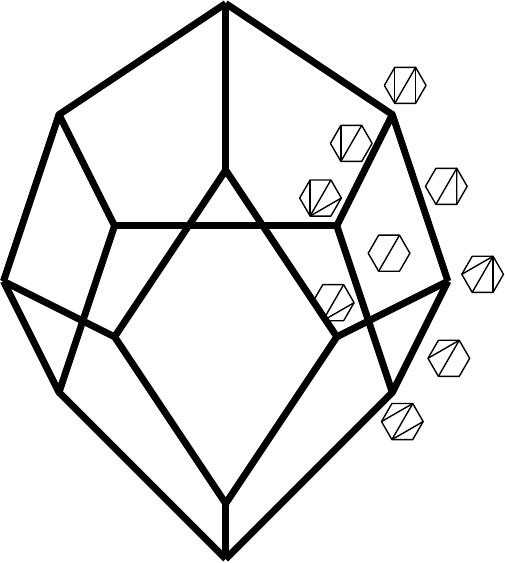}\]
\caption{Two associahedra, labelled by all and some (respectively) of their polygon dissections.}\label{associahedron5gon}
\end{figure}

Note that Wilson loop diagrams which are weakly admissible but not admissible are necessarily exact, and hence have $\tau$ graphs which are triangulations, i.e. dimension 0 faces (vertices) in $A_n$.  All of these are equivalent (as Wilson loop diagrams) in the sense that they are only retriangulations away from each other. Meanwhile, as noted above, all vertices of $A_n$ are parallel to each other.  It will be convenient in what follows to include these diagrams, but the results are easily modified to remove them if needed: Propositions~\ref{prop easy way} and \ref{prop hard way} would be identical if we replaced ``weakly admissible'' by ``admissible'', while Theorem~\ref{thm:count inequiv diagrams} could be changed to consider only admissible diagrams by subtracting 1 from the number of non-parallel faces in order to account for the missing dimension 0 faces.

\begin{prop}\label{prop easy way}
  If two weakly admissible Wilson loop diagrams are equivalent then their faces in $A_n$ are parallel.
\end{prop}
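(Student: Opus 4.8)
The plan is to reduce to a single elementary move and then exploit the secondary polytope realization $A_n$ of Definition~\ref{def:secondary polytope}. Since ``is a translate of'' is an equivalence relation on the faces of $A_n$, and $\sim$ is the transitive closure of the relation in Definition~\ref{equivdfn}, it suffices to prove the claim when $W = (\cP,[n])$ and $W' = (\cP',[n])$ are related in a single step: $\cP = P \sqcup Q$ and $\cP' = P \sqcup Q'$, with $(Q,V(Q))$ and $(Q',V(Q'))$ exact subdiagrams satisfying $|Q| = |Q'|$ and $V(Q) = V(Q')$, and with the complementary subdiagrams identical. First I would use Lemma~\ref{lem triang to exact} and the explicit correspondence preceding it to identify the diagonals coming from $Q$ inside $\tau(W)$ as a triangulated piece that triangulates a sub-polygon $\rho = \mathrm{conv}\{x_s : s \in S\}$ of the fixed convex $n$-gon, where the vertex set $S$ depends only on $V(Q)$ and $[n]$. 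Since $V(Q') = V(Q)$, the diagonals from $Q'$ triangulate the \emph{same} region $\rho$; and because the outer boundary of a triangulated piece is determined by its vertex set, $\tau(W)$ and $\tau(W')$ share every diagonal except the $|S|-3$ internal diagonals of $\rho$, which form two (possibly different) triangulations $t_1, t_2$ of $\rho$.

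The key preliminary step is to check that this difference is genuinely localized. I would argue that the only diagonals of $\tau(W)$ lying in the interior of $\rho$ are the internal diagonals of the $Q$-piece: any other such diagonal, by the noncrossing property of dissections (Lemma~\ref{tausimpleplanarlem}), would have to coincide with an internal diagonal of $t_1$ or $t_2$ — impossible, since $P$ is disjoint from both $Q$ and $Q'$ — or else cross one, which is also impossible. Consequently, writing $D_0$ for the dissection obtained from $\tau(W)$ by deleting the internal diagonals of $\rho$, the region $\rho$ is a cell of $D_0$, and both $\tau(W)$ and $\tau(W')$ are refinements of $D_0$ that agree on every cell except $\rho$, which they triangulate as $t_1$ and $t_2$ respectively.

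With the geometry arranged, I would invoke the standard description of faces of a secondary polytope (\cite[Chapter 9]{Ziegler}): the face of $A_n$ indexed by a dissection $D$ is the convex hull of the points $s_t$ over all triangulations $t$ of the $n$-gon refining $D$. A triangulation refining $\tau(W)$ necessarily restricts to $t_1$ on $\rho$ (the internal diagonals of $t_1$ already form a maximal noncrossing family inside $\rho$) and to an arbitrary triangulation of each remaining cell of $D_0$; swapping $t_1$ for $t_2$ on $\rho$ while keeping all other choices gives a bijection $\phi$ from the triangulations refining $\tau(W)$ onto those refining $\tau(W')$. Since $t$ and $\phi(t)$ differ only inside $\rho$, the quantity $(s_{\phi(t)})_i - (s_t)_i$ equals the total area of the triangles of $t_2$ incident to $x_i$ minus the total area of the triangles of $t_1$ incident to $x_i$; this is $0$ unless $i \in S$ and, crucially, is independent of $t$. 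Calling this vector $v$, we get $s_{\phi(t)} = s_t + v$ for every $t$ refining $\tau(W)$, hence $\mathrm{face}(\tau(W')) = \mathrm{face}(\tau(W)) + v$. Composing these translations along a chain of single-step equivalences yields the general statement.

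The main obstacle is the localization argument of the second paragraph: everything hinges on showing that a retriangulation really is a \emph{local} move on the polygon-dissection side — that no diagonal outside the triangulated region $\rho$, and no diagonal of $\rho$ other than its internal ones, is disturbed. This is precisely where the noncrossing property of dissections and the dictionary of Lemma~\ref{lem triang to exact} between exact subdiagrams and triangulated pieces are essential. Once the change is confined to the single cell $\rho$ of the common coarsening $D_0$, the observation that the shift $v$ is uniform over all completing triangulations is immediate, and the rest is a one-line convexity argument.
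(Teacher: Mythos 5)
Your proof is correct, and it reaches the conclusion by a genuinely different mechanism than the paper, although both arguments live in the secondary-polytope realization of Definition~\ref{def:secondary polytope}. The paper does not reduce to a single move: it invokes Corollary~\ref{equialentretriangulation} to say the two dissections differ by retriangulations, computes that the direction vector of an edge of $A_n$ depends only on the quadrilateral of the corresponding dissection (equation~\eqref{eq:direction vector for quadrilateral}), sets up a quadrilateral-preserving bijection between the edges bounding the two faces, and concludes parallelism from the matching edge directions together with the ``relative configuration'' of those edges. You instead reduce to one elementary exchange of exact subdiagrams, localize the change to a single cell $\rho$ of a common coarsening $D_0$ (your localization argument via the noncrossing property is sound, and the needed dictionary is exactly Lemma~\ref{lem triang to exact}), and then produce an explicit translation: the bijection $\phi$ on refining triangulations satisfies $s_{\phi(t)} = s_t + v$ for a fixed vector $v$ supported on the vertices of $\rho$, so the two faces are literal translates of one another. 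Your route buys a cleaner endgame --- exhibiting the translation vector directly is more airtight than the paper's closing inference from equal edge directions plus relative configuration, which is the least formal step of the published proof --- at the modest cost of needing the standard fact that the face of a secondary polytope indexed by a subdivision is the convex hull of the $s_t$ over the refining triangulations (Theorem 9.6 of \cite{Ziegler}), which you should cite explicitly rather than treating as part of Definition~\ref{def:secondary polytope}. The paper's edge-direction computation is not wasted effort in its own approach, since the same formula is reused as the workhorse of Proposition~\ref{prop hard way}.
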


\begin{proof}
  Let $W$ and $X$ be two equivalent Wilson loop diagrams.  From Corollary \ref{equialentretriangulation} we know that $\tau(W)$ and $\tau(X)$ differ from each other only by retriangulations of their triangulated pieces. 

All edges of $A_n$ are polygon dissections which are only one diagonal away from being triangulations.  Such dissections consist of triangles and exactly one quadrilateral.  Consider the edges of $A_n$ bounding the faces corresponding to $\tau(W)$ and $\tau(X)$.  These edges have their one quadrilateral within the nontriangulated parts of $\tau(W)$ or $\tau(X)$, but these nontriangulated parts are the same in $\tau(W)$ and $\tau(X)$, so we get a bijection between the edges of $A_n$ bounding the face of $\tau(W)$ and the edges of $A_n$ bounding the face of $\tau(X)$ which matches edges with the same quadrilateral.

Consider now an edge in $A_n$ whose quadrilateral is defined by the vertices of $\tau(W)$ or $\tau(X)$, which we label $x_i$, $x_j$, $x_k$, and $x_l$ in that order (see Figure \ref{triangulate} below).  Let $t_1$ and $t_2$ be the triangulations corresponding to the two ends of the edge where $t_1$ has a diagonal from $x_j$ to $x_l$ and $t_2$ has a diagonal from $x_i$ to $x_k$.  From the construction of the secondary polytope $A_n$ in \ref{def:secondary polytope}, we see that $s_{t_1}$ and $s_{t_2}$ only differ at coordinates $i$, $j$, $k$, and $l$: all of the triangles incident to any of these vertices other than the one coming from triangulating $x_i, x_j, x_k, x_l$ are the same in both $t_1$ and $t_2$.  

Let $a$ be the area of the quadrilateral and let $a(i,j,k)$ be the area of the triangle with corners $x_i$, $x_j$ and $x_k$ (and similarly for other indices), and let $b_i$ denote the sum of the areas of triangles incident to vertex $x_i$ which are common to both triangulations. Then $s_{t_1}$ on coordinates $i$, $j$, $k$, $l$ is 
\[(a(i,j,l) + b_i, a+b_j, a(k,l,j)+b_k, a+b_l),\]
and $s_{t_2}$ on these coordinates is 
\[(a + b_i, a(j,k,i)+b_j, a+b_k, a(l,i,k)+b_l).\] 
This is illustrated in Figure~\ref{triangulate}.

Thus a vector parallel to this edge is the vector which is 
\begin{equation}\label{eq:direction vector for quadrilateral}(a(i,j,l) - a, a-a(j,k,i), a(k,l,j)-a, a-a(l,i,k)) = (-a(j,k,l), a(i,k,l), -a(i,j,l), a(i,j,k))\end{equation}
on $i$, $j$, $k$, $l$, and $0$ in all other coordinates. 

\begin{figure}[t]
\[\includegraphics{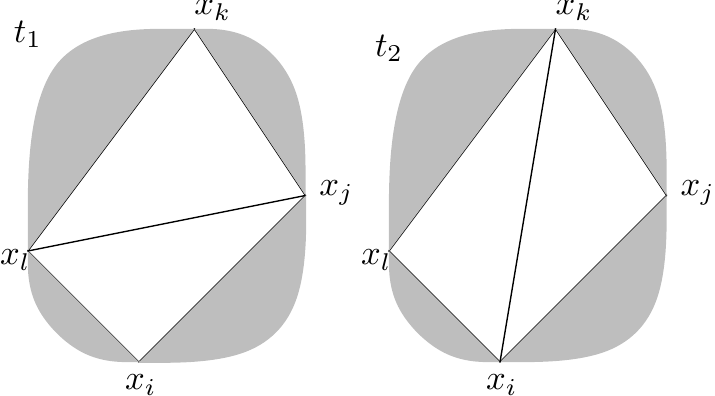} \]
\caption{The shaded regions are triangulated regions which are the same in $t_1$ and $t_2$.  Consider $s_{t_1}$ and $s_{t_2}$ at coordinate $i$.  In $s_{t_1}$ this coordinate is given by the area of those shaded triangles incident to $x_i$ (denoted by $b_i$ in the proof) along with $a(i,j,l)$.  In $s_{t_2}$ this coordinate is given by $b_i$ along with $a(i,l,k) + a(i,j,k) = a$.}\label{triangulate}
\end{figure}

Returning to $\tau(W)$ and $\tau(X)$, since the direction vector of an edge in $A_n$ depends only on its quadrilateral and the bijection between the edges bounding $\tau(W)$ and the edges bounding $\tau(X)$ preserves the quadrilateral, we have the same direction vectors bounding $\tau(W)$ as $\tau(X)$.  Furthermore, the bijection also maintains the relative configuration of the edges and so the edges corresponding to $\tau(W)$ and $\tau(X)$ are parallel.
\end{proof}

In order to prove the converse of Proposition~\ref{prop easy way}, we first require a preliminary lemma.

We also need the following notation: given a polygon dissection $D$ of an $n$-gon, let $V_{>3}(D)$ be the set of vertices of the polygon which are on at least one non-triangle face of $D$.  (Note that $D$ may have many non-triangle faces and so $V_{>3}(D)$ will be the union of the vertices of each non-triangle face.) For the purposes of this proof, we also write $[x_i,x_j]$ for the set of vertices from $x_i$ to $x_j$ including the endpoints, and $(x_i,x_j)$ for the same set but excluding the endpoints. 

\begin{lem}\label{lem good quads}
Let $D_1$ and $D_2$ be two dissections of a convex $n$-gon defined by the vertices $x_1, \ldots, x_n$.  Suppose that $V_{>3}(D_1) = V_{>3}(D_2)$, but that the $D_1$ and $D_2$ do not have the same set of non-triangle faces.  Then there are four vertices $x_i$, $x_j$, $x_k$, $x_l$ satisfying the following conditions:\begin{enumerate} \item in either $D_1$ or $D_2$ all four vertices are on the same non-triangle face. \item in the other dissection, there is a diagonal of the dissection such that exactly one of $\{x_i, x_j, x_k, x_l\}$ lies on one side of the diagonal (so the others are either end points of this diagonal or strictly on the other side). \end{enumerate}
\end{lem}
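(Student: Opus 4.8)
The plan is to produce the four vertices by a direct analysis of how the two dissections overlap. By the symmetry of the statement I may assume there is a non-triangle face $f$ of $D_1$ that does not occur as a face of $D_2$; fix such an $f$, write $V(f)$ for its vertex set, so that $|V(f)| \ge 4$ and $V(f) \subseteq V_{>3}(D_1)$, and identify $f$ with the convex region it bounds.

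First I would dispose of the case in which no diagonal of $D_2$ meets the interior of $f$. Then $f$ lies inside a single face $g$ of $D_2$, and $g \neq f$ (otherwise $f$ would be a face of $D_2$), so $g$ is a non-triangle face with a vertex outside $V(f)$. Since $f$ is not a face of $D_2$ but is not cut by $D_2$, and polygon edges are edges of every dissection, some edge $e = (x_p,x_q)$ of $f$ must be a diagonal of $D_1$ which is not an edge of $D_2$; as $e$ lies on $\partial f$ and is not crossed by $D_2$, it runs through the interior of $g$. Pick a vertex $v$ of $g$ strictly on the far side of $e$ from $f$, and pick any $w \in V(f) \setminus \{x_p,x_q\}$. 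Then $x_p,x_q,v,w$ all lie on the non-triangle face $g$ of $D_2$, while with respect to the diagonal $e$ of $D_1$ exactly the vertex $v$ lies strictly on one side (the other three being the endpoints $x_p,x_q$ of $e$ together with $w$, which is strictly on the opposite side). This is the required configuration, with $D_2$ as ``either dissection'' and $D_1$ as ``the other''.

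Next I would treat the case in which some diagonal of $D_2$ cuts into the interior of $f$. The traces of the $D_2$-diagonals inside $\overline f$ dissect the convex polygon $f$, so there is an ``ear'' cell $\rho$ cut off by a single chord $\sigma$, which is the trace of a $D_2$-diagonal $d$. If exactly one vertex $u$ of $f$ lies strictly on the $\rho$-side of $d$, then $u$ together with any three vertices of $V(f)$ not strictly on that side works: the four lie on the non-triangle face $f$ of $D_1$, and the $D_2$-diagonal $d$ separates off exactly $u$. The same device works when $d$ joins two (necessarily non-adjacent) vertices of $f$: take the two endpoints of $d$ and one vertex of $V(f)$ strictly on each side of $d$.

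The remaining case --- the ear $\rho$ contains at least two vertices of $f$ and $\sigma$ has an endpoint in the interior of an edge $e' = (x_a,x_b)$ of $f$, so that $e'$ is a diagonal of $D_1$ crossed by the $D_2$-diagonal $d$ --- is the heart of the matter, and it is here that the hypothesis $V_{>3}(D_1) = V_{>3}(D_2)$ enters. The natural candidates for the quadruple are two vertices of $f$ straddling $e'$, a vertex lying strictly beyond $e'$ (so that the $D_1$-diagonal $e'$ isolates it), and one more vertex chosen so that all four lie on a single non-triangle face of $D_2$. The obstruction is that the vertex produced beyond $e'$ as an endpoint of $d$ need not itself lie on a non-triangle face of $D_2$, nor need the $D_2$-face on the $\rho$-side of $d$ be a non-triangle face; this is where one invokes $V_{>3}(D_2) = V_{>3}(D_1)$ to locate, if necessary, a genuine non-triangle face and a suitable vertex on it, while checking that the separation by $e'$ survives the substitution. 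I expect verifying conditions (1) and (2) simultaneously in this last case to be the main difficulty; the earlier cases come down to choosing the right chord or ear cell and reading the configuration off.
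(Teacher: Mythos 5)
Your proposal correctly dispatches the easy cases, and does so by a slightly different route than the paper: you fix a non-triangle face $f$ of $D_1$ that is not a face of $D_2$ and analyse the ear decomposition of $f$ induced by the traces of $D_2$-diagonals, whereas the paper first reduces to the existence of a diagonal of one dissection crossing a non-triangle face of the other and then splits on whether that face has degree at least $5$ or is a quadrilateral. Both organisations handle the same easy configurations (a face with enough vertices that some side of the crossing chord strictly contains exactly one of four chosen vertices, or a chord whose endpoints are vertices of the face). Your Case A (no $D_2$-diagonal meets the interior of $f$) is a nice self-contained observation that the paper folds into its opening containment argument.

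However, there is a genuine gap: the ``remaining case'' you identify as the heart of the matter is left unproved, and it is exactly where all the difficulty of the lemma lives. If you push your own counting through, that case reduces to a quadrilateral face of one dissection crossed by a diagonal of the other with two vertices strictly on each side --- precisely the configuration the paper spends the second half of its proof on. The paper resolves it not by ``locating a suitable non-triangle face via $V_{>3}(D_1)=V_{>3}(D_2)$'' in one step, as you propose, but by a descent argument: assuming no good quadruple exists, it examines the face $g$ of the crossing dissection incident to the diagonal $d$ on the $x_i,x_j$ side, shows that no vertex of $g$ can lie in the cyclic interval $[x_i,x_j]$ (each such position would already yield a good configuration, sometimes after swapping the roles of $f$ and $g$), concludes that $g$ is a triangle or quadrilateral positioned so as to produce a new diagonal $d'$ crossing $f$ strictly closer to $[x_i,x_j]$, and iterates to a contradiction with the finiteness of the polygon. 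Your sketch gives no mechanism that would substitute for this iteration, and the stated worry --- that the vertex beyond $e'$ need not lie on a non-triangle face of $D_2$ --- is real: a single application of the hypothesis $V_{>3}(D_1)=V_{>3}(D_2)$ does not obviously produce a quadruple satisfying conditions (1) and (2) simultaneously. As written, the proposal establishes the lemma only in the configurations where a four-vertex face is split $1$-versus-rest or split at two of its own vertices, and the central case remains open.
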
 

The situation described in the lemma is illustrated in Figure~\ref{configs}.

\begin{figure}[t]
\[\includegraphics{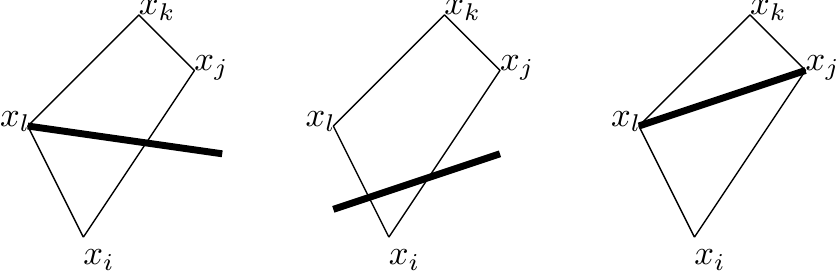} \]
\caption{Up to reflection and to rotation of $x_i$, $x_j$, $x_k$, $x_l$, the possible configurations of the quadrilateral from one dissection and diagonal from the other dissection (thick edge) illustrated together on the same vertex set.}\label{configs}
\end{figure}

\begin{proof}
  Since $V_{>3}(D_1) = V_{>3}(D_2)$, the vertices of the non-triangle faces of $D_1$ and $D_2$ are the same, but $D_1$ and $D_2$ do not have the same set of non-triangle faces.
  Suppose that no diagonal of $D_1$ crosses through a non-triangle face of $D_2$.  Then all non-triangle faces of $D_2$ must lie strictly within faces of $D_1$.  Since the non-triangle faces of the two dissections are different at least one of these containments must be proper, and so some diagonal bounding one of the non-triangle faces of $D_2$ must cross through a non-triangle face of $D_2$.  Therefore, swapping $D_1$ and $D_2$ if necessary,  
  there must be some diagonal $d$ of $D_1$ that crosses through a non-triangle face $f$ of $D_2$.
  If $f$ has degree at least 5 then it must contain four vertices in the configuration described in the statement.

Suppose no $f$ of degree at least 5 crossed by a diagonal of the other dissection occurs in either $D_1$ or $D_2$.  Then all the non-triangle faces of $D_2$ which are crossed by a diagonal of $D_1$ must be quadrilaterals and likewise for $D_1$.  This means that any non-triangle and non-quadrilateral faces must be common between $D_1$ and $D_2$.  These common faces are irrelevant for the statement as none of them can be the face containing $x_i, x_j, x_k, x_l$, and so by triangulating any common non-triangle faces, we may assume that $D_1$ and $D_2$ have only quadrilateral and triangle faces.

Assume for a contradiction that no quadrilateral as in the statement exists, and let $d$ continue to be a diagonal in $D_1$ crossing a non-triangle face $f$ of $D_2$. Then $f$ must have two vertices strictly on each side of $d$. We name the vertices on one side of $d$ as $x_i$ and $x_j$, and those on the other side as $x_k$, $x_l$ in that order; see Figure~\ref{inductive_step}.  Let $g$ be the face in $D_1$ that is incident to $d$ on the $x_i, x_j$ side of $d$.

We first show that none of the vertices defining $g$ can lie in the cyclic interval $[x_i,x_j]$. Indeed, if $g$ is a triangle then we immediately see that its third vertex cannot lie in the interval $[x_i,x_j]$, as we could obtain a configuration as described in the statement by replacing $d$ with one of the other edges of $g$.

Similarly, if $g$ is a quadrilateral then by the same argument none of its vertices can lie in the interval $(x_i, x_j)$, nor can we have exactly one of $x_i$ or $x_j$ being a vertex of $g$ for the same reason. If $x_i$ and $x_j$ are both vertices of $g$ (so $g$ is the quadrilateral defined by $x_i$, $x_j$, and the two endpoints of $d$) then we still obtain a configuration as described in the lemma statement by swapping the roles of $f$ and $g$: now the four vertices would be the ones defining $g$, and the diagonal would be one of the edges of $f$. (For instance $d$ can now be the edge connecting $x_l$ to $x_i$ or $x_j$ to $x_k$.) Thus we can conclude that none of the vertices of the face $g$ are in the interval $[x_i, x_j]$.  

There are now two possibilities for $g$: either it has one vertex in the interval $(x_j, x_k)$ and the rest in $(x_l, x_i)$ (or vice versa), or it has two vertices in $(x_j, x_k)$ and two in $(x_l, x_i)$. In the first case, $g$ must be a triangle since if $g$ was a quadrilateral then its four vertices provide the desired configuration (with the edge of $f$ that connects $x_j$ and $x_k$ providing the desired diagonal). In the second case, $g$ has four vertices and is therefore a quadrilateral.

These cases are illustrated in Figure~\ref{inductive_step}. If $g$ is a triangle then let $d'$ be the other diagonal across $f$ (leftmost two cases in the figure), while if $g$ is a quadrilateral let $d'$ be the other diagonal that crosses $f$. In all cases, note that $d'$ is strictly closer to $[x_i,x_j]$ than $d$ was, since we chose $g$ to be the face incident to $d$ on the $x_i,x_j$ side.

Now $d'$ is also a diagonal of $D_1$ that crosses the non-triangle face $f$ of $D_2$, and we can repeat the above argument with $d'$ instead of $d$. This process can be repeated indefinitely but the polygon is finite, yielding a contradiction to our assumption that no quadrilateral with the configuration described in the statement exists.
\end{proof}

\begin{figure}
\[\includegraphics{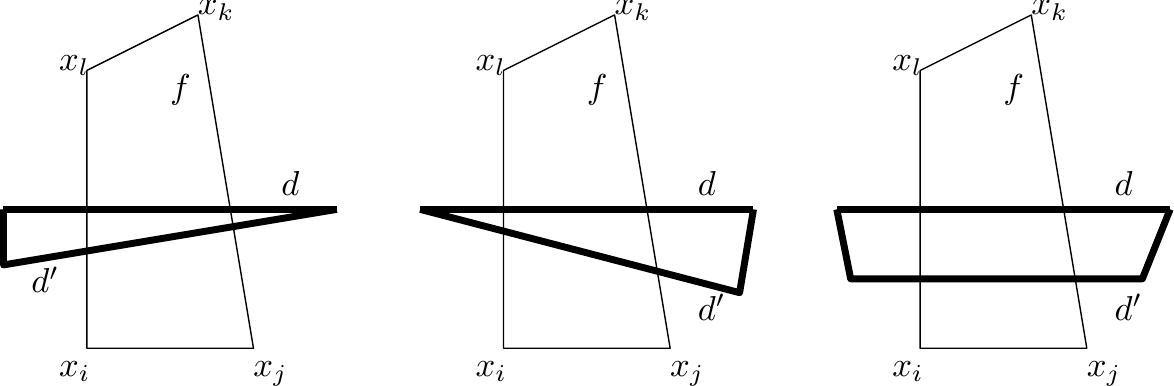} \]
\caption{The three possible configurations in the final part of the proof of Lemma~\ref{lem good quads}.  The thin lines come from the polygon dissection with $f$ as a face while the thick lines come from the polygon dissection with $d$ as an edge.}\label{inductive_step}
\end{figure}

\begin{prop}\label{prop hard way}
  If two weakly admissible Wilson loop diagrams are not equivalent then their faces in $A_n$ are not parallel.
\end{prop}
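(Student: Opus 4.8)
We prove the contrapositive: if the faces of $\tau(W)$ and $\tau(X)$ in $A_n$ are parallel, then $W\sim X$. So let $W$ and $X$ be weakly admissible, write $D_1=\tau(W)$ and $D_2=\tau(X)$, and assume the corresponding faces $F,G$ of $A_n$ are parallel. Parallel faces have equal dimension. If $\dim F=\dim G=0$ then $D_1$ and $D_2$ are triangulations of the $n$-gon, hence both have empty set of non-triangle faces, hence differ (trivially) by retriangulation, and $W\sim X$ by Corollary~\ref{equialentretriangulation}. So assume $\dim F,\dim G\ge 1$, i.e.\ both $D_i$ have at least one non-triangle face. The goal is to show $D_1$ and $D_2$ have the same set of non-triangle faces, which again gives $W\sim X$ via Corollary~\ref{equialentretriangulation}.

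Next I translate ``parallel'' into combinatorial data. For a dissection $D$ of the fixed convex $n$-gon $x_1,\ldots,x_n$, let $\mathcal{Q}(D)$ be the set of $4$-element subsets $\{i,j,k,l\}\subseteq[n]$ such that $x_i,x_j,x_k,x_l$ all lie on a common non-triangle face of $D$. I claim the set of edge directions of the face of $A_n$ associated to $D$ equals $\{\,v_Q:Q\in\mathcal{Q}(D)\,\}$, where $v_Q$ is the vector of \eqref{eq:direction vector for quadrilateral} attached to the quadrilateral $Q$ (listed in cyclic order). Indeed, an edge of $A_n$ lying in this face is a dissection $D'$ refining $D$ with exactly one non-triangular face, which must be a quadrilateral sitting inside some non-triangle face of $D$, so its vertex set lies in $\mathcal{Q}(D)$; conversely, given $Q\in\mathcal{Q}(D)$, triangulating $D$ everywhere except for the quadrilateral on $Q$ produces such a $D'$; and by the secondary-polytope computation in the proof of Proposition~\ref{prop easy way}, the direction of such an edge is $v_Q$. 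This $v_Q$ is supported exactly on $\{i,j,k,l\}$, since every triangle spanned by three distinct corners of a convex polygon has strictly positive area. As distinct $4$-sets yield $v_Q$'s with distinct supports, the collection $\mathcal{Q}(D)$ is recoverable from the set of edge directions of the face. A translation carrying $F$ onto $G$ carries the edges of $A_n$ in $F$ bijectively onto the edges of $A_n$ in $G$, matching parallel edges, so the two faces have the same set of edge directions; therefore $\mathcal{Q}(D_1)=\mathcal{Q}(D_2)$, and in particular $V_{>3}(D_1)=\bigcup_{Q\in\mathcal{Q}(D_1)}Q=\bigcup_{Q\in\mathcal{Q}(D_2)}Q=V_{>3}(D_2)$.

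Now suppose, for a contradiction, that $D_1$ and $D_2$ do not have the same set of non-triangle faces. Since $V_{>3}(D_1)=V_{>3}(D_2)$, Lemma~\ref{lem good quads} produces four vertices $x_i,x_j,x_k,x_l$ which lie on a common non-triangle face of one of the two dissections --- say $D_1$, so that $\{i,j,k,l\}\in\mathcal{Q}(D_1)$ --- while in $D_2$ some diagonal $d$ has exactly one of these four vertices strictly on one side of it. Since $d$ has only two endpoints but three of the four vertices remain, at least one of those three lies strictly on the opposite side of $d$. But every face of $D_2$ lies entirely in one of the two closed half-planes bounded by the line through $d$ (as $d$ is a non-crossing diagonal of the dissection), so no face of $D_2$ contains all four of $x_i,x_j,x_k,x_l$; that is, $\{i,j,k,l\}\notin\mathcal{Q}(D_2)$. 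This contradicts $\mathcal{Q}(D_1)=\mathcal{Q}(D_2)$. Hence $D_1$ and $D_2$ have the same non-triangle faces, so $\tau(W)$ and $\tau(X)$ differ only by retriangulation and $W\sim X$ by Corollary~\ref{equialentretriangulation}, completing the proof of the contrapositive.

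The crux is the middle paragraph: matching the local geometry of $A_n$ (edge directions coming from the secondary-polytope area formula) with the combinatorial datum $\mathcal{Q}(D)$, and in particular verifying that an edge direction remembers exactly the quadrilateral that produced it, so that parallelism of the two faces forces $\mathcal{Q}(D_1)=\mathcal{Q}(D_2)$. Once that dictionary is in place, Lemma~\ref{lem good quads} supplies the essential combinatorial obstruction and everything else is bookkeeping.
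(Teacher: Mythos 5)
Your proof is correct, and while it leans on the same two ingredients as the paper's proof---the edge-direction formula \eqref{eq:direction vector for quadrilateral} established in Proposition~\ref{prop easy way} and the combinatorial Lemma~\ref{lem good quads}---it closes the argument by a genuinely different route. The paper stays geometric: it constructs an explicit normal vector $\omega^+_d$ (coordinates given by distances to the diagonal $d$ on a chosen positive side), verifies via the shoelace formula that $\omega^+_d$ is normal to every face whose dissection contains $d$, and then uses the ``exactly one vertex strictly on the positive side'' conclusion of Lemma~\ref{lem good quads} to make the dot product of $\omega^+_d$ with an edge direction of the other face a single surviving non-zero term. You instead observe that the combinatorial datum $\mathcal{Q}(D)$ is fully recoverable from the set of edge directions of the face (since each $v_Q$ is supported exactly on $Q$ and every $Q\in\mathcal{Q}(D)$ is realized by an edge, which you check), so parallelism forces $\mathcal{Q}(D_1)=\mathcal{Q}(D_2)$, and the contradiction with Lemma~\ref{lem good quads} becomes purely combinatorial: the diagonal strictly separates two of the four vertices, so that $4$-set cannot lie on a common face of the other dissection. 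Notably, you only use the weaker consequence of the lemma that two of the four vertices are strictly separated, whereas the paper needs the sharper ``exactly one on one side'' statement precisely so that the orthogonality computation yields a single non-zero term. Your route avoids the normal vectors and the shoelace computation entirely, at the small cost of justifying that a translation between parallel faces matches their edges and directions (standard polytope theory). Both arguments are complete; yours is arguably the more self-contained.
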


\begin{proof}
  Let $W$ and $X$ be inequivalent.  If $\tau(W)$ and $\tau(X)$ do not have the same number of diagonals (hence their faces in $A_n$ do not have the same dimension) then they are not parallel by definition. Now suppose that $\tau(W)$ and $\tau(X)$ do have the same number of diagonals and hence their faces in $A_n$ have the same dimension.  Let the face associated to $\tau(W)$ in $A_n$ be $f_W$ and the face associated to $\tau(X)$ be $f_X$.

As observed in the proof of Proposition~\ref{prop easy way}, any non-zero vector in $\R^n$ parallel to an edge of $A_n$ has non-zero entries only in the coordinates corresponding to the four vertices bounding the quadrilateral of the dissection associated with the edge.  Thus if $V_{>3}(\tau(W)) \neq V_{>3}(\tau(X))$ (recall the notation defined before Lemma \ref{lem good quads}) then the span of the vectors parallel to their edges has different support and so they cannot be parallel.  So we can assume that $V_{>3}(\tau(W)) = V_{>3}(\tau(X))$.

    Since $\tau(W)$ and $\tau(X)$ do not have the same set of non-triangle faces, there must be some diagonal $d$ of $\tau(W)$ that crosses through a non-triangle face $f$ of $\tau(X)$. 
    It will be most convenient to consider faces in terms of their normal vectors. We make use of the construction discussed on page 8 of \cite{CSZinequivalent}: namely, pick one side of $d$ to be positive and define the $i$th coordinate of the vector $\omega^+_d$ to be
    \[
    \omega^+_d(i) = \begin{cases} \text{dist}(x_i, d) & \text{if $x$ is on the positive side of $d$,}\\ 0 & \text{otherwise.} \end{cases}
    \]
    We have from \cite{CSZinequivalent} (see the proof of Proposition 3.5 therein) that $\omega^+_d$ is a normal vector for any face with $d$ as a diagonal.  However we can also check this fact in a completely elementary way using the shoelace formula for the area of a triangle.  The shoelace formula says that if $(x_1, y_1), (x_2, y_2), (x_3, y_3)$ are the coordinates of the corners of a triangle in counterclockwise order then the area of the triangle is \[\textstyle\frac{1}{2}(x_1y_2 - x_1y_3 + x_2y_3 - x_2y_1 + x_3y_1 - x_3y_2).\]

    We now check that $\omega^+_d$ is a normal vector for any face associated to a dissection including $d$, which we accomplish by showing it is normal to every edge bounding the face.  To do so, choose the coordinate system for the polygon so that $d$ is along the $x$-axis.  Let $(a_i, h_i)$, $i\in \{1,2,3,4\}$ be the coordinates of the vertices of a quadrilateral in counterclockwise order and which corresponds to an edge bounding the face.  Since the dissection includes $d$, either all four points are on the negative side of $d$ and so the dot product of $\omega^+_d$ with the direction vector for the quadrilateral is immediately $0$, or, by the formula in equation \eqref{eq:direction vector for quadrilateral} and the shoelace formula, the dot product of $\omega^+_d$ with the vector associated to the quadrilateral is
   \begin{align*}
       & \hspace{1cm}\textstyle\frac{1}{2}h_1(a_2h_3-a_2h_4+a_3h_4-a_3h_2+a_4h_2-a_4h_3) \\
       &\hspace{3cm} - \textstyle\frac{1}{2} h_2(a_1h_3-a_1h_4+a_3h_4-a_3h_1+a_4h_1-a_4h_3) \\
       &\hspace{5cm}+ \textstyle\frac{1}{2}h_3(a_1h_2-a_1h_4+a_2h_4-a_2h_1+a_4h_1-a_4h_2) \\
       &\hspace{7cm}- \textstyle\frac{1}{2}h_4(a_1h_2-a_1h_3+a_2h_3-a_2h_1+a_3h_1-a_3h_2), 
    \end{align*}
    which also evaluates to 0.

    Now, $\tau(W)$ and $\tau(X)$ satisfy the hypotheses of Lemma~\ref{lem good quads}, so we can assume that $f$ and $d$ have the property that we can choose a side of $d$ to be the positive side in such a way that exactly one vertex of $f$ is strictly on the positive side.  Then $\omega^+_d$ and the direction vector of $d$ have common support exactly on that one vertex and so their dot product is non-zero.  Since the direction vector of $d$ is contained in the direction of $\tau(X)$ but is not orthogonal to $\omega^+_d$, while $\omega^+_d$ is normal to the face of $\tau(W)$, it follows that the faces associated to $\tau(W)$ and $\tau(X)$ are not parallel, as desired.
\end{proof}

Propositions~\ref{prop easy way} and \ref{prop hard way} together imply the main result of this section:
\begin{thm}\label{thm:count inequiv diagrams}
  The number of inequivalent weakly admissible Wilson loop diagrams on $n$ is the number of non-parallel faces in the associahedron $A_n$.
\end{thm}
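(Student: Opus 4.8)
The plan is to assemble Propositions~\ref{prop easy way} and~\ref{prop hard way} into the statement, using the bijection $\tau$ together with the elementary observation that parallelism is an equivalence relation on the faces of a polytope. First I would recall that, by the remark following Lemma~\ref{tausimpleplanarlem}, $\tau$ is a bijection from weakly admissible Wilson loop diagrams on $[n]$ to polygon dissections of the fixed convex $n$-gon, and that by Definition~\ref{def:associahedron} (realized as in Definition~\ref{def:secondary polytope}) polygon dissections of the $n$-gon are in bijection with the faces of $A_n$, a dissection with $k$ diagonals corresponding to a face of dimension $n-3-k$. Composing these, $\tau$ is a bijection from the set of weakly admissible Wilson loop diagrams on $[n]$ to the set of \emph{all} faces of $A_n$ (including the polytope itself, which is the image of the empty diagram).

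Next I would observe that ``being parallel'' is an equivalence relation on the faces of $A_n$: it is reflexive since the identity is a translation of $\R^n$, symmetric since translations are invertible, and transitive since a composition of translations is a translation. Hence ``the number of non-parallel faces of $A_n$'' is, by definition, the number of equivalence classes of faces of $A_n$ under this relation.

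The core of the argument is then immediate. Proposition~\ref{prop easy way} says that $W \sim W'$ implies $\tau(W)$ and $\tau(W')$ are parallel faces of $A_n$, and Proposition~\ref{prop hard way} says that $W \not\sim W'$ implies $\tau(W)$ and $\tau(W')$ are not parallel. Together these state precisely that, under the bijection $\tau$, two diagrams lie in the same $\sim$-class if and only if their images lie in the same parallelism class. Therefore $\tau$ descends to a bijection between the set of equivalence classes of weakly admissible Wilson loop diagrams on $[n]$ and the set of parallelism classes of faces of $A_n$; comparing cardinalities proves the theorem.

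At this stage there is essentially no obstacle remaining: all of the genuine difficulty lives in Propositions~\ref{prop easy way} and~\ref{prop hard way}, and in Lemma~\ref{lem good quads} which powers the ``hard way''. The only points requiring mild care are (i) phrasing the $\tau$ correspondence at the level of faces rather than merely dissections, so that the bijection is onto all of the face poset of $A_n$, and (ii) recording that we work with \emph{weakly} admissible diagrams precisely so that the vertices of $A_n$ coming from weakly-admissible-but-not-admissible (hence exact) diagrams are accounted for; restricting to admissible diagrams would delete exactly the single parallelism class consisting of all vertices, as already noted in the discussion preceding Proposition~\ref{prop easy way}.
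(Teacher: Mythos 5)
Your proposal is correct and follows the same route as the paper: the paper's proof of this theorem is simply the observation that Propositions~\ref{prop easy way} and \ref{prop hard way} together give the result, and you have spelled out exactly that assembly (via the bijectivity of $\tau$ and the fact that parallelism is an equivalence relation on faces). The extra care you take in noting that $\tau$ surjects onto the face poset of $A_n$ and in handling the weakly-admissible vertices matches the discussion the paper places just before Proposition~\ref{prop easy way}.
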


\section{Conclusion}

In summary, using matroids and polygon dissections we have counted Wilson Loop diagrams that are equivalent in the sense of having the same positroid, and characterized Wilson Loop diagrams that are inequivalent in terms of faces of the associahedron $A_n$ up to parallelism.  The second paper in this series calculates the dimension of the positroid cell associated to a Wilson loop diagram in a very concrete way by counting pluses in a Le diagram, and looks at the denominator of the integrand of the Wilson Loop diagram.  All of these results are interesting mathematics while also serving to better understand the physics of scattering amplitudes.

\bibliographystyle{abbrv}
\bibliography{WLDbib}

\end{document}